\title[The Lohe Hermitian sphere model with frustration]{Emergent dynamics of the Lohe Hermitian sphere model with frustration}
\author[Ha]{Seung-Yeal Ha}
\address[Seung-Yeal Ha]{\newline Department of Mathematical Sciences\newline Seoul National University, Seoul 08826 and \newline
Korea Institute for Advanced Study, Hoegiro 85, 02455, Seoul, Republic of Korea}
\email{syha@snu.ac.kr}
\author[Kang]{Myeongju Kang}
\address[Myeongju Kang]{\newline Department of Mathematical Sciences\newline Seoul National University, Seoul 08826, Republic of Korea}
\email{bear0117@snu.ac.kr}
\author[Park]{Hansol Park}
\address[Hansol Park]{\newline Department of Mathematical Sciences\newline Seoul National University, Seoul 08826, Republic of Korea}
\email{hansol960612@snu.ac.kr}
\newtheorem{theorem}{Theorem}[section]
\newtheorem{lemma}{Lemma}[section]
\newtheorem{proposition}{Proposition}[section]
\newtheorem{remark}{Remark}[section]
\newtheorem{definition}{Definition}[section]
\newcommand{\bbc}{\mathbb C}
\newcommand{\bbh}{\mathbb H}
\newcommand{\bbr}{\mathbb R}
\newcommand{\bbs}{\mathbb S}
\begin{document}

\date{\today}

\subjclass{82C10 82C22 35B37} \keywords{Complete aggregation, collective behavior, emergence, Lohe hermitian sphere model, practical aggregation}

{\thanks{\textbf{Acknowledgment.} The work of S.-Y. Ha was supported by National Research Foundation of Korea (NRF-2020R1A2C3A01003881), the work of M. Kang was supported by the National Research Foundation of Korea(NRF) grant funded by the Korea government(MSIP)(2016K2A9A2A13003815), and the work of H. Park was supported by Basic Science Research Program through the National Research Foundation of Korea(NRF) funded by the Ministry of Education (2019R1I1A1A01059585)}

\begin{abstract}
We study emergent dynamics of the Lohe hermitian sphere(LHS) model which can be derived from the Lohe tensor model \cite{H-P2} as a complex counterpart of the Lohe sphere(LS) model. The Lohe hermitian sphere model describes aggregate dynamics of point particles on the hermitian sphere $\bbh\bbs^d$ lying in ${\mathbb C}^{d+1}$, and the coupling terms in the LHS model consist of two coupling terms. For identical ensemble with the same free flow dynamics, we provide a sufficient framework leading to the complete aggregation in which all point particles form a giant one-point cluster asymptotically. In contrast, for non-identical ensemble, we also provide a sufficient framework for the practical aggregation. Our sufficient framework is formulated in terms of coupling strengths and initial data. We also provide several numerical examples and compare them with our analytical results. 
\end{abstract}
\maketitle 

\centerline{\date}


\section{Introduction}  \label{sec:1}
\setcounter{equation}{0} 
Emergent behaviors of complex systems are ubiquitous in nature, for example, flocking of birds, swarming of fish, flashing of fireflies and herding of sheep, etc \cite{A-B, A-B-F, B-D-P, B-H, B-T1, B-T2, B-B, H-K-P-Z, Pe, P-R, St, T-B-L, T-B, VZ, Wi1}. Several jargons such as aggregation, flocking, synchronization and herding are often used to describe such collective behaviors. Before we go into our topics, we briefly review several basic terminologies and concepts to be used in this paper. $\bbc$ denotes the complex field and let $\bbc^d$ be the cartesian product of $d$ copies of $\bbc$ for a positive integer $d$. Thus, the points of $\bbc^d$ are ordered  $d$-tuples $z = (z^1, \cdots, z^d)$ where $z_i \in \bbc$.  Algebraically, $\bbc^d$ is a $d$-dimensional vector space over $\bbc$, and topologically it is the euclidean space $\bbr^{2d}$ of real dimension $2d$, so we may call it complex euclidean space. For $z = (z^1, \cdots, z^d)$ and $w = (w^1, \cdots, w^d)$ in $\bbc^d$, we define the inner product $\langle \cdot, \cdot \rangle$ and the associated norm $\| \cdot \|$:
\[ \langle z, w \rangle := \sum_{i=1}^{d} \bar{z}^i w^i, \quad \|z\| := \langle z, z \rangle^{\frac{1}{2}}, \]
where we used physicist's notation by conjugating the first argument in $\langle \cdot, \cdot \rangle$. Let $v \in {\bbc}^{d+1}$ and $W \in {\bbc}^{(d+1) \times (d+1)}$ be a complex vector and  complex matrix, respectively. Then, we denote $i^{th}$-component and $(i,j)$-component of the real vector $v$ and real matrix $A$ by $[v]_i$ and $[A]_{ij}$, respectively. Moreover, $W^{\dagger} \in {\bbc}^{d \times d}$ and $\|W \|_F$ are the hermitian conjugate and norm of $W$:
\[ [W^{\dagger}]_{ij} = \overline{[W]_{ji}}, \quad 1 \leq i, j \leq d+1, \quad \| W \|_F := \mbox{Tr}(W^{\dagger} W)^\frac{1}{2}. \]
In this paper, we are interested in an aggregate phenomenon of a particle ensemble on the unit (hermitian) sphere $\bbh\bbs^d$ in ${\mathbb C}^{d +1}$ under the effect of frustration: 
\[ {\mathbb H}{\mathbb S}^d := \{ z \in {\mathbb C}^{d+1}:~ \|z \| = 1 \}. \] 
Here we used the adjective ``{\it hermitian}" to distinguish the unit sphere in $\bbc^{d+1}$ and the unit sphere in $\bbr^{d+1}$. 

For phase-coupled limit-cycle oscillator models such as the Kuramoto model and the Winfree model, (interaction) frustration often appear as a form of phase shift, and it generates diverse asymptotic patterns through the competitions between synchronizing enforcing terms and periodic enforcing terms. This is why the study of frustrated systems is so interesting from the viewpoint of nonlinear dynamics. For details, we refer to \cite{H-K-L, H-K-L1}. Recently, aggregation modelings for a particle ensemble on the unit sphere $\bbs^d$ in $\bbr^{d+1}$ has been extensively studied in literature \cite{C-C-H, C-H1, C-H2, C-H3, C-H5, J-C, Lo-1, Lo-2, M-T-G, T-M, Ol, Zhu} in the absence of frustration. 

To put our discussion in a proper setting, we begin with ``{\it the Lohe sphere(LS) model with frustration}" introduced in  \cite{H-K-P-R} . Let $x_j= x_j(t) \in \bbs^d$ be the position of the $j$-th Lohe particle. Then, the LS model in the presence of frustration reads as follows.
\begin{equation} \label{A-0}
\dot x_j = \Omega_j x_j + \frac{\kappa}{N} \sum_{k=1}^N \Big(Vx_k - \langle x_j, Vx_k\rangle x_j \Big),  \quad j = 1, \cdots, N,
\end{equation}
where $\Omega_j \in \bbr^{(d+1) \times (d+1)}$ is the natural frequency matrix of the $j$-th particle which is skew-symmetric $(\Omega^{t} = -\Omega)$, and $V \in \bbr^{(d+1) \times (d+1)}$ is the frustration matrix consisting of the sum of the identity matrix and skew-symmetric matrix $W$:
\begin{equation} \label{A-1}
 V = I_{d+1} + W \quad \mbox{and} \quad W^{t} = -W.
 \end{equation}
 Then, system \eqref{A-0} - \eqref{A-1} can be rewritten as follows:
 \begin{equation} \label{A-2}
\dot x_j = \Omega_j x_j + \frac{\kappa}{N} \sum_{k=1}^N \Big(x_k - \langle x_j, x_k\rangle x_j \Big) +  \frac{\kappa}{N} \sum_{k=1}^N \Big(Wx_k - \langle x_j, Wx_k\rangle x_j \Big).
\end{equation}
For $W = 0$, system \eqref{A-2} reduces to the Lohe sphere model on the complete graph, and its emergent dynamics has been studied in \cite{C-C-H, C-H1, C-H2, C-H3, C-H5, M-T-G, Ol, Zhu}. \newline

In this paper, we are interested in the following two questions:
\begin{itemize}
\item
(Q1):~What is the complex analogue of \eqref{A-0}?

\vspace{0.2cm}

\item
(Q2):~Can we rigorously verify emergent dynamics of the proposed complex counterpart?
\end{itemize}
\vspace{0.2cm}
In the absence of frustration, i.e., $V \equiv I_{d+1}$, the complex analogue of the Lohe sphere model has been proposed in \cite{H-K-P-R} and its emergent dynamics was also studied for identical ensemble. In what follows, we briefly summarize our main results on (Q1) - (Q2). \newline

First, we present the complex analog of system \eqref{A-0}.  Let $z_j = z_j(t)$ be the position of the $j$-th particle on the unit sphere in $\mathbb{HS}^d$. Then, the proposed Lohe hermitian sphere model with frustration reads as follows:
\begin{equation} \label{LHSF}
\dot{z}_j = \Omega_jz_j+\frac{\kappa_0}{N}\sum_{k=1}^N \Big(\langle z_j, z_j\rangle V_0z_k-\langle V_0z_k, z_j\rangle z_j \Big) +\frac{\kappa_1}{N}\sum_{k=1}^N \Big (\langle z_j, V_1z_k\rangle-\langle V_1z_k, z_j\rangle \Big)z_j.
\end{equation}
Here $\kappa_0$ and $\kappa_1$ are nonnegative constants, and the frustration matrices $V_0$ and $V_1$ take the same form as in \eqref{A-1}:
\begin{equation} \label{fr}
V_0 =  I_{d+1}+W_0, \quad V_1= I_{d+1}+W_1, 
\end{equation}
where $\Omega_j, W_0$ and $W_1$ are skew-hermitian matrices (see Section \ref{sec:2.1} for details):
\[ \Omega_j^{\dagger} = -\Omega_j, \quad j = 1, \cdots, N, \quad  W_0^{\dagger} = -W_0, \quad W_1^{\dagger} = -W_1. \]
Note that for real vector case, the term $\langle z_j, V_1z_k\rangle-\langle V_1z_k, z_j\rangle$ vanishes, and system \eqref{LHSF} reduces to system \eqref{A-0}. Hence, our proposed system \eqref{LHSF} can be called a complex counterpart of \eqref{A-0}. Moreover, it can be rewritten as a mean-field form:
\begin{equation} \label{A-2-1}
\dot{z}_j =\Omega_jz_j+ \kappa_0 \Big( \langle z_j, z_j\rangle V_0 z_c - \langle z_c, z_j \rangle z_j \Big) + \kappa_1 \Big( \langle z_j, V_1 z_c \rangle - \langle V_1 z_c, z_j \rangle         \Big) z_j,
\end{equation}
where $z_c = \frac{1}{N} \sum_{i=1}^{N} z_i$.

Second, we return to system \eqref{A-0}, and study emergent dynamics of the Lohe sphere model for non-identical ensemble in mean-field form:
\begin{equation} \label{A-3}
\dot{x}_j=\Omega_j x_j+\kappa_0\Big(\langle{x_j, x_j}\rangle V_0 x_c-\langle{V_0 x_c, x_j}\rangle x_j \Big), \quad x_c := \frac{1}{N} \sum_{j=1}^{N} x_j. 
\end{equation}
For the identical ensemble with $\Omega_j = \Omega$, emergent dynamics for \eqref{A-3} has been already studied in \cite{H-K-P-R} in which exponential aggregation was achieved using a position diameter as a suitable Lyapunov functional. Hence, the previous approach in aforementioned literature is global. In Section \ref{sec:3}, we revisit complete aggregation problem using a local approach. For this, we introduce an inter-particle angle $\theta_{ij}$ as follows.
\begin{equation*} \label{A-4}
\theta_{ij} :=\cos^{-1}\left(\langle x_i, x_j\rangle\right), \quad 1 \leq i, j \leq N.
\end{equation*}
For an identical ensemble, if the initial data $\Theta^{in}$ satisfies
\[
\theta_{ij}^{in}<\cot^{-1}\left(\frac{\|W\|_F}{\sqrt{2}}\right), \quad 1 \leq i, j \leq N.
\]
then there exists a positive constant $\Lambda_{ij} = \Lambda_{ij}(N, \kappa, W, \Theta^{in})$ such that 
\[
\theta_{ij}(t)\leq \theta_{ij}^{in}\exp\left(-\Lambda_{ij}t\right), \quad t \geq 0,
\]
which improves the earlier result in \cite{H-K-P-R} (see Proposition \ref{P3.1}).  In contrast, for non-identical ensemble, if the initial data $\{x_i^0\}_{i=1}^N$ satisfy
\[
\max_{i, j} \Big( \sin\theta_{ij}^{in} \Big)  <\frac{1}{2+\sqrt{2}\|W\|_F} \quad \mbox{and} \quad \max_{i,j} \theta_{ij}^{in} < \frac{\pi}{2},
\]
practical aggregation emerges asymptotically (see Theorem \ref{T3.1}):
\[
\sup_{t\geq0} \Big( \max_{i,j} \theta_{ij} \Big) \leq \frac{\pi}{2} \quad \mbox{and} \quad \lim_{\kappa\to\infty}\limsup_{t\to\infty} \max_{i, j} \Big(\sin\theta_{ij}(t) \Big)=0.
\]
Note that in \cite{H-K-P-R}, emergent dynamics for non-identical ensemble has not been studied. 

Third, we consider system \eqref{A-2-1} with $\kappa_1$:
\begin{equation*} \label{A-5}
\dot{z}_j =\Omega_jz_j+ \kappa_0 \Big( \langle z_j, z_j\rangle V_0 z_c - \langle z_c, z_j \rangle z_j \Big).
\end{equation*}
Next, we introduce real and imaginary parts of the two-point correlation function $\langle z_i, z_j \rangle$:
\begin{equation*} \label{A-6}
R_{ij} :=\mathrm{Re}(\langle z_i, z_j\rangle),\quad I_{ij} :=\mathrm{Im}(\langle z_i, z_j\rangle)\quad {\mathcal J}_{ij} :=\sqrt[4]{(1-R_{ij})^2+I_{ij}^2},
\end{equation*}
for $i, j\in\{1, 2, \cdots, N\}.$ For identical ensemble, if the coupling strength and  initial data satisfy
\[
\kappa_0 > 0 \quad \mbox{and} \quad \max_{i,j}{\mathcal J}_{ij}^{in} < \frac{2\sqrt{2}}{\sqrt{\sqrt{5} \|W_0\|_F^2 +8}+\sqrt[4]{5} \|W_0\|_F},
\]
then ${\mathcal J}_{ij}$ decays to zero exponentially fast, which illustrates the emergence of complete aggregation (see Theorem \ref{T4.1}). In contrast, for non-identical ensemble, if initial data $\{z_j^{in}\}$ satisfy
\[
\max_{i, j} \mathcal J_{ij}^{in}< \frac{2\sqrt{2}}{\sqrt{\sqrt{5} \|W_0\|_F^2 +8}+\sqrt[4]{5} \|W_0\|_F},
\]
then practical aggregation emerges, i.e., 
\[
\lim_{\kappa_0\to\infty}\limsup_{t\to\infty} \max_{i,j} {\mathcal J}_{ij}(t) =0,
\]
(See Theorem \ref{T4.2}). 

Lastly, we deal with the full dynamics \eqref{LHSF} with $\kappa_0 > 0$ and $\kappa_1 > 0$.  For identical ensemble, if coupling strengths, frustration matrix $W_1$ and initial data satisfy
\[
\kappa_0> 2 \kappa_1\geq0,  \quad W_1 \equiv 0, \quad  \max_{i,j} {\mathcal J}_{ij}^{in} < \frac{2\sqrt{2}\left( 1-\frac{2\kappa_1}{\kappa_0} \right)}{\sqrt{\sqrt{5} \|W_0\|_F^2 +8\left( 1-\frac{2\kappa_1}{\kappa_0} \right)}+\sqrt[4]{5} \|W_0\|_F}, 
\]
then, the complete aggregation emerges exponentially fast (see Theorem \ref{T5.1}).  In contrast, for non-identical ensemble, if coupling strength $\kappa_1$ is fixed and initial data satisfy 
\[
\max_{i, j} {\mathcal J}_{ij}^{in} < \frac{2\sqrt{2}}{\sqrt{\sqrt{5} \|W_0\|_F^2 +8}+\sqrt[4]{5} \|W_0\|_F},
\]
then practical aggregation emerges (see Theorem \ref{T5.2}):
\[
\lim_{\kappa_0\to\infty}\limsup_{t\to\infty}  \max_{i,j} \mathcal J_{ij}(t)=0.
\]

\vspace{0.5cm}

The rest of this paper is organized as follows. In Section \ref{sec:2}, we briefly review basic properties of the Lohe sphere and Lohe hermitian sphere models with frustrations, and recall previous results on the emergent dynamics of the aforementioned models with frustration. In Section \ref{sec:3}, we study emergent dynamics of the Lohe sphere model. In particular, our practical aggregation estimate improves earlier results. In Section \ref{sec:4}, we present emergent dynamics of the Lohe hermitian sphere model with $\kappa_1 = 0$. This is exactly complex analogue of the Lohe sphere model. The complex nature of ambient space will appear in conditions for complete and practical aggregation estimates. In Section \ref{sec:5}, we study emergent dynamics of the full dynamics \eqref{LHSF} and provide sufficient conditions leading to the complete and practical aggregations. In Section \ref{sec:6}, we provide several numerical examples and compare them with analytical results in previous sections. Finally, Section \ref{sec:7} is devoted to a brief summary of our main results and some remaining issues which were not discussed in this work. 

\section{Preliminaries} \label{sec:2}
\setcounter{equation}{0} 
In this section, we study basic properties of the Lohe hermitian sphere model with frustration, and briefly review earlier results on the emergent dynamics of Lohe type models with frustration such as ``{\it the Kuramoto model}" on the unit circle and ``{\it the Lohe sphere model}" on the unit sphere.
\subsection{The LHS model with frustration} \label{sec:2.1}
Consider the LHS model on ${\mathbb H}{\mathbb S}^d$ with frustration:~for $j = 1, \cdots, N$, 
\begin{equation} \label{NNN-1}
\dot{z}_j =\Omega_jz_j+\frac{\kappa_0}{N}\sum_{k=1}^N(\langle z_j, z_j\rangle V_0z_k-\langle V_0z_k, z_j\rangle z_j) +\frac{\kappa_1}{N}\sum_{k=1}^N(\langle z_j, V_1z_k\rangle-\langle V_1z_k, z_j\rangle )z_j.
\end{equation}
In order to rewrite system \eqref{NNN-1} into a mean-field form, we introduce a centroid $z_c := \frac{1}{N} \sum_{k=1}^{N} z_k$. Then, the LHS model \eqref{NNN-1} can be rewritten as a mean-field form:
\begin{equation} \label{B-0-1}
\dot{z}_j=\Omega z_j+\kappa_0 \Big(\langle{z_j, z_j}\rangle V_0 z_c-\langle{V_0 z_c, z_j}\rangle z_j \Big) +\kappa_1 \Big(\langle{z_j, V_1 z_c}\rangle-\langle{V_1 z_c, z_j}\rangle \Big)z_j.
\end{equation}
\begin{lemma} \label{L2.1}
Let $\{ z_j \}$ be a solution to \eqref{B-0-1}. Then $\|z_j \|$ is a conserved quantity:
\[ \frac{d}{dt} \| z_j \|  = 0, \quad \mbox{for all $t > 0$},~~j = 1, \cdots, N. \]
\end{lemma}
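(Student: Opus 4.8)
The plan is to compute $\frac{d}{dt}\|z_j\|^2 = \frac{d}{dt}\langle z_j, z_j\rangle$ directly from the mean-field form \eqref{B-0-1} and show it vanishes. First I would write $\frac{d}{dt}\langle z_j, z_j\rangle = \langle \dot z_j, z_j\rangle + \langle z_j, \dot z_j\rangle = 2\,\mathrm{Re}\,\langle z_j, \dot z_j\rangle$, using the conjugate-linearity in the first slot and linearity in the second slot of the Hermitian inner product defined in the introduction. Then substitute the right-hand side of \eqref{B-0-1}, which splits $\dot z_j$ into three groups of terms: the free-flow term $\Omega_j z_j$, the $\kappa_0$-coupling term, and the $\kappa_1$-coupling term.

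Next I would treat the three contributions to $\mathrm{Re}\,\langle z_j,\dot z_j\rangle$ separately. For the free-flow term, $\langle z_j, \Omega_j z_j\rangle$ is purely imaginary because $\Omega_j$ is skew-hermitian: indeed $\overline{\langle z_j, \Omega_j z_j\rangle} = \langle \Omega_j z_j, z_j\rangle = \langle z_j, \Omega_j^\dagger z_j\rangle = -\langle z_j, \Omega_j z_j\rangle$, so its real part is zero. For the $\kappa_0$-term, $\langle z_j, \langle z_j,z_j\rangle V_0 z_c - \langle V_0 z_c, z_j\rangle z_j\rangle = \langle z_j,z_j\rangle \langle z_j, V_0 z_c\rangle - \overline{\langle V_0 z_c, z_j\rangle}\langle z_j, z_j\rangle = \langle z_j,z_j\rangle\big(\langle z_j, V_0 z_c\rangle - \langle z_j, V_0 z_c\rangle\big) = 0$, using $\overline{\langle V_0 z_c, z_j\rangle} = \langle z_j, V_0 z_c\rangle$; so this term vanishes identically, not merely in real part. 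For the $\kappa_1$-term, the scalar factor $\langle z_j, V_1 z_c\rangle - \langle V_1 z_c, z_j\rangle = \langle z_j, V_1 z_c\rangle - \overline{\langle z_j, V_1 z_c\rangle} = 2i\,\mathrm{Im}\,\langle z_j, V_1 z_c\rangle$ is purely imaginary, and it multiplies $\langle z_j, z_j\rangle = \|z_j\|^2 \in \mathbb{R}$, so that contribution is also purely imaginary and its real part is zero.

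Collecting the three pieces gives $\frac{d}{dt}\|z_j\|^2 = 2\,\mathrm{Re}\,\langle z_j, \dot z_j\rangle = 0$, hence $\|z_j\|^2$ — and therefore $\|z_j\|$ — is constant along the flow, for each $j = 1,\dots,N$ and all $t>0$. I expect no real obstacle here: the only point requiring minor care is bookkeeping the conjugation conventions (the paper conjugates the first argument), and in particular checking that the skew-hermitian hypotheses on $\Omega_j$, together with the structural identities $\overline{\langle a,b\rangle}=\langle b,a\rangle$, are what force each term to be either zero or purely imaginary. Since the argument is identical for each index $j$, no summation over $j$ or interaction between particles is needed.
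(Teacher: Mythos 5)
Your approach matches the paper's: compute $\frac{d}{dt}\|z_j\|^2 = 2\,\mathrm{Re}\,\langle z_j, \dot z_j\rangle$, substitute the right-hand side of \eqref{B-0-1}, and show each contribution has zero real part (the skew-hermitian free flow and the $\kappa_1$-term both giving purely imaginary quantities).

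There is, however, one slip in your treatment of the $\kappa_0$-term. When you pull the scalar $\langle V_0 z_c, z_j\rangle$ out of the \emph{second} slot of $\langle z_j,\,\cdot\,\rangle$, which the paper's convention makes \emph{linear} (conjugation acts on the first slot only), you should not conjugate it. The correct computation is
\[
\kappa_0\langle z_j,\,\langle z_j,z_j\rangle V_0 z_c - \langle V_0 z_c, z_j\rangle z_j\rangle
= \kappa_0\,\langle z_j,z_j\rangle\bigl(\langle z_j, V_0 z_c\rangle - \langle V_0 z_c, z_j\rangle\bigr)
= 2\mathrm{i}\,\kappa_0\,\|z_j\|^2\,\mathrm{Im}\,\langle z_j, V_0 z_c\rangle,
\]
which is purely imaginary but in general nonzero. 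Your claim that this term ``vanishes identically, not merely in real part'' is therefore false; it vanishes only in real part, which is exactly what the argument needs. So the conclusion $\frac{d}{dt}\|z_j\|^2 = 0$ still holds, but the intermediate statement should be corrected: all three contributions to $\langle z_j,\dot z_j\rangle$ are purely imaginary (the $\Omega_j$-term, the $\kappa_0$-term, and the $\kappa_1$-term), and hence their real parts all vanish.
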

\begin{proof} Note that 
\begin{equation} \label{B-0-2}
 \frac{d}{dt} \|z_j \|^2 = \langle {\dot z}_j, z_j \rangle + \langle z_j, {\dot z}_j \rangle.
\end{equation} 
We use \eqref{B-0-1} to estimate the second term in \eqref{B-0-2}:
\begin{align}
\begin{aligned} \label{B-0-3}
\langle{z_j, \dot{z}_j}\rangle &= \Big \langle{z_j, \Omega_j z_j+\kappa_0 \Big(\langle{z_j, z_j}\rangle V_0 z_c-\langle{V_0 z_c, z_j}\rangle z_j \Big)+\kappa_1 \Big (\langle{z_j, V_1 z_c}\rangle-\langle{V_1 z_c, z_j}\rangle \Big)z_j} \Big \rangle\\
&=\langle{z_j, \Omega z_j}\rangle +  \kappa_1 \Big( -\langle{V_1 z_c, z_j} \rangle + \langle z_j, V_1 z_c \rangle   \Big) \| z_j \|^2.
\end{aligned}
\end{align}
Now, we use the relation $ \langle {\dot z}_j, z_j \rangle = \overline{  \langle z_j, {\dot z}_j \rangle}$ to see
\begin{equation} \label{B-0-4}
\langle {\dot z}_j, z_j \rangle = \overline{\langle z_j, {\dot z}_j \rangle} = \langle \Omega z_j, z_j \rangle + \kappa_1 \Big(  \langle z_j, V_1 z_c \rangle - \langle V_1 z_c, z_j \rangle   \Big)\| z_j \|^2.
\end{equation}
In \eqref{B-0-2}, we combine estimates \eqref{B-0-3} and \eqref{B-0-4} to obtain
\[ \label{B-0-5}
 \frac{d}{dt} \|z_j \|^2  = \langle{z_j, \dot{z}_j}\rangle + \langle {\dot z}_j, z_j \rangle = \langle{z_j, \Omega z_j}\rangle +  \langle \Omega z_j, z_j \rangle = \langle (\Omega^\dagger + \Omega) z_j, z_j \rangle = 0,
\]
which yields the desired estimate.
\end{proof}

\vspace{0.2cm}

Next, we study solution splitting property, when the system has the same free flow $\Omega_j = \Omega$:
\begin{equation} \label{NNN-2}
\dot{z}_j =\Omega z_j+\frac{\kappa_0}{N}\sum_{k=1}^N(\langle z_j, z_j\rangle V_0z_k-\langle V_0z_k, z_j\rangle z_j) +\frac{\kappa_1}{N}\sum_{k=1}^N(\langle z_j, V_1z_k\rangle-\langle V_1z_k, z_j\rangle )z_j.
\end{equation}
Now, we consider the associated linear and nonlinear flows:
\begin{equation} 
\begin{cases}  \label{B-0-6}
\displaystyle \dot{f}_j =\Omega f_j, \\
\displaystyle \dot{w}_j = \frac{\kappa_0}{N}\sum_{k=1}^N(\langle w_j, w_j\rangle \tilde{V_0} w_k-\langle \tilde{V_0} w_k, w_j\rangle w_j)  
+\frac{\kappa_1}{N}\sum_{k=1}^N(\langle w_j, \tilde{V_1} w_k\rangle-\langle \tilde{V_1}w_k, w_j\rangle ) w_j,
\end{cases}
\end{equation}
where
\begin{equation} \label{B-0-7}
\tilde{V}_0(t)=e^{-\Omega t} V_0 e^{\Omega t} \quad \mbox{and} \quad \tilde{V}_1(t)=e^{-\Omega t} V_1 e^{\Omega t}.
\end{equation}
Then, it is easy to see that $f_j(t) = e^{\Omega t} f_j^{in}$, and let ${\mathcal N} = {\mathcal N}(t)$ be solution operators to \eqref{B-0-6} such that 
\[ W(t) = {\mathcal N}(t) W^{in}, \quad W(t) = (w_1(t), \cdots, w_N(t)). \]
In next proposition, we study a solution splitting property of \eqref{NNN-2}.
\begin{proposition} \label{P2.1}
Let $\{z_j \}$ be a solution to \eqref{NNN-2} with initial data $z^{in}$. Then, one has 
\[ z_j(t) = e^{\Omega t} ({\mathcal N}(t) z^{in})_j, \quad t \geq 0. \]
\end{proposition}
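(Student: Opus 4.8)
The plan is to pass to the ``rotating frame'' (interaction picture) by factoring out the common free flow $e^{\Omega t}$ and identifying the resulting system with the nonlinear flow \eqref{B-0-6}. Concretely, I would set $w_j(t) := e^{-\Omega t} z_j(t)$ for $j = 1, \ldots, N$, so that $w_j(0) = z_j^{in}$, and differentiate to get $\dot{w}_j = e^{-\Omega t}(\dot{z}_j - \Omega z_j)$. Substituting the right-hand side of \eqref{NNN-2} for $\dot z_j - \Omega z_j$ cancels the linear term and leaves only the two coupling sums, each premultiplied by $e^{-\Omega t}$.

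The next step is purely algebraic: rewrite every coupling term in terms of the $w_k$'s using $z_j = e^{\Omega t} w_j$ together with the fact that, since $\Omega^\dagger = -\Omega$, the matrix $e^{\Omega t}$ is unitary, hence $\langle e^{\Omega t} a, e^{\Omega t} b \rangle = \langle a, b \rangle$ for all $a, b \in \mathbb{C}^{d+1}$. This gives $\langle z_j, z_j \rangle = \langle w_j, w_j \rangle$, $e^{-\Omega t} V_0 z_k = (e^{-\Omega t} V_0 e^{\Omega t}) w_k = \tilde V_0(t) w_k$, $\langle V_0 z_k, z_j \rangle = \langle \tilde V_0(t) w_k, w_j \rangle$, and likewise $\langle z_j, V_1 z_k \rangle = \langle w_j, \tilde V_1(t) w_k \rangle$ and $\langle V_1 z_k, z_j \rangle = \langle \tilde V_1(t) w_k, w_j \rangle$, with $\tilde V_0, \tilde V_1$ exactly as in \eqref{B-0-7}. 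Collecting terms, $\{w_j\}$ is seen to solve the nonlinear system in \eqref{B-0-6} with initial datum $z^{in}$.

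Finally, I would invoke uniqueness: the right-hand side of \eqref{B-0-6} is smooth in $W$ and, by Lemma \ref{L2.1} (applied either directly to \eqref{B-0-6} or transported from \eqref{NNN-2} via the unitary change of variables), every solution stays on the product of spheres with $\|w_j\| = \|z_j^{in}\|$, so solutions are global and unique. Hence $W(t) = \mathcal{N}(t) z^{in}$, i.e. $w_j(t) = (\mathcal{N}(t) z^{in})_j$, and undoing the change of variables yields $z_j(t) = e^{\Omega t} w_j(t) = e^{\Omega t} (\mathcal{N}(t) z^{in})_j$, as claimed.

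There is no genuine obstacle in this argument; the only points requiring care are the bookkeeping in the conjugation identities --- in particular keeping track of which slot the conjugate-linearity of $\langle \cdot, \cdot \rangle$ acts on when moving $e^{\pm\Omega t}$ across --- and the appeal to global well-posedness that legitimizes the uniqueness step.
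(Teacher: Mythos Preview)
Your argument is correct and follows essentially the same route as the paper's own proof: both pass to the rotating frame via $z_j = e^{\Omega t} w_j$, use the unitarity of $e^{\Omega t}$ to rewrite the inner products, and identify the resulting system for $\{w_j\}$ with the nonlinear flow \eqref{B-0-6} with conjugated frustration matrices $\tilde V_0,\tilde V_1$. Your version is in fact slightly more explicit than the paper's, since you spell out the unitarity identities and the uniqueness step, whereas the paper simply performs the substitution and stops once the $w_j$-equation is obtained.
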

\begin{proof}
We substitute the ansatz
\[
z_j(t)=e^{\Omega t} w_j(t),
\]
into system \eqref{B-0-1} to get 
\begin{align*}
\begin{aligned}
e^{\Omega t}(\dot{w}_j+\Omega w_j)=&\Omega e^{\Omega t} w_j+\kappa_0(\langle{e^{\Omega t}w_j, e^{\Omega t}w_j}\rangle V_0 e^{\Omega t} w_c-\langle{V_0 e^{\Omega t} w_c, e^{\Omega t} w_j}\rangle e^{\Omega t}w_j)\\
&+\kappa_1(\langle{e^{\Omega t}w_j, V_1 e^{\Omega t}w_c}\rangle-\langle{V_1 e^{\Omega t}w_c, e^{\Omega t} w_j}\rangle )e^{\Omega t} w_j.
\end{aligned}
\end{align*}
This leads
\begin{align}
\begin{aligned} \label{B-0-8}
\dot{w}_j &=\kappa_0 \Big( \langle{w_j, w_j}\rangle e^{-\Omega t}V_0 e^{\Omega t} w_c- \Big \langle{e^{-\Omega t}V_0 e^{\Omega t} w_c, w_j} \Big \rangle w_j \Big) \\
 &\hspace{2cm} +\kappa_1 \Big( \Big \langle{w_j, e^{-\Omega t}V_1  e^{\Omega t}w_c} \Big \rangle- \Big \langle{e^{-\Omega t}V_1  e^{\Omega t} w_c, w_j} \Big \rangle \Big )w_j.
\end{aligned}
\end{align}
Now we use \eqref{B-0-7} to simplify \eqref{B-0-8} to find
\[
\dot{w}_j=\kappa_0(\langle{w_j, w_j}\rangle \tilde{V}_0 w_c-\langle{\tilde{V}_0 w_c, w_j}\rangle w_j)+\kappa_1(\langle{w_j, \tilde{V}_1w_c}\rangle-\langle{\tilde{V}_1 w_c, w_j}\rangle )w_j.
\]
This complete the desired proof.
\end{proof}

\begin{remark}
Suppose that $W_0$ and $W_1$ satisfy
\begin{align*}
[W_0, \Omega] = 0 \quad \mbox{and} \quad [W_1, \Omega] = 0.
\end{align*}
Then, we have
\begin{align*}
\tilde{V}_0 = e^{-\Omega t} V_0 e^{\Omega t} = V_0 \quad \mbox{and} \quad \tilde{V}_1 = e^{-\Omega t} V_1 e^{\Omega t} = V_1,
\end{align*}
which implies the solution splitting property.
\end{remark}

Now we will study about relations between the Lohe hermitian sphere model with frustration and other synchronization models with frustration. First, if we set initial data $Z^{in}=\{z_j^{in}\}_{j=1}^N$ as set of real unit vectors and $\Omega_j, W_0$ as real skew-symmetric matrix, then we can easily obtain the Lohe sphere model with frustration. We introduce how we can reduce the Lohe hermitian sphere model with frustration to Kuramoto model with frustration. The diagram below shows relations among the Lohe hermitian sphere model with frustration, the Lohe sphere model with frustration, and the Kuramoto model with frustration. Since we can obtain the Lohe sphere model with frustration from the Lohe hermitian sphere model with frustration by letting $\kappa_1 = 0$, we denoted it in the diagram. 

\begin{equation*}
\begin{tikzcd}
    \thead{\text{Lohe Hermitian sphere model}\\\text{with frustration}} \arrow{rd}{\kappa_0=0,~d=0} \arrow{r}{\kappa_1=0}
    & \thead{\text{Lohe sphere model}\\\text{with frustration}}          \arrow{d}{d=1}\\
    &\thead{\text{Kuramoto model}\\\text{with frustration}}
\end{tikzcd}
\end{equation*}
Next, we show how the LS model with frustration can be reduced to the Kuramoto model with frustration \cite{Da, H-K-L, H-K-L1, L-H, Zh}. In what follows, we consider two subsystems of \eqref{LHSF} separately. \newline

\noindent $\bullet$~Subsystem A ($\kappa_0 > 0$ and $\kappa_1 = 0$):~Consider the LHS model restricted on $\bbs^d$ with a uniform frustration:
\begin{equation} \label{B-0-9}
\dot{z}_i=\Omega_i z_i+\frac{\kappa}{N}\sum_{k=1}^N\Big(V_0z_k-\langle z_i, V_0 z_k\rangle z_i \Big),
\end{equation}
where $V_0$ is the frustration matrix of the form \eqref{fr}. For $d= 1$, We set 
\begin{equation} \label{B-0-10}
z_i :=\begin{bmatrix}
\cos\theta_i\\
\sin\theta_i
\end{bmatrix}, \quad 
\Omega_i :=\begin{bmatrix}
0&-\nu_i\\
\nu_i&0
\end{bmatrix}, \quad 
V_0  :=\begin{bmatrix}
\cos\alpha&-\sin\alpha\\
\sin\alpha&\cos\alpha
\end{bmatrix}.
\end{equation}
We substitute \eqref{B-0-10} into \eqref{B-0-9} to obtain
\begin{align*}
\begin{aligned}
&\dot{\theta}_i\begin{bmatrix}
-\sin\theta_i\\
\cos\theta_i
\end{bmatrix} =\begin{bmatrix}
0&-\nu_i\\
\nu_i&0
\end{bmatrix}\begin{bmatrix}
\cos\theta_i\\
\sin\theta_i
\end{bmatrix} \\
&+\frac{\kappa}{N}\sum_{k=1}^N
\left(\begin{bmatrix}
\cos\alpha&-\sin\alpha\\
\sin\alpha&\cos\alpha
\end{bmatrix}\begin{bmatrix}
\cos\theta_k\\
\sin\theta_k
\end{bmatrix}
-
\left\langle \begin{bmatrix}
\cos\theta_i\\
\sin\theta_i
\end{bmatrix}, \begin{bmatrix}
\cos\alpha&-\sin\alpha\\
\sin\alpha&\cos\alpha
\end{bmatrix}\begin{bmatrix}
\cos\theta_k\\
\sin\theta_k
\end{bmatrix} \right\rangle\begin{bmatrix}
\cos\theta_i\\
\sin\theta_i
\end{bmatrix}\right) \\
&=\nu_i\begin{bmatrix}
-\sin\theta_i\\
\cos\theta_i
\end{bmatrix}+\frac{\kappa}{N}\sum_{k=1}^N
\begin{bmatrix}
\cos\alpha\cos\theta_k-\sin\alpha\sin\theta_k-(\cos\theta_i\cos(\theta_k+\alpha)+\sin\theta_i\sin(\theta_k+\alpha))\cos\theta_i\\
\sin\alpha\cos\theta_k+\cos\alpha\sin\theta_k-(\cos\theta_i\cos(\theta_k+\alpha)+\sin\theta_i\sin(\theta_k+\alpha))\sin\theta_i
\end{bmatrix}\\
&=\nu_i\begin{bmatrix}
-\sin\theta_i\\
\cos\theta_i
\end{bmatrix}+\frac{\kappa}{N}\sum_{k=1}^N
\begin{bmatrix}
\cos(\theta_k+\alpha)-\cos(\theta_i-\theta_k-\alpha)\cos\theta_i\\
\sin(\theta_k+\alpha)-\cos(\theta_i-\theta_k-\alpha)\sin\theta_i
\end{bmatrix}\\
&=\nu_i\begin{bmatrix}
-\sin\theta_i\\
\cos\theta_i
\end{bmatrix}+\frac{\kappa}{N}\sum_{k=1}^N
\begin{bmatrix}
-\sin\theta_i\sin(\theta_k-\theta_i+\alpha)\\
\cos\theta_i\sin(\theta_k-\theta_i+\alpha)
\end{bmatrix}.
\end{aligned}
\end{align*}
This leads to the Kuramoto model with frustration:
\[
\dot{\theta}_i=\nu_i+\frac{\kappa}{N}\sum_{k=1}^N\sin(\theta_k-\theta_i+\alpha).
\]
For $\alpha = 0$, the above system is exactly the Kuramoto model \cite{C-H-J-K, C-S, D-B1, H-K-R, Ku2}.

\vspace{0.5cm}

\noindent $\bullet$~Subsystem B ($\kappa_0 = 0$ and $\kappa_1 > 0$):~Consider the Subsystem B:
\begin{equation} \label{B-0-11}
\dot{z}_j=\Omega_jz_j+\frac{\kappa_1}{N}\sum_{k=1}^N\left(\langle z_j, V_1 z_k\rangle-\langle V_1z_k, z_j\rangle\right)z_j.
\end{equation}
We set 
\[
d=0,\quad \Omega_j=\nu_j\mathrm{i},\quad z_j=e^{\mathrm{i}\theta_j},\quad V_1=e^{\mathrm{i}\alpha},\quad \kappa_1=\frac{\kappa}{2}.
\]
We substitute the above ansatz into \eqref{B-0-11} to find
\begin{align*}
\mathrm{i}e^{\mathrm{i}\theta_j}\dot{\theta}_j&=\nu_j\mathrm{i}e^{\mathrm{i}\theta_j}+\frac{\kappa}{2N}\sum_{k=1}^N
\left(
e^{\mathrm{i}(\theta_k+\alpha-\theta_j)}-e^{\mathrm{i}(\theta_j-\theta_k-\alpha)}
\right)e^{\mathrm{i}\theta_j}\\
&=\mathrm{i}e^{\mathrm{i}\theta_j}\left(\nu_j +\frac{\kappa}{N}\sum_{k=1}^N\sin(\theta_k-\theta_j+\alpha)\right).
\end{align*}
After simplification, we obtain the Kuramoto model with frustration:
\[
\dot{\theta}_j=\nu_j+\frac{\kappa}{N}\sum_{k=1}^N\sin(\theta_k-\theta_j+\alpha).
\]

\subsection{Previous results} \label{sec:2.3}
 In this subsection, we briefly recall previous results on the emergent dynamics of two aforementioned particle models with frustration.

 \subsubsection{The Kuramoto model}  \label{sec:2.3.1}
 Let $\theta_j = \theta_j(t)$ be the phase of the $j$-th oscillator whose dynamics is governed by the Kuramoto model with frustration:
\begin{equation} \label{B-1}
{\dot \theta}_j = \nu_j + \frac{\kappa}{N} \sum_{k=1}^{N} \sin(\theta_k - \theta_j + \alpha), \quad |\alpha| < \frac{\pi}{2},
\end{equation}
where $\alpha$ is the uniform size of frustration. \newline

For a brief description of \cite{H-K-L}, we set
\[ D(\Theta) := \max_{i,j} |\theta_i - \theta_j|, \quad D({\dot \Theta}) := \max_{i,j} |{\dot \theta}_i - {\dot \theta}_j|, \quad D(\nu) := \max_{i,j} |\nu_i - \nu_j |.  \]
Note that $D(\Theta(t))$ is Lipschitz continuous, hence it is differentiable except at times of collision between the extremal phases and their neighboring phases. \newline

Let $\kappa, D(\nu)$ and $\alpha$ be positive constants satisfying 
\[ D(\nu) + \kappa \sin |\alpha| < \kappa. \]
Then we set $D^{\infty}_{1} < D^{\infty}_2$ be two roots of the following trigonometric equation:
\[ \displaystyle \sin x = \frac{D(\nu) + \kappa \sin |\alpha|}{\kappa},\quad x \in \left(0, \pi \right). \]
\begin{proposition}\cite{H-K-L} \label{P2.2}
Suppose system parameters $D(\nu), \kappa$ and $\alpha$ satisfy
\[
D(\nu) > 0, \quad \kappa \geq \frac{D(\nu)}{1 - \sin |\alpha|}, \quad 0 < D(\Theta^0) < D^\infty_2 - |\alpha|,
\]
and let $\Theta = \Theta(t)$ be a solution to \eqref{B-1}. Then there exists $t_0 > 0$ such that
\[
D({\dot \Theta}(t_{0})) e^{-\kappa(t-t_{0})} \leq D({\dot \Theta}(t)) \leq D({\dot \Theta}(t_{0})) e^{-\kappa \cos (D_1^{\infty} + \varepsilon)(t-t_{0})}, \quad t \geq t_0,
\]
where $\varepsilon\ll 1$ is a positive constant satisfying $D_1^{\infty} +\varepsilon<\frac{\pi}{2}$.
\end{proposition}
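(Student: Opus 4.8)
The plan is to adapt the classical two--step phase--diameter argument for the Kuramoto model to the frustrated case. I work with real (lifted) phases, so that $D(\Theta)=\max_j\theta_j-\min_j\theta_j$, and use the fact (noted above) that $D(\Theta(\cdot))$ and $D(\dot\Theta(\cdot))$ are locally Lipschitz, hence differentiable at a.e.\ $t$; at each such time I pick indices $M,m$ realizing the extremal phases (resp.\ frequencies) and differentiate. First I would trap the phase configuration inside a small arc after a finite time $t_0$; then I would run a two--sided Gr\"onwall inequality for $D(\dot\Theta)$ on $[t_0,\infty)$, which directly yields the asserted exponential envelope.

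\textbf{Step 1 (confinement of $D(\Theta)$).} At a differentiability point $\frac{d}{dt}D(\Theta)=\dot\theta_M-\dot\theta_m$; inserting \eqref{B-1} and applying $\sin X-\sin Y=2\cos\frac{X+Y}{2}\sin\frac{X-Y}{2}$ to the coupling terms, I get
\[
\frac{d}{dt}D(\Theta)\le D(\nu)-\frac{2\kappa}{N}\sin\frac{D(\Theta)}{2}\sum_{k=1}^{N}\cos\!\Big(\theta_k-\tfrac{\theta_M+\theta_m}{2}+\alpha\Big).
\]
Since $\big|\theta_k-\tfrac{\theta_M+\theta_m}{2}+\alpha\big|\le\tfrac{D(\Theta)}{2}+|\alpha|$, as long as $\tfrac{D(\Theta)}{2}+|\alpha|\le\pi$ each cosine is bounded below by $\cos\!\big(\tfrac{D(\Theta)}{2}+|\alpha|\big)$, and since $2\sin\tfrac{D(\Theta)}{2}\cos\!\big(\tfrac{D(\Theta)}{2}+|\alpha|\big)=\sin(D(\Theta)+|\alpha|)-\sin|\alpha|$, this collapses to the scalar inequality
\[
\frac{d}{dt}D(\Theta)\le g\big(D(\Theta)\big),\qquad g(s):=D(\nu)+\kappa\sin|\alpha|-\kappa\sin(s+|\alpha|).
\]
The standing assumption $D(\nu)+\kappa\sin|\alpha|<\kappa$ gives $\frac{D(\nu)+\kappa\sin|\alpha|}{\kappa}\in(\sin|\alpha|,1)$, so $D_1^\infty<\pi/2<D_2^\infty$ are well defined, $g>0$ on $(0,D_1^\infty-|\alpha|)$ and $g<0$ on $(D_1^\infty-|\alpha|,D_2^\infty-|\alpha|)$. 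Hence $\{s<D_2^\infty-|\alpha|\}$ is forward invariant (this is where $D(\Theta^0)<D_2^\infty-|\alpha|$ enters), and a comparison with $\dot y=g(y)$ gives $\limsup_{t\to\infty}D(\Theta(t))\le D_1^\infty-|\alpha|$. Fixing $\varepsilon\ll1$ with $D_1^\infty+\varepsilon<\pi/2$, I then obtain $t_0>0$ with $D(\Theta(t))+|\alpha|\le D_1^\infty+\varepsilon$ for all $t\ge t_0$.

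\textbf{Step 2 (Gr\"onwall for $D(\dot\Theta)$).} Differentiating \eqref{B-1} once more gives $\ddot\theta_j=\frac{\kappa}{N}\sum_k\cos(\theta_k-\theta_j+\alpha)(\dot\theta_k-\dot\theta_j)$, so at a differentiability point $t\ge t_0$, with $M,m$ now realizing $\max_j\dot\theta_j$ and $\min_j\dot\theta_j$,
\[
\frac{d}{dt}D(\dot\Theta)=\frac{\kappa}{N}\sum_{k=1}^{N}\Big[\cos(\theta_k-\theta_M+\alpha)(\dot\theta_k-\dot\theta_M)+\cos(\theta_k-\theta_m+\alpha)(\dot\theta_m-\dot\theta_k)\Big].
\]
By Step 1 each cosine above lies in $[\cos(D_1^\infty+\varepsilon),1]$ with $\cos(D_1^\infty+\varepsilon)>0$, while $\dot\theta_k-\dot\theta_M\le0$ and $\dot\theta_m-\dot\theta_k\le0$; replacing every cosine by $\cos(D_1^\infty+\varepsilon)$ (for the upper estimate) or by $1$ (for the lower estimate) and telescoping $\sum_k\big[(\dot\theta_k-\dot\theta_M)+(\dot\theta_m-\dot\theta_k)\big]=-N\,D(\dot\Theta)$, I arrive at
\[
-\kappa\,D(\dot\Theta)\le\frac{d}{dt}D(\dot\Theta)\le-\kappa\cos(D_1^\infty+\varepsilon)\,D(\dot\Theta),\qquad t\ge t_0,
\]
and integrating these two differential inequalities from $t_0$ to $t$ gives precisely the two asserted bounds.

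\textbf{Expected main obstacle.} The delicate part is Step 1: one has to carry the comparison/barrier argument together with a bootstrap of the a priori bound $\tfrac{D(\Theta)}{2}+|\alpha|\le\pi$ that licenses $\frac{d}{dt}D(\Theta)\le g(D(\Theta))$, and to do the sign analysis of $g$ carefully so that the hypothesis $D(\Theta^0)<D_2^\infty-|\alpha|$ indeed places the trajectory in the basin of the stable equilibrium $D_1^\infty-|\alpha|$. Throughout, the non-smoothness of $t\mapsto D(\Theta(t))$ and $t\mapsto D(\dot\Theta(t))$ at collision instants must be handled via upper Dini derivatives, invoking the stated local Lipschitz regularity. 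Step 2, by contrast, is comparatively soft once the confinement of Step 1 makes all interaction cosines uniformly positive.
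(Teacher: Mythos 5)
The paper itself gives no proof of Proposition~\ref{P2.2}; it is quoted directly from \cite{H-K-L} as a background result. Your two--step reconstruction (phase--diameter confinement via the scalar barrier $g(s)=D(\nu)+\kappa\sin|\alpha|-\kappa\sin(s+|\alpha|)$, followed by a two--sided Gr\"onwall estimate on $D(\dot\Theta)$ using the uniform bound $\cos(\theta_k-\theta_j+\alpha)\in[\cos(D_1^\infty+\varepsilon),1]$ for $t\ge t_0$) is precisely the argument used in that reference, and the computations — the product--to--sum identity in Step~1, the sign analysis of $g$, the telescoping $\sum_k[(\dot\theta_k-\dot\theta_M)+(\dot\theta_m-\dot\theta_k)]=-N\,D(\dot\Theta)$, and the passage to Dini derivatives at collision times — are all correct. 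Since the paper offers only the citation, there is nothing to contrast beyond confirming that your proof is the standard one and is sound.
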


\begin{remark}\label{R2.2} 
For the Kuramto model with heterogeneous frustrations with $\alpha_{kj}$, its emergent dynamics has been studied in \cite{H-K-P}.
\end{remark}

\subsubsection{The LS model with the same free flow} \label{sec:2.2.2} Let $x_j = x_j(t)$ be a position of the $j$-th particle on the unit sphere ${\mathbb S}^d$ whose dynamics is governed by the following system with frustration and the same free flow:
\begin{equation} \label{B-2}
\dot x_j = \Omega x_j + \frac{\kappa}{N} \sum_{k=1}^N  (Vx_k - \langle x_j, Vx_k\rangle x_j), 
\end{equation}
where the frustration matrix $V$ is given by the following form:
\begin{equation} \label{B-3}
 V = I_{d+1} + W, 
 \end{equation}
where  $I_{d+1}$ is the $(d+1)\times (d+1)$ identity matrix and $ W$ is a $(d+1)\times (d+1)$ skew-symmetric matrix. \newline

We further substitute \eqref{B-3} into \eqref{B-2} to get 
\begin{equation} \label{B-3-1}
\dot x_j = \Omega x_j + \underbrace{\frac{\kappa}{N} \sum_{k=1}^N (x_k - \langle x_j,x_k\rangle x_j)}_\textup{synchronous motion} + \underbrace{ \frac{\kappa}{N} \sum_{k=1}^N ( Wx_k - \langle x_j, Wx_k\rangle x_j ). }_\textup{ periodic motion} 
\end{equation}
Note that the presence of frustration matrix $W$ induces a competition between \textit{`synchronization'} and \textit{`periodic motion'}, in the following sense. The second term on the RHS of \eqref{B-3-1} tends to bring the oscillators together. On the other hand, since $W$ is a $(d+1) \times (d+1)$ skew-symmetric matrix,  all eigenvalues of $W$ are zero or purely imaginary. Hence, we can interpret the last term on the RHS of \eqref{B-3-1}, together with $\Omega x_j$, tries to pull the dynamics into a periodic motion.
  
\begin{theorem} \label{T2.1}
Suppose natural frequency matrix, frustration matrix and initial data satisfy 
\[ \label{Z-22-1}
 \| W \|_F < 1, \quad  \max_{i,j} \Big(1-\langle x_i^{in},x_j^{in} \rangle \Big) < 1- \|W\|_F, \]
where $W$ is a $(d+1)\times(d+1)$ skew-symmetric matrix, and let $\{x_i\}_{i=1}^N$ be a solution to \eqref{B-2}. Then, one has 
\[  \lim_{\kappa \to \infty} \limsup_{t\to\infty} \max_{i,j} \|x_i(t) -x_j(t)\|  = 0. \]
\end{theorem}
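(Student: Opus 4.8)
The plan is to run a standard invariant‑region/Gr\"onwall argument with a pairwise correlation diameter as Lyapunov functional. First I would set
\[
D(t):=\max_{1\le i,j\le N}\bigl(1-\langle x_i(t),x_j(t)\rangle\bigr).
\]
Since $\|x_i\|$ is conserved (the real analogue of Lemma~\ref{L2.1}), each $x_i(t)$ stays on $\mathbb S^d$, so $\|x_i-x_j\|^2=2(1-\langle x_i,x_j\rangle)$ and $\max_{i,j}\|x_i-x_j\|^2=2D(t)$; hence the assertion is equivalent to $\lim_{\kappa\to\infty}\limsup_{t\to\infty}D(t)=0$. Because the trajectories are smooth, $D$ is locally Lipschitz, hence differentiable for a.e.\ $t$, and at such $t$ one has $\tfrac{d}{dt}D(t)=\tfrac{d}{dt}\bigl(1-\langle x_{i^\ast},x_{j^\ast}\rangle\bigr)$ for an extremal pair $(i^\ast,j^\ast)$ (equivalently, one works with the upper Dini derivative $D^+D$).

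The second step is a clean pairwise identity. Writing \eqref{B-2} in mean‑field form $\dot x_i=\Omega x_i+\kappa\bigl(Vx_c-\langle x_i,Vx_c\rangle x_i\bigr)$ with $x_c=\tfrac1N\sum_k x_k$ and computing $\tfrac{d}{dt}\langle x_i,x_j\rangle=\langle\dot x_i,x_j\rangle+\langle x_i,\dot x_j\rangle$, I would use that $\langle\Omega x_i,x_j\rangle+\langle x_i,\Omega x_j\rangle=0$ (skew‑symmetry of the common free‑flow matrix $\Omega$) and collect the linear and cubic terms to get
\begin{equation*}
\frac{d}{dt}\bigl(1-\langle x_i,x_j\rangle\bigr)=-\kappa\,\langle Vx_c,\,x_i+x_j\rangle\,\bigl(1-\langle x_i,x_j\rangle\bigr).
\end{equation*}
(Alternatively one may first remove $\Omega$ by solution splitting, replacing $V$ with $\tilde V(t)=e^{-\Omega t}Ve^{\Omega t}$, which has the same Frobenius norm; nothing below changes.) With $V=I_{d+1}+W$ this separates a synchronizing contribution $\langle x_c,x_i+x_j\rangle$ from a frustration contribution $\langle Wx_c,x_i+x_j\rangle$.

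Third, I would bound the bracket from below: $\langle x_c,x_i\rangle=\tfrac1N\sum_k\langle x_k,x_i\rangle\ge 1-D$ gives $\langle x_c,x_i+x_j\rangle\ge 2(1-D)$, while $\|x_i+x_j\|\le 2$ and $\|Wx_c\|\le\|W\|_F\|x_c\|\le\|W\|_F$ (the operator norm being dominated by $\|W\|_F$, and $\|x_c\|\le1$) give $|\langle Wx_c,x_i+x_j\rangle|\le 2\|W\|_F$, so $\langle Vx_c,x_i+x_j\rangle\ge 2(1-\|W\|_F-D)$. Evaluating the identity at an extremal pair then yields, for a.e.\ $t$,
\begin{equation*}
\frac{d}{dt}D(t)\ \le\ -2\kappa\bigl(1-\|W\|_F-D(t)\bigr)D(t).
\end{equation*}
Under the hypotheses $\|W\|_F<1$ and $D^{in}:=\max_{i,j}(1-\langle x_i^{in},x_j^{in}\rangle)<1-\|W\|_F$, the interval $[0,1-\|W\|_F)$ is positively invariant, so $D(t)\le D^{in}$ for all $t$ and $1-\|W\|_F-D(t)\ge c_0:=1-\|W\|_F-D^{in}>0$; Gr\"onwall then gives $D(t)\le D^{in}e^{-2\kappa c_0 t}$, hence $\max_{i,j}\|x_i(t)-x_j(t)\|\le\sqrt{2D^{in}}\,e^{-\kappa c_0 t}\to 0$ for every fixed $\kappa>0$, which is stronger than, and immediately implies, the claimed $\kappa\to\infty$ statement.

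I expect the only delicate point to be the non‑smoothness of the maximum $D$: one must justify that $D$ is differentiable a.e.\ with derivative controlled by an extremal pair (or pass to $D^+D$), which is routine in this line of work but should be stated carefully, together with the positive invariance of $[0,1-\|W\|_F)$. All the inner‑product estimates are elementary, so the substance of the write‑up is the factored identity of the second step plus the one‑line Gr\"onwall comparison.
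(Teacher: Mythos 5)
The paper does not actually prove Theorem~\ref{T2.1}; it is quoted from \cite{H-K-P-R} as a ``previous result,'' and the introduction notes that \cite{H-K-P-R} established exponential aggregation for this identical ensemble ``using a position diameter as a suitable Lyapunov functional.'' Your argument is correct and implements exactly that strategy: your $D(t)=\max_{i,j}\bigl(1-\langle x_i,x_j\rangle\bigr)$ is just the squared position diameter (since $\|x_i-x_j\|^2=2(1-\langle x_i,x_j\rangle)$), the factored identity $\tfrac{d}{dt}(1-\langle x_i,x_j\rangle)=-\kappa\langle Vx_c,x_i+x_j\rangle(1-\langle x_i,x_j\rangle)$ is the right pairwise reduction, the bounds $\langle x_c,x_i+x_j\rangle\ge 2(1-D)$ and $|\langle Wx_c,x_i+x_j\rangle|\le 2\|W\|_F$ are correct, and the resulting differential inequality and invariance/Gr\"onwall step are sound. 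As you observe, this gives $D(t)\le D^{in}e^{-2\kappa c_0 t}$ for \emph{every} $\kappa>0$, i.e., complete exponential aggregation, which is strictly stronger than the practical aggregation actually asserted in the theorem; the paper's own phrasing of Theorem~\ref{T2.1} is conservative relative to what \cite{H-K-P-R} (and your argument) actually yield. Worth noting for contrast: within the present paper, Section~\ref{sec:3} re-derives aggregation for the same real LS model with frustration by a genuinely different, local approach, tracking the inter-particle angle $\theta_{ij}=\cos^{-1}\langle x_i,x_j\rangle$ and the bilinear estimate of Lemma~\ref{L3.1}, which leads to the alternative sufficient condition $\theta_{ij}^{in}<\cot^{-1}(\|W\|_F/\sqrt 2)$ in Proposition~\ref{P3.1}. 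That angle-based route gives a sharper threshold in the size of $\|W\|_F$ and localizes per pair, whereas the diameter route you use is simpler and suffices to recover Theorem~\ref{T2.1} directly.
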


\vspace{0.5cm}

In next three sections, we study the emergent dynamics of \eqref{LHSF} with heterogeneous frustrations into two cases. First, we study the emergence of practical aggregation for the case with $\kappa_1 = 0$. Second, we study the practical aggregation of \eqref{LHSF} with $\kappa_0 > 0$ and $\kappa_1 > 0$.  More precisely, we study the LH model with frustration, i.e., either
\[
\kappa_0 > 0, \quad \kappa_1 = 0, \quad \mbox{or} \quad \kappa_0 > 0, \quad \kappa_1 > 0.
\]
Consider the LHS model with $\kappa_1 = 0$:
\[
\mbox{Subsystem A:}~ \dot{z}_j=\Omega_j z_j+\kappa_0\Big(\langle{z_j, z_j}\rangle V_0 z_c-\langle{V_0 z_c, z_j}\rangle z_j\Big)
\]
and 
\[
\mbox{Subsystem B:}~\dot{z}_j= \Omega_j z_j+\kappa_0\Big(\langle{z_j, z_j}\rangle V_0 z_c-\langle{V_0 z_c, z_j}\rangle z_j\Big) + \kappa_1\Big(\langle{z_j, V_1 z_c}\rangle-\langle{V_1 z_c, z_j}\rangle\Big)z_j.
\]
For Subsystem A, it is easy to see that 
\[  z_i^{in} \in \bbr^{d+1}  \quad \Longrightarrow \quad z_j(t) \in \bbr^{d+1}, \quad t > 0. \]

\section{Emergent dynamics of the LS model:~ real vector case} \label{sec:3}
\setcounter{equation}{0}
In this section, we provide improved aggregation estimate for the LHS model with frustration which can be obtained from the LHS model restricted on the unit sphere $\bbs^d$. In this case, the state $z_j = x_j  \in \bbr^d$. Then, the state $x_j$ satisfies 
\begin{equation}\label{C-0-0}
\mbox{System A:}~ \dot{x}_j=\Omega_j x_j+\kappa\Big(\langle{x_j, x_j}\rangle V x_c-\langle{V x_c, x_j}\rangle x_j \Big),
\end{equation}
where $V=I_{d+1}+W$.
\newline

Before we discuss the aggregation estimate, we recall the concept of ``{\it complete and practical aggregations}".  
\begin{definition} \label{D3.1}
Let ${\mathcal X} = \{x_j \}$ be a dynamic ensemble whose evolution is governed by system \eqref{C-0-0}. 
\begin{enumerate}
\item
The ensemble ${\mathcal X}$ exhibits complete aggregation, if the following estimate holds.
\[   \lim_{t \to \infty} D({\mathcal X}(t)) = 0, \]
where $D({\mathcal X}):=\max_{i,j}\|x_i-x_j\|$.
\vspace{0.1cm}
\item
The ensemble ${\mathcal X}$  exhibits practical aggregation, if the following estimate holds.
\[  \lim_{\kappa \to \infty} \limsup_{t \to \infty} D({\mathcal X}(t)) = 0.   \] 
\end{enumerate}
\end{definition}

\vspace{0.2cm}

For later practical aggregation estimate, the following lemma will be used crucially. 
\begin{lemma}\label{L3.1}
Let $W\in \bbr^{d \times d}$ be a real skew-symmetric matrix and $x,y\in\bbr^{d}$ be vectors. Then we have 
\[
\Big|\langle x, Wy\rangle \Big|\leq \frac{1}{\sqrt{2}}\|W\|_F\sqrt{\|x\|^2\|y\|^2-\langle x, y\rangle^2},
\]
where $\langle \cdot , \cdot \rangle $is the inner product in $\bbr^{d}$. 
\end{lemma}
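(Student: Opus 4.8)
The plan is to reduce the inequality to a statement about a single $2\times 2$ block of $W$, obtained by restricting attention to the plane spanned by $x$ and $y$. First I would dispose of the degenerate cases: if $x$ and $y$ are linearly dependent, then the right-hand side vanishes (since $\|x\|^2\|y\|^2 - \langle x,y\rangle^2 = 0$ by equality in Cauchy--Schwarz), and simultaneously $\langle x, Wy\rangle = 0$ because $Wy$ is parallel to $Wx$ and $\langle x, Wx\rangle = 0$ by skew-symmetry; so the inequality holds trivially. Hence assume $x,y$ are linearly independent.

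Next I would apply Gram--Schmidt to produce an orthonormal pair $\{e_1, e_2\}$ spanning $\mathrm{span}\{x,y\}$, and extend to an orthonormal basis $\{e_1,\dots,e_d\}$ of $\bbr^d$. Writing $x = a_1 e_1 + a_2 e_2$ and $y = b_1 e_1 + b_2 e_2$ in this basis, one computes directly that $\|x\|^2\|y\|^2 - \langle x,y\rangle^2 = (a_1 b_2 - a_2 b_1)^2$, the square of the determinant of the coefficient matrix. On the other hand, if $m_{ij} := \langle e_i, W e_j\rangle$ denotes the entries of $W$ in this basis, then $\langle x, Wy\rangle = \sum_{i,j} a_i b_j m_{ij}$, and since only $i,j \in \{1,2\}$ contribute, $\langle x, Wy\rangle = m_{12}(a_1 b_2 - a_2 b_1)$ using $m_{11} = m_{22} = 0$ and $m_{21} = -m_{12}$. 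Therefore $|\langle x, Wy\rangle| = |m_{12}|\,\sqrt{\|x\|^2\|y\|^2 - \langle x,y\rangle^2}$.

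It then remains to bound $|m_{12}|$. Since the Frobenius norm is invariant under the orthogonal change of basis, $\|W\|_F^2 = \sum_{i,j} m_{ij}^2 \geq m_{12}^2 + m_{21}^2 = 2 m_{12}^2$, whence $|m_{12}| \leq \tfrac{1}{\sqrt 2}\|W\|_F$. Combining this with the identity from the previous step yields exactly the claimed inequality.

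I do not expect a genuine obstacle here; the only point requiring a little care is the bookkeeping in the change of basis — specifically, verifying that $W$ remains skew-symmetric in the new orthonormal basis (which holds because the transition matrix is orthogonal, so $W \mapsto Q^t W Q$ preserves $W^t = -W$) and that the Frobenius norm is genuinely orthogonally invariant ($\|Q^t W Q\|_F = \|W\|_F$ via the trace formula $\|W\|_F^2 = \mathrm{Tr}(W^t W)$ and cyclicity of the trace). Once those two facts are in place, the computation collapses to the two-dimensional determinant identity above.
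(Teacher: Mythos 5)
Your proof is correct, but it takes a genuinely different route from the paper's. The paper argues directly in coordinates: it uses skew-symmetry to antisymmetrize the bilinear form as $\langle x,Wy\rangle = \tfrac12\sum_{i,j}[W]_{ij}\bigl([x]_i[y]_j-[y]_i[x]_j\bigr)$, then applies Cauchy--Schwarz over the $d^2$ index pairs and identifies $\sum_{i,j}([x]_i[y]_j-[y]_i[x]_j)^2 = 2\bigl(\|x\|^2\|y\|^2-\langle x,y\rangle^2\bigr)$. Your argument instead reduces everything to the two-dimensional plane $\mathrm{span}\{x,y\}$ via an orthonormal change of basis (under which both skew-symmetry and the Frobenius norm are preserved), so that the inequality becomes the one-line estimate $|m_{12}|\le\tfrac{1}{\sqrt2}\|W\|_F$. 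The paper's route is shorter and more mechanical, and it transfers verbatim to the hermitian setting needed in Lemma~4.1, where the relevant bilinear object lives on all of $\bbc^d$ rather than on a 2-plane. Your route is more geometric and makes the structure of the bound transparent: the factor $\tfrac{1}{\sqrt2}$ is just the conversion between a single off-diagonal entry of a $2\times2$ skew block and its Frobenius norm, and one can read off the equality case ($W$ supported entirely on the plane of $x$ and $y$). Both proofs are complete and sound; the choice is a matter of taste and of which generalization one has in view.
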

\begin{proof} 
 By direct calculations, one has 
\begin{align*}
\langle x, Wy\rangle&=\sum_{i, j}[x]_i[W]_{ij}[y]_j =\frac{1}{2}\sum_{i, j}\left([x]_i[W]_{ij}[y]_j+[x]_j[W]_{ji}[y]_i\right) \\
&=\frac{1}{2}\sum_{i, j}\left([x]_i[W]_{ij}[y]_j-[x]_j[W]_{ij}[y]_i\right) =\frac{1}{2}\sum_{i, j}[W]_{ij}([x]_i[y]_j-[y]_i[x]_j),
\end{align*}
where we used dummy variable exchange and skew-symmetry of matrix $W$.\newline

Then, it follows from the Cauchy-Schwarz inequality that 
\begin{align*}
|\langle x, Wy\rangle|^2&=\frac{1}{4}\left|\sum_{i, j}[W]_{ij}([x]_i[y]_j-[y]_i[x]_j)\right|^2 \leq \frac{1}{4} \left(\sum_{i, j}[W]_{ij}^2\right)\cdot\left(\sum_{i, j}([x]_i[y]_j-[y]_i[x]_j)^2\right)\\
&=\frac{1}{4}\|W\|_F^2 \sum_{i, j}([x]_i^2[y]_j^2+[y]_i^2[x]_j^2-2[x]_i[y]_j[y]_i[x]_j)\\
&=\frac{1}{2}\|W\|_F^2\cdot(\|x\|^2\|y\|^2-\langle x, y\rangle^2).
\end{align*}
This yields the desired estimate.
\end{proof}

\vspace{0.5cm}

For emergent dynamics, we introduce $\theta_{ij}$ measuring the angle between $x_i$ and $x_j$:
\begin{equation} \label{New-1}
\theta_{ij} :=\cos^{-1}\left(\langle x_i, x_j\rangle\right), \quad 1 \leq i, j \leq N.
\end{equation}
\begin{lemma} \label{L3.2}
Let $\{x_j \}$ be a solution to \eqref{C-0-0}. Then, $\theta_{ij}$ satisfies
\begin{align*}
\dot{\theta}_{ij} \leq  \frac{1}{2}\|\Omega_i-\Omega_j\|_F
 -\frac{\kappa}{N}\tan\left(\frac{\theta_{ij}}{2}\right)\sum_{k=1}^N \Big[ \cos\theta_{ik}+\cos\theta_{jk}-\frac{\|W\|_F}{\sqrt{2}}\Big (\sin\theta_{ik}+\sin\theta_{jk} \Big) \Big].
\end{align*}
\end{lemma}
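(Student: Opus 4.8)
The plan is to compute $\dot\theta_{ij}$ directly from the definition \eqref{New-1} by differentiating $\cos\theta_{ij} = \langle x_i, x_j\rangle$, so that
\[
-\sin\theta_{ij}\,\dot\theta_{ij} = \langle \dot x_i, x_j\rangle + \langle x_i, \dot x_j\rangle,
\]
and then substitute the right-hand side of \eqref{C-0-0} for $\dot x_i$ and $\dot x_j$. Because $\|x_i\| = \|x_j\| = 1$ (which is the real-vector analogue of Lemma \ref{L2.1}, and should be invoked here), the quadratic terms $\langle x_i, x_i\rangle = 1$ simplify, and I expect the natural-frequency contribution to reduce to $\langle \Omega_i x_i, x_j\rangle + \langle x_i, \Omega_j x_j\rangle$, which when $i=j$ would vanish by skew-symmetry but in general is controlled by $\|\Omega_i - \Omega_j\|_F$ after the division by $\sin\theta_{ij}$; the factor $\tfrac12$ in the statement strongly suggests using $\sin\theta_{ij} = 2\sin(\theta_{ij}/2)\cos(\theta_{ij}/2)$ and a half-angle manipulation. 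The coupling terms from \eqref{C-0-0}, after pairing $\langle x_i, x_k\rangle = \cos\theta_{ik}$ and using $\langle V x_c, x_i\rangle$ with $V = I_{d+1} + W$, should split into an identity part producing the $\cos\theta_{ik} + \cos\theta_{jk}$ terms and a $W$-part producing the terms bounded via Lemma \ref{L3.1}.

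The key steps in order: (i) differentiate $\cos\theta_{ij}$ and substitute \eqref{C-0-0}; (ii) use $\|x_i\|=1$ to kill the $\langle x_i,x_i\rangle$ factors and expand the centroid $z_c$ (here $x_c$) as $\frac1N\sum_k x_k$ so that everything is written in terms of pairwise inner products $\langle x_i,x_k\rangle$, $\langle x_j,x_k\rangle$, $\langle x_i, W x_k\rangle$, $\langle x_j, W x_k\rangle$; (iii) collect the identity-matrix contribution, which should give, after dividing by $-\sin\theta_{ij}$, a term proportional to $\frac{\kappa}{N}\sum_k \frac{(\text{combination of }\cos\theta_{ik}, \cos\theta_{jk}, \cos\theta_{ij})}{\sin\theta_{ij}}$, and simplify it using half-angle identities to produce the factor $\tan(\theta_{ij}/2)\big(\cos\theta_{ik}+\cos\theta_{jk}\big)$ — this is essentially the computation already known for the frustration-free Lohe sphere model; (iv) bound the $W$-contribution: each term $\langle x_i, W x_k\rangle$ is estimated by Lemma \ref{L3.1} as $\tfrac{1}{\sqrt2}\|W\|_F\sqrt{1 - \langle x_i,x_k\rangle^2} = \tfrac{1}{\sqrt2}\|W\|_F\,|\sin\theta_{ik}|$, and similarly for the $j$-index, which after the same division produces the $-\tan(\theta_{ij}/2)\cdot\tfrac{\|W\|_F}{\sqrt2}(\sin\theta_{ik}+\sin\theta_{jk})$ term (with the correct sign once one checks that $W$-terms can only help the decay in the worst case, i.e. the absolute-value bound is inserted with the sign that makes the inequality an upper bound for $\dot\theta_{ij}$); (v) handle the natural-frequency term by $|\langle (\Omega_i - \Omega_j) x_i, x_j\rangle| \le \|\Omega_i-\Omega_j\|_F$ combined with the fact that $\langle \Omega_j x_i, x_j\rangle + \langle x_i, \Omega_j x_j\rangle = 0$, leaving only a cross difference that divides cleanly to give $\tfrac12\|\Omega_i-\Omega_j\|_F$.

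The main obstacle is step (iii)–(iv): carefully bookkeeping the division by $\sin\theta_{ij}$ and converting the raw trigonometric expression $\cos\theta_{ik} + \cos\theta_{jk} - (1+\cos\theta_{ij})\langle\,\cdot\,\rangle$-type combinations into the clean half-angle form $\tan(\theta_{ij}/2)\big(\cos\theta_{ik}+\cos\theta_{jk}\big)$, and making sure the sign of the $W$-term is the one that yields a genuine upper bound rather than just an estimate of the magnitude. One must also be slightly careful that $\dot\theta_{ij}$ is only defined away from $\theta_{ij}=0$ (where the inequality is trivially handled by continuity / one-sided derivatives) and $\theta_{ij}=\pi$; within the regime $\theta_{ij}<\pi/2$ that the subsequent theorems use, $\sin\theta_{ij}>0$ and $\tan(\theta_{ij}/2)>0$, so the division is legitimate. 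Everything else is routine expansion of inner products using $V_0 = I_{d+1} + W$ and the unit-norm constraint.
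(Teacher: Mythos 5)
Your overall outline (differentiate $\cos\theta_{ij}=\langle x_i,x_j\rangle$, split the coupling into an $I_{d+1}$-part and a $W$-part, convert $\tfrac{1-\cos\theta_{ij}}{\sin\theta_{ij}}$ to $\tan(\theta_{ij}/2)$, then bound the $W$-terms by Lemma~\ref{L3.1}) matches the paper's computation. However, step~(v), the treatment of the natural-frequency term, contains a genuine gap. After the substitution and simplification one is left with
\[
-\,\frac{\langle(\Omega_i-\Omega_j)x_i,x_j\rangle}{\sin\theta_{ij}},
\]
and the crude bound $|\langle(\Omega_i-\Omega_j)x_i,x_j\rangle|\le\|\Omega_i-\Omega_j\|_F$ that you propose gives $\|\Omega_i-\Omega_j\|_F/\sin\theta_{ij}$, which blows up as $\theta_{ij}\to 0$; no half-angle manipulation rescues this, because there is no $\sin\theta_{ij}$ factor in that bound to cancel against the denominator. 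The point you are missing is that Lemma~\ref{L3.1} must be applied to \emph{this} term as well, with $A=\Omega_i-\Omega_j$ (which is skew-symmetric) in the role of $W$: it yields
\[
\bigl|\langle(\Omega_i-\Omega_j)x_i,x_j\rangle\bigr|\le \tfrac{1}{\sqrt2}\,\|\Omega_i-\Omega_j\|_F\,\sqrt{1-\langle x_i,x_j\rangle^2}=\tfrac{1}{\sqrt2}\,\|\Omega_i-\Omega_j\|_F\,\sin\theta_{ij},
\]
so that the $\sin\theta_{ij}$ cancels exactly and a finite constant remains. In other words, Lemma~\ref{L3.1} is the single tool used for both the $\Omega$-contribution and the $W$-contribution, not only for the latter; your proposal invokes it only for the $W$-part. (As a side remark, this argument produces the constant $\tfrac{1}{\sqrt2}$ rather than $\tfrac12$; the $\tfrac12$ in the lemma's display appears to be a transcription of $\tfrac{1}{\sqrt 2}$, and your attempt to reverse-engineer a genuine $\tfrac12$ from a half-angle identity is not how the bound arises.)

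Your step~(ii) claim that $\|x_j\|=1$ is the "real-vector analogue of Lemma~\ref{L2.1}" is fine but unnecessary as a separate appeal here, since the $\langle z_j,z_j\rangle$ factor simply multiplies by $1$ once the norm is conserved; and your remark in step~(iv) that "the $W$-terms can only help the decay" is not how the sign is handled — the proof simply inserts the absolute-value bound with the sign that produces an upper bound for $\dot\theta_{ij}$, without any monotonicity claim about the $W$-terms.
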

\begin{proof}
It follows from \eqref{C-0-0} that 
\begin{align*}
\frac{d}{dt}\langle x_i, x_j\rangle &=\left\langle \Omega_ix_i+\frac{\kappa}{N}\sum_{k=1}^N(V_0x_k-\langle V_0x_k, x_i\rangle x_i), x_j\right\rangle \\
&+\left\langle x_i, \Omega_jx_j+\frac{\kappa}{N}\sum_{k=1}^N(V_0x_k-\langle V_0x_k, x_j\rangle x_j)\right\rangle\\
&=\langle \Omega_ix_i, x_j\rangle+\langle x_i, \Omega_j x_j\rangle+\frac{\kappa}{N}\sum_{k=1}^N\big(\langle V_0x_k, x_j\rangle+\langle x_i, V_0x_k\rangle\big)(1-\langle x_i, x_j\rangle).
\end{align*}
Now, we use the skew-symmetry of $\Omega_i, \Omega_j$ and $W$ to find
\begin{align*}
\begin{aligned}
\frac{d}{dt}\langle x_i, x_j\rangle
&=\langle (\Omega_i-\Omega_j)x_i, x_j\rangle+(1-\langle x_i, x_j\rangle)\cdot \frac{\kappa}{N}\sum_{k=1}^N\langle x_k, x_i+x_j\rangle \\
& +(1-\langle x_i, x_j\rangle)\cdot\frac{\kappa}{N}\sum_{k=1}^N\langle 
Wx_k, x_i+x_j\rangle.
\end{aligned}
\end{align*}
We use the defining relation \eqref{New-1} for $\theta_{ij}$ to obtain
\begin{align*}
\begin{aligned}
\frac{d}{dt}\cos\theta_{ij} &= \Big \langle (\Omega_i-\Omega_j)x_i, x_j \Big \rangle+\frac{\kappa}{N}(1-\cos\theta_{ij})\sum_{k=1}^N(\cos\theta_{ik}+\cos\theta_{jk}) \\
&\hspace{0.5cm} +\frac{\kappa}{N}(1-\cos\theta_{ij})\sum_{k=1}^N \Big(\langle Wx_k, x_i\rangle+\langle Wx_k, x_j\rangle \Big),
\end{aligned}
\end{align*}
or equivalently
\begin{align}
\begin{aligned}\label{C-0}
\dot{\theta}_{ij}&=-\underbrace{\frac{\langle(\Omega_i-\Omega_j)x_i, x_j\rangle}{\sin\theta_{ij}}}_{=:\mathcal{I}_{11}}-\underbrace{\frac{\kappa}{N}\frac{1-\cos\theta_{ij}}{\sin\theta_{ij}}\sum_{k=1}^N(\cos\theta_{ik}+\cos\theta_{jk})}_{=:\mathcal{I}_{12}}\\
&\hspace{3cm}-\underbrace{\frac{\kappa}{N}\frac{1-\cos\theta_{ij}}{\sin\theta_{ij}}\sum_{k=1}^N(\langle Wx_k, x_i\rangle+\langle Wx_k, x_j\rangle)}_{=:\mathcal{I}_{13}}.
\end{aligned}
\end{align}
\vspace{0.2cm}

\noindent $\bullet$~(Estimate of ${\mathcal I}_{11}$): We use Lemma \ref{L3.1} with $A:= \Omega_i - \Omega_j$ which is a skew-symmeric to obtain
\[
|\langle (\Omega_i - \Omega_j) x_i, x_j\rangle|\leq \frac{1}{\sqrt{2}}\|\Omega_i - \Omega_j \|_F\sqrt{\|x_i\|^2\|x_j\|^2-\langle x_i, x_j\rangle^2}=\frac{1}{\sqrt{2}}\|\Omega_i - \Omega_j \|_F\sin\theta_{ij}.
\]
Then we use the above estimate to derive 
\begin{equation} \label{New-B}
|\mathcal{I}_{11}|\leq \frac{1}{\sqrt{2}}\|\Omega_i-\Omega_j\|_F.
\end{equation}

\vspace{0.2cm}

\noindent $\bullet$~(Estimate of ${\mathcal I}_{12}$ and ${\mathcal I}_{13}$): Similarly, one has 
\begin{align}
\begin{aligned}  \label{New-C}
|\mathcal{I}_{12}| &\leq\frac{\kappa}{N}\tan\left(\frac{\theta_{ij}}{2}\right)\sum_{k=1}^N(\cos\theta_{ik}+\cos\theta_{jk}), \\
|\mathcal{I}_{13}| &\leq \frac{\kappa\|W\|_F}{\sqrt{2}N}\tan\left(\frac{\theta_{ij}}{2}\right)\sum_{k=1}^N(\sin\theta_{ik}+\sin\theta_{jk}).
\end{aligned}
\end{align}
In \eqref{C-0}, we combine all the estimates \eqref{New-B} and \eqref{New-C} to get 
\begin{align}
\begin{aligned}\label{C-0-1}
\dot{\theta}_{ij}&\leq \frac{1}{2}\|\Omega_i-\Omega_j\|_F-\frac{\kappa}{N}\tan\left(\frac{\theta_{ij}}{2}\right)\sum_{k=1}^N(\cos\theta_{ik}+\cos\theta_{jk}) \\
& +\frac{\kappa\|W\|_F}{\sqrt{2}N}\tan\left(\frac{\theta_{ij}}{2}\right)\sum_{k=1}^N(\sin\theta_{ik}+\sin\theta_{jk})\\
&=\frac{1}{2}\|\Omega_i-\Omega_j\|_F-\frac{\kappa}{N}\tan\left(\frac{\theta_{ij}}{2}\right)\sum_{k=1}^N \Big[ \cos\theta_{ik}+\cos\theta_{jk}-\frac{\|W\|_F}{\sqrt{2}} \Big(\sin\theta_{ik}+\sin\theta_{jk} \Big) \Big].
\end{aligned}
\end{align}
\end{proof}
Now, we are ready to provide our first main result on the aggregation of pairwise particles. 
\begin{proposition}\label{P3.1}
\emph{(Complete aggregation)}
Suppose the initial data $\Theta^{in}$ satisfy
\begin{align} \label{C-0-2}
\theta_{ij}^{in} < \cot^{-1} \left( \frac{\|W\|_F}{\sqrt{2}} \right), \quad \forall i, j\in\{1, 2, \cdots, N\},
\end{align}
and let $\{x_i\}$ be the solution of \eqref{C-0} with the same free flow $\Omega_j \equiv \Omega$. Then  there exists a positive constant $\Lambda_{ij} = \Lambda_{ij}(N, \kappa, W, \Theta^{in})$ such that 
\[
\theta_{ij}(t)\leq \theta_{ij}^{in}\exp\left(-\Lambda_{ij}t\right), \quad t \geq 0.
\]
\end{proposition}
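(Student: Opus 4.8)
The plan is to work with the inequality for $\dot\theta_{ij}$ established in Lemma \ref{L3.2} (with $\Omega_i=\Omega_j=\Omega$, so the leading term drops) and to control the right-hand side uniformly in terms of the maximal angle $D(\Theta):=\max_{i,j}\theta_{ij}$. First I would observe that the hypothesis \eqref{C-0-2} is equivalent to $\cos\theta_{ij}^{in} - \frac{\|W\|_F}{\sqrt 2}\sin\theta_{ij}^{in} > 0$ for all $i,j$; equivalently, writing $c:=\cot^{-1}(\|W\|_F/\sqrt 2)\in(0,\pi/2]$, the condition is $D(\Theta^{in})<c$. The key monotonicity step is to show this region is forward invariant: if $\theta_{ik},\theta_{jk}\le D(\Theta)<c$ for all $k$, then each bracketed summand $\cos\theta_{ik}+\cos\theta_{jk}-\frac{\|W\|_F}{\sqrt 2}(\sin\theta_{ik}+\sin\theta_{jk})$ is strictly positive, since $t\mapsto \cos t - \frac{\|W\|_F}{\sqrt 2}\sin t$ is decreasing on $[0,\pi/2]$ and positive precisely on $[0,c)$. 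Hence for the pair $(i^*,j^*)$ realizing the maximum, $\dot\theta_{i^*j^*}\le 0$, so $D(\Theta(t))$ is non-increasing; in particular $D(\Theta(t))\le D(\Theta^{in})=:D^0<c$ for all $t\ge 0$.

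Next I would extract the exponential rate. With $D(\Theta(t))\le D^0$ for all $t$, every angle satisfies $\theta_{ik},\theta_{jk}\in[0,D^0]$, so on $[0,D^0]$ we have the lower bounds $\cos\theta_{ik}+\cos\theta_{jk}\ge 2\cos D^0$ and $\sin\theta_{ik}+\sin\theta_{jk}\ge 0$ (and $\le 2\sin D^0$), giving
\[
\cos\theta_{ik}+\cos\theta_{jk}-\frac{\|W\|_F}{\sqrt 2}\Big(\sin\theta_{ik}+\sin\theta_{jk}\Big)\ \ge\ 2\Big(\cos D^0-\frac{\|W\|_F}{\sqrt 2}\sin D^0\Big)\ =:\ 2\delta\ >\ 0.
\]
Summing over $k$ and using $\tan(\theta_{ij}/2)\ge \theta_{ij}/2$ (valid since $\theta_{ij}\in[0,\pi/2)$), Lemma \ref{L3.2} yields
\[
\dot\theta_{ij}\ \le\ -2\kappa\delta\,\tan\!\Big(\frac{\theta_{ij}}{2}\Big)\ \le\ -\kappa\delta\,\theta_{ij},\qquad t\ge 0,
\]
and Grönwall's inequality gives $\theta_{ij}(t)\le \theta_{ij}^{in}e^{-\kappa\delta t}$, i.e. the claim with $\Lambda_{ij}=\kappa\delta$ depending on $\kappa$, $\|W\|_F$, and $D(\Theta^{in})$ (hence on $\Theta^{in}$); the $N$-dependence is vacuous here but could be retained by a less wasteful estimate. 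One technical point to handle carefully: $D(\Theta(t))$ and individual $\theta_{ij}(t)$ are only Lipschitz, not $C^1$, because of extremal-index switching and because $\theta_{ij}=\cos^{-1}\langle x_i,x_j\rangle$ degenerates when $\theta_{ij}=0$ or $\pi$; so the differential inequalities should be read in the sense of upper Dini derivatives, and the Grönwall step applied to the a.e.-differentiable Lipschitz function $\theta_{ij}(t)$, with the invariance argument phrased via $D^+D(\Theta(t))\le 0$.

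The main obstacle is the forward-invariance/bootstrap structure: the decay estimate for a fixed pair $(i,j)$ needs the uniform bound $\max_k(\theta_{ik},\theta_{jk})\le D^0<c$, which in turn needs $D(\Theta(t))$ non-increasing, which is itself an argument about the extremal pair using the same inequality. I would make this rigorous by first running the invariance argument for $D(\Theta)$ alone — at a time where $D(\Theta(t))=\theta_{i^*j^*}(t)$, all other angles are $\le D(\Theta(t))<c$, so the bracket is positive and $D^+D(\Theta(t))\le 0$ — and only afterwards deriving the per-pair exponential bound from the now-established global bound $D(\Theta(t))\le D^0$. A minor subtlety is ensuring $\delta>0$ is strict: this is exactly where the \emph{strict} inequality in \eqref{C-0-2} is used, since $D^0<c$ forces $\cos D^0-\frac{\|W\|_F}{\sqrt 2}\sin D^0>0$.
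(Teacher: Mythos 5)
Your proof is correct and follows essentially the same route as the paper: apply Lemma~\ref{L3.2} with $\Omega_j\equiv\Omega$, exploit that $f(\theta)=\cos\theta-\tfrac{\|W\|_F}{\sqrt2}\sin\theta$ is decreasing and positive on $[0,c)$ to establish forward invariance, then use $\tan(\theta/2)\ge\theta/2$ and Gr\"onwall. The only cosmetic differences are that the paper runs the invariance argument per-pair (so each $\theta_{ij}(t)\le\theta_{ij}^{in}$), which yields a pair-dependent rate $\Lambda_{ij}=\tfrac{\kappa}{2N}\sum_k\big(\cos\theta_{ik}^{in}+\cos\theta_{jk}^{in}-\tfrac{\|W\|_F}{\sqrt2}(\sin\theta_{ik}^{in}+\sin\theta_{jk}^{in})\big)$, whereas you bound everything by the initial diameter $D^0$ and get a single uniform rate $\kappa\delta$; you also make explicit the Dini-derivative/bootstrap technicalities that the paper leaves implicit, which is a welcome clarification but not a different method.
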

\begin{proof}
We use $\Omega_j = \Omega$ and  Lemma \ref{L3.2} to find
\begin{align*}
\dot{\theta}_{ij} \leq-\frac{\kappa}{N}\tan\left(\frac{\theta_{ij}}{2}\right)\sum_{k=1}^N\left(\cos\theta_{ik}+\cos\theta_{jk}-\frac{\|W\|_F}{\sqrt{2}}(\sin\theta_{ik}+\sin\theta_{jk})\right).
\end{align*}
On the other hand, it follows from \eqref{C-0-2} that 
\[
\theta_{ij}^{in} <\cot^{-1}\left(\frac{\|W\|_F}{\sqrt{2}}\right)\quad\Longrightarrow\quad \cos\theta_{ij}^{in}-\frac{\|W\|_F}{\sqrt{2}}\sin\theta_{ij}^{in} >0.
\]
This implies 
\[  \dot{\theta}_{ij} \Big|_{t = 0+} \leq 0, \quad \forall~i, j\in \{1, 2, \cdots, N\}. \]
Note that $f(\theta)=\cos\theta-\displaystyle\frac{\|W\|_F}{\sqrt{2}}\sin\theta$ is a decreasing function for $\theta\in[0, \pi]$. Thus, it is easy to see
\begin{equation} \label{C-0-2-1}
\dot{\theta}_{ij}\leq -\frac{\kappa}{N}\tan\left(\frac{\theta_{ij}}{2}\right)\sum_{k=1}^N\left(\cos\theta_{ik}^{in} +\cos\theta_{jk}^{in} -\frac{\|W\|_F}{\sqrt{2}}(\sin\theta_{ik}^{in} +\sin\theta_{jk}^{in})\right).
\end{equation}
Then, we use \eqref{C-0-2-1} and the relation $\tan\theta\geq\theta, \quad \theta \in [0, \pi/2)$ to find 
\[
\dot{\theta}_{ij}\leq -\frac{\kappa}{2N}\theta_{ij}\sum_{k=1}^N\left(\cos\theta_{ik}^{in}+\cos\theta_{jk}^{in} -\frac{\|W\|_F}{\sqrt{2}}(\sin\theta_{ik}^{in}+\sin\theta_{jk}^{in})\right)=: -\Lambda_{ij}\theta_{ij},
\]
which implies the desired estimate.
\end{proof}
Next, we consider the heterogenous ensemble in the sense that there exists $i \not = j$ such that 
\[ \Omega_i \not = \Omega_j. \]
In the following theorem, we consider the second main result on the emergence of practical aggregation. 
\begin{theorem}\label{T3.1}
\emph{(Practical aggregation)}
Let $\{x_i\}$ be the solution of \eqref{C-0} with initial data $\{x_i^{in}\}:$
\[
\max_{i, j} \Big( \sin\theta_{ij}^{in} \Big)  <\frac{1}{2+\sqrt{2}\|W\|_F} \quad \mbox{and} \quad \max_{i,j} \theta_{ij}^{in} < \frac{\pi}{2}.
\]
Then, we have
\[
\sup_{t\geq0}\Big( \max_{i,j}\theta_{ij} \Big) \leq \frac{\pi}{2} \quad \mbox{and} \quad \lim_{\kappa\to\infty}\limsup_{t\to\infty} \max_{i, j} \Big(\sin\theta_{ij}(t) \Big)=0.
\]

\end{theorem}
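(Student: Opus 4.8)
The plan is a standard ``contracting diameter plus Gr\"onwall'' argument applied to the scalar functional
\[
D_\theta(t) := \max_{1\le i,j\le N}\theta_{ij}(t),
\]
together with the differential inequality of Lemma \ref{L3.2}. First I would record the routine facts: $D_\theta$ is Lipschitz (a finite maximum of Lipschitz functions), hence differentiable for a.e.\ $t$, and at such $t$ there is an extremal pair $(M,N)$ with $D_\theta(t)=\theta_{MN}(t)$ and $\dot D_\theta(t)=\dot\theta_{MN}(t)$. I would also rephrase the hypothesis in terms of $D_\theta(0)$: since $\sin$ is increasing on $[0,\pi/2]$, the assumption gives $D_\theta(0) < \theta^\ast := \arcsin\big(\tfrac{1}{2+\sqrt2\,\|W\|_F}\big)$, and a short computation (squaring) shows $\theta^\ast < \cot^{-1}\big(\tfrac{\|W\|_F}{\sqrt2}\big) < \tfrac{\pi}{2}$, because $\tfrac{1}{2+\sqrt2\|W\|_F}\le \tfrac{\sqrt2}{\sqrt{2+\|W\|_F^2}}=\sin\cot^{-1}\big(\tfrac{\|W\|_F}{\sqrt2}\big)$. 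Thus the initial configuration lies strictly inside the cone on which the frustrated restoring function $f(\theta):=\cos\theta-\tfrac{\|W\|_F}{\sqrt2}\sin\theta$ is positive.

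Second, I would close a scalar differential inequality for $D_\theta$. As long as $D_\theta < \cot^{-1}(\|W\|_F/\sqrt2)$, every auxiliary angle satisfies $\theta_{Mk},\theta_{Nk}\le D_\theta<\pi/2$, and since $f$ is decreasing on $[0,\pi/2]$ each bracket in Lemma \ref{L3.2} obeys
\[
\cos\theta_{Mk}+\cos\theta_{Nk}-\tfrac{\|W\|_F}{\sqrt2}\big(\sin\theta_{Mk}+\sin\theta_{Nk}\big)\ \ge\ 2f(D_\theta)\ \ge\ 0 .
\]
Inserting this into Lemma \ref{L3.2}, using $\tan(x/2)\ge x/2$ on $[0,\pi/2)$, and $f(\theta)\ge f(\theta^\ast)$ for $0\le\theta\le\theta^\ast$, I obtain, with $D(\Omega):=\max_{i,j}\|\Omega_i-\Omega_j\|_F$,
\[
\dot D_\theta\ \le\ \tfrac12\,D(\Omega)\ -\ \kappa\, f(\theta^\ast)\, D_\theta
\qquad \text{whenever } D_\theta\le\theta^\ast .
\]

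Third comes the conclusion. Taking $\kappa$ large enough that $\kappa f(\theta^\ast)\,\theta^\ast>\tfrac12 D(\Omega)$, the right-hand side above is strictly negative on $\{D_\theta=\theta^\ast\}$; since $D_\theta(0)<\theta^\ast$, the usual first-crossing argument shows $\{D_\theta\le\theta^\ast\}$ is forward invariant, so $D_\theta(t)\le\theta^\ast<\pi/2$ for all $t\ge0$ — the first assertion — and because the whole ensemble then stays in $[0,\pi/2]$ we have $\max_{i,j}\sin\theta_{ij}(t)=\sin D_\theta(t)$. On this invariant set the displayed linear inequality and a comparison with the scalar ODE $\dot y=\tfrac12 D(\Omega)-\kappa f(\theta^\ast)y$ give $\limsup_{t\to\infty}D_\theta(t)\le \frac{D(\Omega)}{2\kappa f(\theta^\ast)}$, whence $\limsup_{t\to\infty}\max_{i,j}\sin\theta_{ij}(t)\le \sin\!\big(\tfrac{D(\Omega)}{2\kappa f(\theta^\ast)}\big)\to 0$ as $\kappa\to\infty$.

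The main obstacle — and the source of the particular constant $\tfrac{1}{2+\sqrt2\|W\|_F}$ — is the forward invariance: since $f$ changes sign at $\cot^{-1}(\|W\|_F/\sqrt2)<\pi/2$ and degenerates to $0$ there, Lemma \ref{L3.2} produces a genuine restoring term only while $D_\theta$ stays inside that cone, so one must start well inside it and take $\kappa$ beyond a threshold depending on $D(\Omega)$ and $\|W\|_F$. This largeness of $\kappa$ is really needed (for small $\kappa$ and heterogeneous $\Omega_j$ even two particles can drift past $\pi/2$), so I read the ``$\sup_{t\ge0}$'' bound as asserted in that large-$\kappa$ regime, consistently with the $\kappa\to\infty$ limit in the second claim. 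The rest is bookkeeping: the extremal-pair/a.e.-differentiability remark, and the elementary chain $\theta^\ast<\cot^{-1}(\|W\|_F/\sqrt2)$, $\tan(x/2)\ge x/2$, and $f\downarrow$, $\sin\uparrow$ on $[0,\pi/2]$.
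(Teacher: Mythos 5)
Your proposal is correct, and it takes a genuinely different route from the paper's. The paper's proof changes variables to $s := \sin(\theta_M/2)$, multiplies the Lemma \ref{L3.2} inequality by $\tfrac12\cos(\theta_M/2)$ to obtain a quadratic-in-$s$ differential inequality $\dot s \le \tfrac14\mathcal D(\Omega) - \kappa s\big(2-(4+2\sqrt2\|W\|_F)s\big)$, and then analyzes the two real roots $s_1<s_2$ of the associated quadratic $p(s)$: forward invariance of $\{s\le s_2\}$ gives the $\pi/2$ bound, $\limsup_t s \le s_1$ gives the ultimate estimate, and $s_1\to0$, $s_2\to\tfrac{1}{2+\sqrt2\|W\|_F}$ as $\kappa\to\infty$ close the argument. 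You instead keep $D_\theta=\theta_M$ as the scalar unknown, freeze the decreasing restoring factor $f(\theta)=\cos\theta-\tfrac{\|W\|_F}{\sqrt2}\sin\theta$ at its boundary value $f(\theta^\ast)$ on the invariant cone $\{D_\theta\le\theta^\ast\}$, and reduce to a linear comparison ODE $\dot y = \tfrac12\mathcal D(\Omega)-\kappa f(\theta^\ast)y$. This avoids the change of variable and the explicit quadratic root formulas, and it makes the invariance threshold completely transparent (one just needs $\kappa f(\theta^\ast)\theta^\ast > \tfrac12\mathcal D(\Omega)$); it costs a slightly different (but comparable, still $O(\kappa^{-1})$) explicit asymptotic constant, which is irrelevant for the stated $\kappa\to\infty$ conclusion. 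Your observation that the $\sup_{t\ge0}$ bound is really a large-$\kappa$ statement is accurate and matches what the paper's own proof actually establishes: the invariance region there is $\{s\le s_2\}$ with $s_2$ increasing to $\tfrac{1}{2+\sqrt2\|W\|_F}$ only as $\kappa\to\infty$, so the initial-data hypothesis guarantees $s(0)<s_2$ only for $\kappa$ beyond a threshold depending on $\mathcal D(\Omega)$ and $\|W\|_F$.
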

\begin{proof} For a distributed set $\{\Omega_j \}$ of natural frequency matrices,  we set
\[
\mathcal{D}(\Omega) :=\max_{i, j}\|\Omega_i-\Omega_j\|_F.
\]
Then, it follows from Lemma \ref{L3.2} that 
\begin{align*}
\dot{\theta}_{ij} \leq\frac{1}{2}\mathcal{D}(\Omega)-\frac{\kappa}{N}\tan \Big( \frac{\theta_{ij}}{2} \Big) \sum_{k=1}^N\Big[ \cos\theta_{ik}+\cos\theta_{jk}-\frac{\|W\|_F}{\sqrt{2}}\Big(\sin\theta_{ik}+\sin\theta_{jk} \Big) \Big].
\end{align*}
Now we set 
\[ \theta_M(t) :=\max_{i, j}\theta_{ij}(t), \quad {\mathcal T}:= \{t_0~:~\max_{i,j} \theta_{ij}(t) < \frac{\pi}{2}, \quad \forall~t \in [0, t_0) \}, \quad T^{\infty} := \sup {\mathcal T}.
\]
Then we have
\begin{align*}
\dot{\theta}_M&\leq \frac{1}{2}\mathcal{D}(\Omega)-\frac{\kappa}{N}\tan\left(\frac{\theta_M}{2}\right)\sum_{k=1}^N\Big [ \cos\theta_{ik}+\cos\theta_{jk}-\frac{\|W\|_F}{\sqrt{2}}\Big(\sin\theta_{ik}+\sin\theta_{jk} \Big) \Big] \\
&\leq\frac{1}{2}\mathcal{D}(\Omega)-\kappa\tan\left(\frac{\theta_M}{2}\right)\left(2\cos\theta_{M}-\sqrt{2}\|W\|_F\sin\theta_{M}\right), \quad \mbox{a.e.}~t\in[0, T^\infty).
\end{align*}
This yields,
\begin{align}
\begin{aligned} \label{New-4}
&\frac{1}{2}\cos\left(\frac{\theta_M}{2}\right)\dot{\theta}_M \leq \frac{1}{4}\mathcal{D}(\Omega)\cos\left(\frac{\theta_M}{2}\right) \\
& \hspace{0.5cm} -\kappa\sin\left(\frac{\theta_M}{2}\right)\Big[ 2\left(1-2\sin^2\left(\frac{\theta_M}{2}\right)\right)-2\sqrt{2}\|W\|_F\sin\left(\frac{\theta_M}{2}\right)\cos\left(\frac{\theta_M}{2}\right) \Big]\\
& \hspace{0.5cm} \leq \frac{1}{4}\mathcal{D}(\Omega)-\kappa\sin\left(\frac{\theta_M}{2}\right)\Big[ 2\left(1-2\sin\left(\frac{\theta_M}{2}\right)\right)-2\sqrt{2}\|W\|_F\sin\left(\frac{\theta_M}{2}\right) \Big] \\
& \hspace{0.5cm} \leq \frac{1}{4}\mathcal{D}(\Omega)-\kappa\sin\left(\frac{\theta_M}{2}\right)\Big[ 2-(4+2\sqrt{2}\|W\|_F)\sin\left(\frac{\theta_M}{2}\right) \Big], \quad \mbox{a.e.}~t\in[0, T^\infty).
\end{aligned}
\end{align}
To simplify \eqref{New-4} further, we set 
\[ s :=\sin\left(\frac{\theta_M}{2}\right). \]
Then, we have
\begin{align}\label{C-0-3}
\dot{s}\leq \frac{1}{4}\mathcal{D}(\Omega)-\kappa s\Big(2-(4+2\sqrt{2}\|W\|_F)s\Big), \quad  \quad 0 \leq t < T^{\infty}.
\end{align}

For a practical aggregation estimate, we uses a similar argument in \cite{H-N-P}. We define the following quadratic polynomial:
\[
p(s)=\frac{1}{4}\mathcal{D}(\Omega)-\kappa s(2-(4+2\sqrt{2}\|W\|_F)s)=(4+2\sqrt{2}\|W\|_F)\kappa s^2-2\kappa s+\frac{1}{4}\mathcal{D}(\Omega).
\]
Then the discriminant $D$ of above quadratic polynomial is
\[
D=4\kappa^2-\mathcal{D}(\Omega)\kappa(4+2\sqrt{2}\|W\|_F)=\kappa(4\kappa-\mathcal{D}(\Omega)(4+2\sqrt{2}\|W\|_F)).
\]
If
\[
\kappa>\frac{\mathcal{D}(\Omega)(2+2\sqrt{2}\|W\|_F)}{4},
\]
we have two distinct roots $s_1<s_2$:
\begin{align*}
\begin{aligned}
s_1 &=\frac{2\kappa-\sqrt{4\kappa^2-\kappa(4+2\sqrt{2}\|W\|_F)\mathcal{D}(\Omega)}}{2(4+2\sqrt{2}\|W\|_F)\kappa}, \\
s_2 &=\frac{2\kappa+\sqrt{4\kappa^2-\kappa(4+2\sqrt{2}\|W\|_F)\mathcal{D}(\Omega)}}{2(4+2\sqrt{2}\|W\|_F)\kappa}.
\end{aligned}
\end{align*}
Moreover, since the coefficient of $s^2$ and $p(0)$ are positive, one has 
\[ 0<s_1. \]
If $s(t)$ is the solution of \eqref{C-0-3} with initial data $s(0)<s_2$, we have
\[
T^\infty = \infty \quad \mbox{and} \quad \limsup_{t\to\infty}s(t)\leq s_1.
\]
Explicit formula of $s_1$ and $s_2$ provide us 
\[
\lim_{\kappa\to\infty}s_1=0 \quad \mbox{and} \quad \lim_{\kappa\to\infty}s_2=\frac{1}{2+\sqrt{2}\|W\|_F},
\]
so that we can conclude that if $s(0)<\displaystyle\frac{1}{2+\sqrt{2}\|W\|_F}$ we have
\[
\lim_{\kappa\to\infty}\limsup_{t\to\infty}s(t)=0.
\]
\end{proof}

\section{Emergent dynamics of the LS model: complex vector case} \label{sec:4}
\setcounter{equation}{0}
In this section, we study emergent dynamics of the complex LS model. First, we consider the case in which the second coupling is not present, i.e., $\kappa_1 = 0$, i.e.,
\begin{equation*} \label{D-0}
 \dot{z}_j=\Omega_j z_j+\kappa_0\Big(\langle{z_j, z_j}\rangle V_0 z_c-\langle{V_0 z_c, z_j}\rangle z_j \Big). 
\end{equation*} 
For this, we first generalize the result of Lemma \ref{L3.1} as follows.
\begin{lemma}\label{L4.1}
Let $x, y\in\bbc^d$ be complex vectors, and $W \in \bbc^{d \times d}$ be a skew-hermitian matrix i.e., $W^{\dagger} = -W$. Then, one has
\[
\Big|\langle Wx, y\rangle+\langle y, Wx\rangle \Big|\leq \sqrt{2}\|W\|_F\sqrt{\|x\|^2\|y\|^2-\mathrm{Re}(\langle x, y\rangle^2)}.
\]
\end{lemma}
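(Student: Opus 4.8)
The plan is to reduce the complex bound to the real bound of Lemma~\ref{L3.1} by passing to real and imaginary parts. Write $W = A + \mathrm{i}B$ with $A, B \in \bbr^{d\times d}$; the skew-hermitian condition $W^\dagger = -W$ forces $A^t = -A$ (real skew-symmetric) and $B^t = B$ (real symmetric). Similarly write $x = u_1 + \mathrm{i}u_2$ and $y = v_1 + \mathrm{i}v_2$ with $u_1, u_2, v_1, v_2 \in \bbr^d$. The first step is to expand $\langle Wx, y\rangle + \langle y, Wx\rangle = 2\,\mathrm{Re}\,\langle Wx, y\rangle$ and collect real terms. A direct expansion gives
\[
\mathrm{Re}\,\langle Wx, y\rangle = \langle Au_1, v_1\rangle + \langle Au_2, v_2\rangle + \langle Bu_2, v_1\rangle - \langle Bu_1, v_2\rangle,
\]
where I have used that the symmetric part $B$ contributes only through the cross terms while $A$ contributes through the diagonal terms (a short check using $\langle Au_i, u_i\rangle = 0$ and $\langle Bu_i, v_j\rangle$ symmetry). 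The key realization is that the two $A$-terms can be bundled: $\langle Au_1, v_1\rangle + \langle Au_2, v_2\rangle = \langle \tilde A \tilde u, \tilde v\rangle$ where $\tilde A := \mathrm{diag}(A, A) \in \bbr^{2d\times 2d}$ is skew-symmetric and $\tilde u := (u_1, u_2)$, $\tilde v := (v_1, v_2) \in \bbr^{2d}$; while the two $B$-terms form $\langle \hat B \tilde u, \tilde v\rangle$ with $\hat B := \begin{pmatrix} 0 & -B \\ B & 0 \end{pmatrix} \in \bbr^{2d\times 2d}$, which is \emph{also} skew-symmetric because $B$ is symmetric. So $\mathrm{Re}\,\langle Wx, y\rangle = \langle (\tilde A + \hat B)\tilde u, \tilde v\rangle$ with $\tilde A + \hat B$ real skew-symmetric on $\bbr^{2d}$.

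Now I would apply Lemma~\ref{L3.1} in dimension $2d$ to the skew-symmetric matrix $M := \tilde A + \hat B$ and vectors $\tilde u, \tilde v$:
\[
\big|\langle M\tilde u, \tilde v\rangle\big| \le \frac{1}{\sqrt 2}\|M\|_F \sqrt{\|\tilde u\|^2\|\tilde v\|^2 - \langle \tilde u, \tilde v\rangle^2}.
\]
It remains to translate the three quantities back. For the norm: $\|\tilde A\|_F^2 = 2\|A\|_F^2$ and $\|\hat B\|_F^2 = 2\|B\|_F^2$, and since $\tilde A$ is block-diagonal while $\hat B$ is block-off-diagonal they are Frobenius-orthogonal, so $\|M\|_F^2 = 2(\|A\|_F^2 + \|B\|_F^2) = 2\|W\|_F^2$, giving $\frac{1}{\sqrt2}\|M\|_F = \|W\|_F$. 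For the Gram data: $\|\tilde u\|^2 = \|u_1\|^2 + \|u_2\|^2 = \|x\|^2$, likewise $\|\tilde v\|^2 = \|y\|^2$, and $\langle \tilde u, \tilde v\rangle = \langle u_1, v_1\rangle + \langle u_2, v_2\rangle = \mathrm{Re}\,\langle x, y\rangle$. Hence the radicand becomes $\|x\|^2\|y\|^2 - (\mathrm{Re}\,\langle x, y\rangle)^2$, and multiplying through by the factor $2$ coming from $\langle Wx, y\rangle + \langle y, Wx\rangle = 2\,\mathrm{Re}\,\langle Wx, y\rangle$ yields exactly
\[
\big|\langle Wx, y\rangle + \langle y, Wx\rangle\big| \le \sqrt2\,\|W\|_F \sqrt{\|x\|^2\|y\|^2 - (\mathrm{Re}\,\langle x, y\rangle)^2},
\]
which is the claimed inequality (noting $(\mathrm{Re}\,\langle x,y\rangle)^2 = \mathrm{Re}(\langle x,y\rangle^2) + (\mathrm{Im}\,\langle x,y\rangle)^2 \ge \mathrm{Re}(\langle x,y\rangle^2)$, so the stated form with $\mathrm{Re}(\langle x,y\rangle^2)$ under the root is in fact the weaker — hence still valid — bound; I would double-check which of the two the authors actually want and state the sharper one if the radicand allows).

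The main obstacle I anticipate is purely bookkeeping: getting the real expansion of $\mathrm{Re}\,\langle Wx, y\rangle$ correct with the right signs, in particular verifying that the symmetric part $B$ really does drop out of the ``diagonal'' pairings (so that only $A$ survives there) and that the off-diagonal $B$-block is genuinely skew rather than symmetric — this is exactly the point where the physicist's conjugation convention in $\langle\cdot,\cdot\rangle$ and the two different transposition symmetries ($A^t=-A$ versus $B^t=B$) must be reconciled. Once the $2d\times 2d$ reformulation is set up cleanly, everything else is a direct citation of Lemma~\ref{L3.1} plus elementary norm identities. An alternative route, if the block computation proves unwieldy, is to argue directly: symmetrize $\langle Wx,y\rangle + \overline{\langle Wx,y\rangle}$ using $W^\dagger = -W$ to write it as a sum $\sum_{i,j}[W]_{ij}(\bar x_i y_j - \bar y_i x_j)$-type antisymmetric combination and apply Cauchy--Schwarz exactly as in the proof of Lemma~\ref{L3.1}, tracking real parts throughout; this avoids the doubling trick but requires more careful handling of complex conjugates.
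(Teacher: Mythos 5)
Your block-doubling reduction has a substantive gap: when you carry the constant through correctly, it proves a strictly weaker inequality than the lemma, not the lemma itself. Tracing the factors: Lemma~\ref{L3.1} applied to $M=\tilde A+\hat B$ gives
\[
|\langle M\tilde u,\tilde v\rangle|\le \tfrac{1}{\sqrt 2}\|M\|_F\sqrt{\|\tilde u\|^2\|\tilde v\|^2-\langle\tilde u,\tilde v\rangle^2}
= \|W\|_F\sqrt{\|x\|^2\|y\|^2-(\mathrm{Re}\langle x,y\rangle)^2},
\]
and since $\langle Wx,y\rangle+\langle y,Wx\rangle=2\,\mathrm{Re}\langle Wx,y\rangle=\pm 2\langle M\tilde u,\tilde v\rangle$, multiplying by $2$ yields the coefficient $2\|W\|_F$, not $\sqrt 2\,\|W\|_F$. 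So what the doubling trick actually gives is
\[
\big|\langle Wx,y\rangle+\langle y,Wx\rangle\big|\le 2\|W\|_F\sqrt{\|x\|^2\|y\|^2-(\mathrm{Re}\langle x,y\rangle)^2}.
\]
You then reasoned that $(\mathrm{Re}\langle x,y\rangle)^2\ge \mathrm{Re}(\langle x,y\rangle^2)$ makes the lemma's radicand larger and hence the lemma's form ``weaker, hence implied.'' With the correct coefficient this comparison goes the wrong way: writing $R=\mathrm{Re}\langle x,y\rangle$, $I=\mathrm{Im}\langle x,y\rangle$,
\[
4\big(\|x\|^2\|y\|^2-R^2\big) - 2\big(\|x\|^2\|y\|^2-R^2+I^2\big)
= 2\big(\|x\|^2\|y\|^2 - R^2-I^2\big)\ge 0
\]
by Cauchy--Schwarz, so $2\|W\|_F\sqrt{\|x\|^2\|y\|^2-R^2}\ \ge\ \sqrt 2\,\|W\|_F\sqrt{\|x\|^2\|y\|^2-\mathrm{Re}(\langle x,y\rangle^2)}$. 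Your bound sits above the lemma's, so it does not prove the lemma. Conceptually, the realification throws away the imaginary part of $\langle x,y\rangle$, and the resulting loss is not recoverable by the $2$-versus-$\sqrt 2$ bookkeeping.

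There is also a small sign slip: a direct expansion gives $\mathrm{Re}\langle Wx,y\rangle=\langle Au_1,v_1\rangle+\langle Au_2,v_2\rangle-\langle Bu_2,v_1\rangle+\langle Bu_1,v_2\rangle$ (your $B$-terms have the opposite sign), which happens to match the $\hat B$ you wrote down, so this one is harmless. The viable fix is the alternative you sketched at the end: work directly with complex entries, use $W^\dagger=-W$ to symmetrize
\[
\langle Wx,y\rangle+\langle y,Wx\rangle=\sum_{i,j}[W]_{ij}\big(\overline{[y]_i}[x]_j-\overline{[x]_i}[y]_j\big),
\]
apply Cauchy--Schwarz in $\bbc^{d^2}$, and expand $\sum_{i,j}|\overline{[y]_i}[x]_j-\overline{[x]_i}[y]_j|^2 = 2\|x\|^2\|y\|^2-\langle x,y\rangle^2-\langle y,x\rangle^2 = 2\big(\|x\|^2\|y\|^2-\mathrm{Re}(\langle x,y\rangle^2)\big)$. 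This is in fact the paper's proof and produces the $\sqrt 2$ constant with the correct radicand; the realification route cannot.
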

\begin{proof}
We basically use the same argument as in Lemma \ref{L3.1}.
\begin{align*}
\begin{aligned}
&\langle Wx, y\rangle+\langle y, Wx\rangle \\
&\hspace{0.5cm} =\sum_{i, j}\left(\overline{[W]_{ij}[x]_j}[y]_i+\overline{[y]_i}[W]_{ij}[x]_j\right) =\sum_{i, j}\left(-\overline{[x]_j}[W]_{ji}[y]_i+\overline{[y]_i}[W]_{ij}[x]_j\right)\\
&\hspace{0.5cm} =\sum_{i, j}\left(-\overline{[x]_i}[W]_{ij}[y]_j+\overline{[y]_i}[W]_{ij}[x]_j\right) =\sum_{i, j}[W]_{ij}(\overline{[y]_i}[x]_j-\overline{[x]_i}[y]_j).
\end{aligned}
\end{align*}
This and the Cauchy-Schwarz inequality yield
\begin{align*}
\begin{aligned}
&|\langle Wx, y\rangle+\langle y, Wx\rangle|^2 \\
&\hspace{0.5cm} =\left(\sum_{i, j}[W]_{ij}(\overline{[y]_i}[x]_j-\overline{[x]_i}[y]_j)\right)^2 \leq \left(\sum_{i, j}|[W]_{ij}|^2\right)\left(\sum_{i, j} \Big|\overline{[y]_i}[x]_j-\overline{[x]_i}[y]_j \Big|^2\right)\\
&\hspace{0.5cm} = \|W\|_F^2 \Big(2\|x\|^2\|y\|^2-\langle x, y\rangle^2-\langle y, x\rangle^2 \Big) =2\|W\|_F^2\cdot\left(\|x\|^2\|y\|^2-\mathrm{Re}(\langle x, y\rangle^2)\right).
\end{aligned}
\end{align*}
This implies the desired estimate.
\end{proof}
\begin{remark} \label{R4.1} Let $z_i$ and $z_j$ be complex vectors with $\|z_i \|= \|z_j \| = 1$. Then, we use Lemma \ref{L4.1} to get 
\[
\|\langle Wz_i, z_j\rangle+\langle z_j, Wz_i\rangle|\leq \sqrt{2}\|W\|_F\cdot\sqrt{1-\mathrm{Re}(\langle z_i, z_j\rangle^2)}.
\]
\end{remark}
In the following subsections, we study emergent dynamics of \eqref{LHSF} with $\kappa_1 = 0$ and the full system separately. \newline

Consider the system which can be obtained from the full system \eqref{LHSF} with $\kappa_1 = 0$:
\begin{equation} \label{D-1}
\begin{cases} 
\displaystyle \dot{z}_j=\Omega_j z_j+\kappa_0(\langle{z_j, z_j}\rangle V_0 z_c-\langle{V_0 z_c, z_j}\rangle z_j), \\
\displaystyle V_0=I_{d+1}+W_0.
\end{cases}
\end{equation}
or equivalently
\begin{equation*} \label{D-2}
 \dot{z}_j=\Omega_j z_j+ \kappa_0 \Big(\langle{z_j, z_j}\rangle  z_c-\langle{z_c, z_j}\rangle z_j \Big) +  \kappa_0 \Big(\langle{z_j, z_j}\rangle W z_c-\langle{W z_c, z_j}\rangle z_j \Big).
\end{equation*}
Next we consider the temporal evolution of $\langle z_i, z_j \rangle$. 
\begin{lemma}\label{L4.2}
Let $\{z_j\}$ be the solution of the system \eqref{D-1}. Then, one has
\[
\frac{d}{dt}\langle{z_i, z_j}\rangle=\langle (\Omega_i-\Omega_j)z_i, z_j\rangle+\kappa_0 \Big (1-\langle{z_i, z_j}\rangle)(\langle{V_0z_c, z_j}\rangle+\langle{z_i, V_0z_c}\rangle \Big).
\]
\end{lemma}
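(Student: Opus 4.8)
The statement is an identity, so the plan is a direct differentiation of $\langle z_i,z_j\rangle$ along the flow \eqref{D-1}, using only (i) the conjugate-linearity of $\langle\cdot,\cdot\rangle$ in the first slot together with $\overline{\langle a,b\rangle}=\langle b,a\rangle$, (ii) the norm conservation $\|z_i\|=\|z_j\|=1$ from Lemma \ref{L2.1} (assuming, as throughout, initial data on $\bbh\bbs^d$), and (iii) the skew-hermiticity $\Omega_j^{\dagger}=-\Omega_j$. No estimates are needed; this is pure bookkeeping.

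\textbf{Key steps.} First I would write $\frac{d}{dt}\langle z_i,z_j\rangle=\langle\dot z_i,z_j\rangle+\langle z_i,\dot z_j\rangle$ and substitute \eqref{D-1} into both terms. In $\langle\dot z_i,z_j\rangle$, the scalar $\langle z_i,z_i\rangle$ pulled out of the first slot is conjugated, but it equals $\|z_i\|^2=1$, which is real; likewise $\langle\langle V_0z_c,z_i\rangle z_i,z_j\rangle=\overline{\langle V_0z_c,z_i\rangle}\,\langle z_i,z_j\rangle=\langle z_i,V_0z_c\rangle\langle z_i,z_j\rangle$. This gives
\[
\langle\dot z_i,z_j\rangle=\langle\Omega_iz_i,z_j\rangle+\kappa_0\bigl(\langle V_0z_c,z_j\rangle-\langle z_i,V_0z_c\rangle\langle z_i,z_j\rangle\bigr).
\]
Symmetrically, using $\langle z_j,z_j\rangle=1$,
\[
\langle z_i,\dot z_j\rangle=\langle z_i,\Omega_jz_j\rangle+\kappa_0\bigl(\langle z_i,V_0z_c\rangle-\langle V_0z_c,z_j\rangle\langle z_i,z_j\rangle\bigr).
\]
Adding these and collecting the $\kappa_0$-terms yields the factor $(1-\langle z_i,z_j\rangle)\bigl(\langle V_0z_c,z_j\rangle+\langle z_i,V_0z_c\rangle\bigr)$. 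Finally, for the free-flow contribution I would use skew-hermiticity to write $\langle z_i,\Omega_jz_j\rangle=\langle\Omega_j^{\dagger}z_i,z_j\rangle=-\langle\Omega_jz_i,z_j\rangle$, so that $\langle\Omega_iz_i,z_j\rangle+\langle z_i,\Omega_jz_j\rangle=\langle(\Omega_i-\Omega_j)z_i,z_j\rangle$, which completes the identity.

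\textbf{Main obstacle.} There is no genuine difficulty; the only place to be careful is the conjugate-linearity convention (the first argument is conjugated), so one must check that every scalar extracted from the first slot is either real — this is where Lemma \ref{L2.1} is used to set $\langle z_i,z_i\rangle=\langle z_j,z_j\rangle=1$ — or gets correctly rewritten via $\overline{\langle a,b\rangle}=\langle b,a\rangle$ into the form appearing in the statement. Once the sign from $\Omega_j^{\dagger}=-\Omega_j$ is tracked, the two displayed lines add up exactly to the claimed expression.
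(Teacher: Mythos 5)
Your proposal is correct and follows essentially the same route as the paper's proof: differentiate $\langle z_i,z_j\rangle$, substitute \eqref{D-1} into each slot, use $\langle z_i,z_i\rangle=\langle z_j,z_j\rangle=1$ and $\overline{\langle a,b\rangle}=\langle b,a\rangle$ to simplify the $\kappa_0$-terms, and combine the free-flow terms via skew-hermiticity into $\langle(\Omega_i-\Omega_j)z_i,z_j\rangle$. The only difference is that you spell out the conjugate-linearity bookkeeping and the identity $\langle z_i,\Omega_j z_j\rangle=-\langle\Omega_j z_i,z_j\rangle$ explicitly, which the paper leaves implicit.
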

\begin{proof}
Note that 
\begin{equation}  \label{D-3}
\frac{d}{dt}\langle{z_i, z_j}\rangle =\langle{\dot{z}_i, z_j}\rangle+\langle{z_i, \dot{z}_j}\rangle.
\end{equation}
For each term in the R.H.S. of \eqref{D-3}, we use \eqref{D-1} to see
\begin{align}
\begin{aligned} \label{D-4}
\langle{\dot{z}_i, z_j}\rangle&=\langle \Omega_i z_i, z_j\rangle+\kappa_0(\langle{V_0z_c, z_j}\rangle-\langle{z_i, V_0z_c}\rangle\langle{z_i, z_j}\rangle),\\
\langle{z_i, \dot{z}_j}\rangle &=\langle z_i, \Omega_j z_j\rangle+\kappa_0(\langle{z_i, V_0z_c}\rangle-\langle{V_0z_c, z_j}\rangle\langle{z_i, z_j}\rangle),
\end{aligned}
\end{align}
Now, we combine \eqref{D-3} and \eqref{D-4} to get 
\begin{align*}
\frac{d}{dt}\langle{z_i, z_j}\rangle&=\langle{\dot{z}_i, z_j}\rangle+\langle{z_i, \dot{z}_j}\rangle\\
&=\langle \Omega_i z_i, z_j\rangle+\langle z_i, \Omega_jz_j\rangle \\
&\hspace{0.2cm} +\kappa_0 \Big(\langle{V_0z_c, z_j}\rangle-\langle{z_i, V_0z_c}\rangle\langle{z_i, z_j}\rangle+\langle{z_i, V_0z_c}\rangle-\langle{V_0z_c, z_j}\rangle\langle{z_i, z_j}\rangle \Big)\\
&=\langle (\Omega_i-\Omega_j)z_i, z_j\rangle+\kappa_0(1-\langle{z_i, z_j}\rangle)(\langle{V_0z_c, z_j}\rangle+\langle{z_i, V_0z_c}\rangle).
\end{align*}
\end{proof}

\subsection{Identical ensemble} \label{sec:4.1}
In this subsection, we study emergent behaviors of identical ensemble. For four points $(z_i, z_j, z_k, z_l)$ lying in a general position, we introduce a cross-ratio-like quantity \cite{H-K-P-R, Lo-0}:
\[
\mathcal{C}_{ijkl}[{\mathcal Z}]:=\frac{(1-\langle{z_i, z_j}\rangle)(1-\langle{z_k, z_l}\rangle)}{(1-\langle{z_i, z_l}\rangle)(1-\langle{z_k, z_j}\rangle)}.
\] 
Next, we show that $\mathcal{C}_{ijkl}$ is conserved along the dynamics \eqref{D-1} with $\Omega_i = \Omega$. 
\begin{proposition} \label{P4.1}
Suppose the natural frequency matrices are the same:
\[ \Omega_j = \Omega, \quad j = 1, \cdots, N, \]
and let $\{z_j\}$ be a global solution of system \eqref{D-1}. Then, one has
\[ \mathcal{C}_{ijkl}[{\mathcal Z}(t)] = \mathcal{C}_{ijkl}[{\mathcal Z}(0)], \quad t \geq 0. \]
\end{proposition}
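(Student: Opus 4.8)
The plan is to differentiate $\mathcal{C}_{ijkl}[{\mathcal Z}]$ along \eqref{D-1} and show that its time derivative vanishes, using Lemma \ref{L4.2} as the only analytic input. Since all natural frequency matrices coincide, the term $\langle(\Omega_i-\Omega_j)z_i,z_j\rangle$ in Lemma \ref{L4.2} drops out, so for every ordered pair $(i,j)$ one has
\[
\frac{d}{dt}\langle z_i,z_j\rangle=\kappa_0\big(1-\langle z_i,z_j\rangle\big)\big(\langle V_0 z_c,z_j\rangle+\langle z_i,V_0 z_c\rangle\big).
\]
I would abbreviate $a_{ij}:=1-\langle z_i,z_j\rangle$ and $c_m:=\langle z_m,V_0 z_c\rangle$, and record the identity $\langle V_0 z_c,z_j\rangle=\overline{\langle z_j,V_0 z_c\rangle}=\overline{c_j}$. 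In this notation the evolution collapses to the scalar ODE $\dot a_{ij}=-\kappa_0\,a_{ij}\,(c_i+\overline{c_j})$, i.e., the logarithmic rate of $a_{ij}$ splits into a piece $c_i$ attached to the first index and a piece $\overline{c_j}$ attached to the second index.

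Next I would observe that for four points in general position the four factors $a_{ij},a_{kl},a_{il},a_{kj}$ are nonzero (this is exactly the general-position hypothesis built into the definition of $\mathcal{C}_{ijkl}$) and, by continuity, remain so along the global solution, so the quotient is well defined and smooth in $t$. Writing $\mathcal{C}_{ijkl}=F/G$ with $F:=a_{ij}a_{kl}$ and $G:=a_{il}a_{kj}$, it suffices to check $\dot F/F=\dot G/G$. From the scalar ODE,
\[
\frac{\dot F}{F}=\frac{\dot a_{ij}}{a_{ij}}+\frac{\dot a_{kl}}{a_{kl}}=-\kappa_0\big(c_i+\overline{c_j}+c_k+\overline{c_l}\big),\qquad
\frac{\dot G}{G}=\frac{\dot a_{il}}{a_{il}}+\frac{\dot a_{kj}}{a_{kj}}=-\kappa_0\big(c_i+\overline{c_l}+c_k+\overline{c_j}\big),
\]
and the two right-hand sides are identical. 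Hence $\dot{\mathcal C}_{ijkl}=(\dot F\,G-F\,\dot G)/G^2=0$, so $\mathcal{C}_{ijkl}[{\mathcal Z}(t)]$ is constant and equals $\mathcal{C}_{ijkl}[{\mathcal Z}(0)]$.

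There is no genuine obstacle here; the argument is essentially bookkeeping, and the only real content is the cancellation in the last display. Its structural reason is worth stating: since the rate of $a_{ij}$ is a sum of a term indexed by the first slot ($c_i$) and a term indexed by the second slot ($\overline{c_j}$), and since in the cross-ratio each label appears exactly once among the numerator pairs and once among the denominator pairs, always in the same slot, every $c_m$ and every $\overline{c_m}$ enters $\dot F/F-\dot G/G$ with coefficient zero. I would also flag one minor point of care: because $\mathcal{C}_{ijkl}$ is complex-valued, one should not appeal directly to $\frac{d}{dt}\log\mathcal{C}_{ijkl}=0$; working with $F$ and $G$ (equivalently, with the single-valued derivatives $\dot F/F$, $\dot G/G$) sidesteps any branch ambiguity of the complex logarithm.
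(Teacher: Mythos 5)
Your argument is correct and is essentially the paper's proof: both differentiate the cross-ratio along the flow, use the scalar ODE $\frac{d}{dt}(1-\langle z_i,z_j\rangle)=-\kappa_0(1-\langle z_i,z_j\rangle)(\langle V_0z_c,z_j\rangle+\langle z_i,V_0z_c\rangle)$ coming from Lemma \ref{L4.2} with identical $\Omega_j$, and observe that the eight rate terms cancel in pairs. Your abbreviations $a_{ij}$, $c_m$ and the slot-bookkeeping remark make the cancellation mechanism more transparent, but the computation is the same one the paper carries out by writing $\dot{\mathcal C}_{ijkl}=-\kappa_0\,\mathcal C_{ijkl}\,\big(\overline{c_j}+c_i+\overline{c_l}+c_k-\overline{c_l}-c_i-\overline{c_j}-c_k\big)=0$.
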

\begin{proof}
Since $\Omega_i=\Omega_j$, we have
\begin{equation} \label{D-5}
\frac{d}{dt}\langle z_i, z_j\rangle=\kappa_0(1-\langle z_i, z_j\rangle) \Big(\langle V_0 z_c, z_j\rangle+\langle z_i, V_0z_c\rangle \Big).
\end{equation}
This implies
\begin{equation} \label{D-6}
\frac{d}{dt}(1-\langle z_i, z_j\rangle)=-\kappa_0(1-\langle z_i, z_j\rangle)(\langle V_0 z_c, z_j\rangle+\langle z_i, V_0z_c\rangle).
\end{equation}
Again this yields the desired estimate:
\begin{align*}
\frac{d}{dt}\mathcal{C}_{ijkl} &=\frac{d}{dt}\left(\frac{(1-\langle{z_i, z_j}\rangle)(1-\langle{z_k, z_l}\rangle)}{(1-\langle{z_i, z_l}\rangle)(1-\langle{z_k, z_j}\rangle)}\right) \\
&=-\kappa_0\left(\frac{(1-\langle{z_i, z_j}\rangle)(1-\langle{z_k, z_l}\rangle)}{(1-\langle{z_i, z_l}\rangle)(1-\langle{z_k, z_j}\rangle)}\right) \\
&\hspace{0.5cm}\times \Big (\langle V_0z_c, z_j\rangle+\langle z_i,V_0z_c\rangle+\langle V_0z_c, z_l\rangle+\langle z_k,V_0z_c\rangle -\langle V_0z_c, z_l\rangle \\
&\hspace{2cm}-\langle z_i,V_0z_c\rangle-\langle V_0z_c, z_j\rangle-\langle z_k,V_0z_c\rangle \Big)=0.
\end{align*}
\end{proof}

\vspace{0.5cm}

Next we use \eqref{D-6} to obtain
\begin{align}
\begin{aligned}\label{D-6-1}
& \frac{d}{dt}|1-\langle{z_i, z_j}\rangle|^2 \\
& \hspace{0.5cm} =\frac{d}{dt}\big((1-\langle{z_i, z_j}\rangle)(1-\langle{z_j, z_i}\rangle)\big)\\
& \hspace{0.5cm} =-\kappa_0(1-\langle{z_i, z_j}\rangle)(\langle{V_0z_c, z_j}\rangle+\langle{z_i, V_0z_c}\rangle)(1-\langle{z_j, z_i}\rangle)+(c.c.)\\
& \hspace{0.5cm} =-\kappa_0|1-\langle{z_i, z_j}\rangle|^2(\langle{V_0z_c, z_j}\rangle+\langle{z_i, V_0z_c}\rangle+\langle{V_0z_c, z_i}\rangle+\langle{z_j, V_0z_c}\rangle)\\
& \hspace{0.5cm} =-\kappa_0|1-\langle{z_i, z_j}\rangle|^2 \Big(\langle{V_0z_c, z_i+z_j}\rangle+\langle{z_i+z_j, V_0z_c}\rangle \Big ).
\end{aligned}
\end{align}
Now we introduce new variables defined as follows:
\begin{equation} \label{D-7}
R_{ij} :=\mathrm{Re}(\langle z_i, z_j\rangle),\quad I_{ij} :=\mathrm{Im}(\langle z_i, z_j\rangle)\quad\forall~i, j\in\{1, 2, \cdots, N\}.
\end{equation}
Since $\langle z_i, z_i \rangle = 1$, it is easy to see that 
\[  R_{ii}=1 \quad \mbox{and} \quad I_{ii}=0, \quad i = 1, \cdots, N. \]
\begin{lemma} \label{L4.3}
Let $R_{ij}$ and $I_{ij}$ be quantities defined by the relation \eqref{D-7}. Then, the following assertions hold.  
\begin{enumerate}
\item 
$R_{ij}$ and $I_{ij}$ satisfy symmetry-antisymmetry relations:
\[ R_{ij}=R_{ji} \quad \mbox{and} \quad I_{ij}=-I_{ij}, \quad 1 \leq i, j  \leq N. \]
\item
$(R_{ij}, I_{ij})$ lies on the unit ball:
\[  |R_{ij}|^2+ |I_{ij}|^2\leq 1, \quad 1 \leq i, j \leq N. \]
\end{enumerate}
\end{lemma}
\begin{proof}(i)~Note that 
\[ \langle z_i, z_j \rangle = \overline{\langle z_j, z_i \rangle}, \quad \mbox{i.e.,} \quad R_{ij} + {\mathrm i} I_{ij} = \overline{R_{ji} + {\mathrm i} I_{ji}} = R_{ji} - {\mathrm i} I_{ji}. \]
We compare the real and imaginary parts of the above relation to find the proof of the first assertion. \newline

\noindent (ii)~The second assertion follows from the relation $|\langle z_i, z_j \rangle| \leq 1$. 
\end{proof}

\vspace{0.2cm}

Now, we set  
\[
{\mathcal J}_{ij} :=\sqrt[4]{(1-R_{ij})^2+I_{ij}^2}, \quad \forall i, j\in\{1, 2, \cdots, N\}.
\]
Then, note that the zero convergence of  ${\mathcal J}_{ij}$ yields the emergence of the complete aggregation.

\begin{lemma} \label{L4.4}
Suppose the natural frequency matrices are the same:
\[ \Omega_j = \Omega, \quad j = 1, \cdots, N, \]
and let $\{z_j\}$ be a global solution of system \eqref{D-1}.  Then, the functional ${\mathcal J}_{ij}$ satisfies
\[
\frac{d\mathcal J_{ij}}{dt} \leq -\frac{\kappa_0}{2N} \mathcal J_{ij}\sum_{k=1}^N\left(1 -\mathcal J_{ik}^2 -\frac{\sqrt[4]{5}}{\sqrt{2}} \|W_0\|_F \mathcal J_{ik} +1 -\mathcal J_{jk}^2 -\frac{\sqrt[4]{5}}{\sqrt{2}} \|W_0\|_F \mathcal J_{jk}\right).
\]
\end{lemma}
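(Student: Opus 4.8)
The plan is to differentiate $\mathcal{J}_{ij}^4 = (1-R_{ij})^2 + I_{ij}^2 = |1-\langle z_i,z_j\rangle|^2$ and then convert this into a bound on $\frac{d\mathcal{J}_{ij}}{dt}$ via the chain rule. Starting from \eqref{D-6-1}, which already gives
\[
\frac{d}{dt}|1-\langle z_i,z_j\rangle|^2 = -\kappa_0 |1-\langle z_i,z_j\rangle|^2\Big(\langle V_0 z_c, z_i+z_j\rangle + \langle z_i+z_j, V_0 z_c\rangle\Big),
\]
I would expand $V_0 = I_{d+1} + W_0$ and $z_c = \frac{1}{N}\sum_k z_k$, so that the bracket becomes $\frac{1}{N}\sum_k \big[(\langle z_k, z_i+z_j\rangle + \langle z_i+z_j, z_k\rangle) + (\langle W_0 z_k, z_i+z_j\rangle + \langle z_i+z_j, W_0 z_k\rangle)\big]$. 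The first group is $\frac{2}{N}\sum_k (R_{ik}+R_{jk})$, and this is the "good" synchronizing term; the second group is the frustration term to be controlled.

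Since $\frac{d\mathcal{J}_{ij}}{dt} = \frac{1}{4}\mathcal{J}_{ij}^{-3}\frac{d}{dt}\mathcal{J}_{ij}^4$, I get
\[
\frac{d\mathcal{J}_{ij}}{dt} = -\frac{\kappa_0}{4N}\mathcal{J}_{ij}\sum_{k=1}^N\Big[2(R_{ik}+R_{jk}) + \langle W_0 z_k, z_i+z_j\rangle + \langle z_i+z_j, W_0 z_k\rangle\Big].
\]
Now the main work is two estimates. First, I must show $2R_{ik} \geq 2(1-\mathcal{J}_{ik}^2)$, i.e. $R_{ik}\geq 1-\mathcal{J}_{ik}^2 = 1-\sqrt{(1-R_{ik})^2+I_{ik}^2}$; equivalently $\sqrt{(1-R_{ik})^2+I_{ik}^2}\geq 1-R_{ik}$, which is immediate. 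Second, I must bound the frustration term. By Lemma \ref{L4.1} applied to each of $\langle W_0 z_k, z_i\rangle + \langle z_i, W_0 z_k\rangle$ and $\langle W_0 z_k, z_j\rangle + \langle z_j, W_0 z_k\rangle$ (using Remark \ref{R4.1} since all vectors are unit), I get a bound by $\sqrt{2}\|W_0\|_F\sqrt{1-\mathrm{Re}(\langle z_k,z_i\rangle^2)}$ and similarly for $j$. The remaining task is to show $\sqrt{1-\mathrm{Re}(\langle z_k,z_i\rangle^2)} \leq \frac{\sqrt[4]{5}}{2}\mathcal{J}_{ik}^2$... but checking the constant, I actually expect $\sqrt{1-\mathrm{Re}(\langle z_k,z_i\rangle^2)} \leq \frac{\sqrt[4]{5}}{\sqrt{2}}\mathcal{J}_{ik}$ after dividing by the extra $\mathcal{J}_{ij}$ factor is absorbed; more carefully, writing $R = R_{ik}$, $I = I_{ik}$, one has $1-\mathrm{Re}(\langle z_k,z_i\rangle^2) = 1-R^2+I^2 = (1-R)(1+R)+I^2$, and since $R\geq -1$ we can compare this with $\mathcal{J}_{ik}^4 = (1-R)^2+I^2$. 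The inequality to verify is $\sqrt{2}\sqrt{(1-R)(1+R)+I^2} \leq \frac{\sqrt[4]{5}}{\sqrt 2}\cdot 2\cdot\mathcal{J}_{ik}^2$, i.e. $1-R^2+I^2 \leq \frac{\sqrt 5}{2}((1-R)^2+I^2)$; with $u=1-R\in[0,2]$ this reads $u(2-u)+I^2 \leq \frac{\sqrt5}{2}(u^2+I^2)$, and since $\frac{\sqrt5}{2}>1$ it suffices that $u(2-u) \leq \frac{\sqrt5}{2}u^2$, i.e. $2-u\leq \frac{\sqrt5}{2}u$ — which fails for small $u$! So the constant must come out of optimizing the ratio $\frac{u(2-u)+I^2}{u^2+I^2}$ over the admissible region $u^2+I^2 \leq$ (something) and $u\in[0,2]$, using the constraint $|R|^2+|I|^2\leq 1$ from Lemma \ref{L4.3}(ii); this forces $I^2 \leq 1-R^2 = u(2-u)$, and substituting the extremal $I^2 = u(2-u)$ gives ratio $\frac{2u(2-u)}{u^2+u(2-u)} = \frac{2u(2-u)}{2u} = 2-u \leq 2$, while for $I=0$ the ratio is $\frac{2-u}{u}$, unbounded — so the bound as stated must instead be proven by keeping the $\sqrt{1-\mathrm{Re}(\langle z_k,z_i\rangle^2)} \leq \sqrt{5/2}\,|1-\langle z_k,z_i\rangle|$...

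Let me restate the plan cleanly: the heart of the proof is the pointwise inequality
\[
\sqrt{1-\mathrm{Re}(\langle z_i, z_k\rangle^2)} \;\leq\; \sqrt{\tfrac{\sqrt5}{2}}\;\bigl|1-\langle z_i, z_k\rangle\bigr| \;=\; \sqrt{\tfrac{\sqrt5}{2}}\,\mathcal{J}_{ik}^2,
\]
which combined with $|1-\langle z_i,z_k\rangle| = \mathcal{J}_{ik}^2$ and Remark \ref{R4.1} yields $|\langle W_0 z_k, z_i\rangle + \langle z_i, W_0 z_k\rangle| \leq \sqrt{2}\|W_0\|_F\cdot\sqrt{\sqrt5/2}\,\mathcal{J}_{ik}^2 = \frac{\sqrt[4]5}{\sqrt2}\cdot\sqrt2\cdot\sqrt2\cdots$ — I will track the constant carefully, but the expected upshot is exactly the term $\frac{\sqrt[4]{5}}{\sqrt2}\|W_0\|_F\mathcal{J}_{ik}$ after one power of $\mathcal{J}_{ij}$ is pulled out front as in the statement. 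I expect the main obstacle to be precisely pinning down this trigonometric/algebraic inequality with the sharp constant $\sqrt5$: one should parametrize $\langle z_i,z_k\rangle = \rho e^{\mathrm{i}\phi}$ with $\rho\leq 1$, write both sides in terms of $\rho$ and $\phi$, and verify the inequality $1-\rho^2\cos 2\phi \leq \frac{\sqrt5}{2}(1-2\rho\cos\phi+\rho^2)$ over $\rho\in[0,1]$, $\phi\in[0,2\pi)$, e.g. by fixing $\rho$ and minimizing the right side minus left side over $\phi$. Once that inequality is in hand, the rest is assembling: insert the good bound $R_{ik}+R_{jk}\geq (1-\mathcal{J}_{ik}^2)+(1-\mathcal{J}_{jk}^2)$ and the frustration bound into the differentiated expression, factor out $-\frac{\kappa_0}{2N}\mathcal{J}_{ij}$, and read off the claimed inequality.
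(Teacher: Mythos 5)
Your overall route is the right one — differentiate $\mathcal J_{ij}^4=|1-\langle z_i,z_j\rangle|^2$ via \eqref{D-6-1}, split off the good term $2(R_{ik}+R_{jk})$, apply Lemma \ref{L4.1} to the frustration part, convert via the chain rule, and use $R_{ik}\geq 1-\mathcal J_{ik}^2$ — this is exactly what the paper does. But there is a genuine gap in the step you yourself flagged: you are trying to prove the wrong algebraic inequality, with the wrong power of $\mathcal J_{ik}$, and you never recover.

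You want to bound $\sqrt{1-\mathrm{Re}(\langle z_i,z_k\rangle^2)}=\sqrt{1-R_{ik}^2+I_{ik}^2}$. In the final statement the frustration contribution appears as $\frac{\sqrt[4]{5}}{\sqrt{2}}\|W_0\|_F\,\mathcal J_{ik}$, with a \emph{single} power of $\mathcal J_{ik}$; tracing back through the factor of $\sqrt 2$ from Lemma \ref{L4.1} and the division by $4\mathcal J_{ij}^3$, the inequality you actually need is
\[
1-R_{ik}^2+I_{ik}^2 \;\leq\; \sqrt{5}\,\mathcal J_{ik}^{2} \;=\; \sqrt{5}\,\sqrt{(1-R_{ik})^2+I_{ik}^2},
\]
i.e.\ the right-hand side is of first order in $(1-R_{ik},I_{ik})$, matching the first-order left-hand side. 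Both of your attempted restatements — first $1-R^2+I^2\leq\frac{\sqrt5}{2}\mathcal J^4$, then $\sqrt{1-\mathrm{Re}(\langle z_i,z_k\rangle^2)}\leq\sqrt{\sqrt5/2}\,\mathcal J_{ik}^2$ — square to $1-R^2+I^2\lesssim\mathcal J^4=(1-R)^2+I^2$, which is quadratic on the right. That degree mismatch is exactly why your check with $u=1-R$ fails for small $u$; no constant can repair it, so the detour into parametrizing $\langle z_i,z_k\rangle=\rho e^{\mathrm i\phi}$ and optimizing is doomed from the start. The correct inequality, with the square-root on the right, does hold: write $1-R^2+I^2=(1+R)(1-R)+I^2\leq 2(1-R)+|I|$ using $1+R\leq 2$ and $I^2\leq|I|$ (valid since $|I|\leq 1$ by Lemma \ref{L4.3}(ii)), and then apply Cauchy--Schwarz to the pair $(2,1)\cdot(1-R,|I|)$ to get $\leq\sqrt{5}\sqrt{(1-R)^2+I^2}$. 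Taking square roots gives $\sqrt{1-R_{ik}^2+I_{ik}^2}\leq 5^{1/4}\mathcal J_{ik}$, and the rest of your assembly then produces the stated bound. This small lemma is the only missing piece; once it is in place your outline matches the paper's proof.
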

\begin{proof}
From \eqref{D-6-1}, we have
\begin{align*}
& \frac{d}{dt}((1-R_{ij})^2+I_{ij}^2) \\
& \hspace{0.5cm} =-\frac{\kappa_0}{N}((1-R_{ij})^2+I_{ij}^2)\sum_{k=1}^N\left(2R_{ik}+2R_{jk}+\langle W_0 z_k, z_i+z_j\rangle+\langle z_i+z_j, W_0z_k\rangle \right).
\end{align*}
Also, we can apply Lemma \ref{L4.1} to obtain
\[
|\langle W_0 z_k, z_i\rangle+\langle z_i, W_0z_k\rangle|\leq \sqrt{2}\|W_0\|_F\sqrt{1-R_{ik}^2+I_{ik}^2}.
\]
Moreover, from the relation $R_{ik}^2+I_{ik}^2\leq 1$ and Cauchy-Schwarz inequality, we have
\begin{align*}
1-R_{ik}^2+I_{ik}^2&=(1+R_{ik})(1-R_{ik})+I_{ik}^2\\
&\leq2(1-R_{ik})+I_{ik}\leq\sqrt{(4+1)\Big((1-R_{ik})^2+I_{ik}^2\Big)}\\
&=\sqrt{5\Big((1-R_{ik})^2+I_{ik}^2\Big)}.
\end{align*}
If we combine above calculations, we have
\begin{align*}
\begin{aligned}
&\frac{d}{dt}((1-R_{ij})^2+I_{ij}^2)  \leq -\frac{\kappa_0}{N}((1-R_{ij})^2+I_{ij}^2) \sum_{k=1}^N\bigg( 2R_{ik}+2R_{jk} \\
& \hspace{1cm} -\sqrt{2}\|W_0\|_F\sqrt[4]{5 \big[ (1-R_{ik})^2+I_{ik}^2 \big]} -\sqrt{2}\|W_0\|_F\sqrt[4]{5\big[ (1-R_{jk})^2+I_{jk}^2 \big]} \bigg).
\end{aligned}
\end{align*}
Then, we can rewrite above relation as
\[
\frac{d\mathcal{J}_{ij}^4}{dt} \leq -\frac{\kappa_0}{N}\mathcal{J}_{ij}^4\cdot\sum_{k=1}^N\left(2R_{ik}+2R_{jk}-\sqrt{2}\sqrt[4]{5}\|W_0\|_F(\mathcal{J}_{ik}+\mathcal{J}_{jk})\right)
\]
and this implies
\[
\frac{d\mathcal J_{ij}}{dt} \leq -\frac{\kappa_0}{2N}\mathcal J_{ij}\sum_{k=1}^N\left(R_{ik}+R_{jk} -\frac{\sqrt[4]{5}}{\sqrt{2}} \|W_0\|_F \mathcal J_{ik} -\frac{\sqrt[4]{5}}{\sqrt{2}} \|W_0\|_F \mathcal J_{jk}\right).
\]
Note that the definition of $\mathcal J_{ij}$ yeilds
\[
R_{ij}=1-\sqrt{\mathcal J_{ij}^4 -I_{ij}^2}\geq 1-\mathcal J_{ij}^2,
\]
from which we can get the desired result:
\begin{align}\label{C-1-1}
\frac{d\mathcal J_{ij}}{dt} \leq -\frac{\kappa_0}{2N} \mathcal J_{ij}\sum_{k=1}^N\left(1 -\mathcal J_{ik}^2 -\frac{\sqrt[4]{5}}{\sqrt{2}} \|W_0\|_F \mathcal J_{ik} +1 -\mathcal J_{jk}^2 -\frac{\sqrt[4]{5}}{\sqrt{2}} \|W_0\|_F \mathcal J_{jk}\right).
\end{align}
\end{proof}
Now, we are ready to present our third main result. 
\begin{theorem}\label{T4.1}
Suppose the natural frequency matrix and initial data satisfy 
\begin{equation} \label{C-1-2}
 \Omega_j = \Omega, \quad {\mathcal J}_{ij}^{in} < \frac{2\sqrt{2}}{\sqrt{\sqrt{5} \|W_0\|_F^2 +8}+\sqrt[4]{5} \|W_0\|_F}, \quad i, j = 1, \cdots, N, 
\end{equation} 
and let $\{z_j\}$ be the solution for \eqref{D-1} with initial data $\{z_j^{in}\}$. Then, there exists a positive constant $\Lambda_{ij}$ such that 
\[
|\mathcal J_{ij}(t)|\lesssim\exp\left(-\alpha_{ij}t\right),\quad \|z_i(t)-z_j(t)\|\lesssim\exp\left(-\frac{\Lambda_{ij}t}{2}\right),
\]
where
\[
\Lambda_{ij} := \frac{\kappa_0}{2N} \sum_{k=1}^N\left(1 -(\mathcal J_{ik}^{in})^2 -\frac{\sqrt[4]{5}}{\sqrt{2}} \|W_0\|_F \mathcal J_{ik}^{in} +1 -(\mathcal J_{jk}^{in})^2 -\frac{\sqrt[4]{5}}{\sqrt{2}} \|W_0\|_F \mathcal J_{jk}^{in} \right).
\]
\end{theorem}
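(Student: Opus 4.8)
The plan is to run a three–stage bootstrap/Grönwall argument on the pairwise functionals $\mathcal J_{ij}$, taking the differential inequality of Lemma \ref{L4.4} as the starting point. First I would abbreviate the quadratic appearing there as $g(s):=1-s^{2}-\tfrac{\sqrt[4]{5}}{\sqrt 2}\|W_0\|_F\,s$, so that Lemma \ref{L4.4} reads $\dot{\mathcal J}_{ij}\le -\tfrac{\kappa_0}{2N}\mathcal J_{ij}\sum_{k}\big(g(\mathcal J_{ik})+g(\mathcal J_{jk})\big)$. Since $g'(s)=-2s-\tfrac{\sqrt[4]{5}}{\sqrt 2}\|W_0\|_F<0$ for $s\ge 0$ and $g(0)=1>0$, the function $g$ is strictly decreasing on $[0,\infty)$ with a unique positive root $s^{\ast}$; rationalizing, one checks $s^{\ast}=\tfrac{2\sqrt 2}{\sqrt{\sqrt 5\|W_0\|_F^{2}+8}+\sqrt[4]{5}\|W_0\|_F}$, so hypothesis \eqref{C-1-2} is precisely $g(\mathcal J_{ij}^{in})>0$ for all $i,j$, equivalently $\mathcal J_M^{in}:=\max_{i,j}\mathcal J_{ij}^{in}<s^{\ast}$.

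For the second stage I would set $\mathcal J_M(t):=\max_{i,j}\mathcal J_{ij}(t)$. Because $g$ is decreasing and $\mathcal J_{ik},\mathcal J_{jk}\le \mathcal J_M$, Lemma \ref{L4.4} gives $\dot{\mathcal J}_{ij}\le -\kappa_0\,\mathcal J_{ij}\, g(\mathcal J_M)$ for every pair, hence $\dot{\mathcal J}_M\le -\kappa_0\,\mathcal J_M\, g(\mathcal J_M)$ for a.e. $t$ (invoking that $\mathcal J_M^{4}$, a finite maximum of smooth functions, is locally Lipschitz and differentiable a.e., with derivative equal to that of an active pair). Letting $T^{\infty}:=\sup\{\tau:\ \mathcal J_M(t)<s^{\ast}\ \forall\, t\in[0,\tau)\}$, on $[0,T^{\infty})$ one has $g(\mathcal J_M)>0$, so $\mathcal J_M$ is non-increasing and stays $\le \mathcal J_M^{in}<s^{\ast}$; a continuity argument then forces $T^{\infty}=\infty$, so $\mathcal J_M(t)\le \mathcal J_M^{in}<s^{\ast}$ for all $t\ge 0$.

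In the third stage the uniform bound is exploited: since $g(\mathcal J_M(t))\ge g(\mathcal J_M^{in})>0$, the inequality $\dot{\mathcal J}_{ij}\le -\kappa_0\mathcal J_{ij}g(\mathcal J_M)\le 0$ shows each $\mathcal J_{ij}$ is non-increasing, so $\mathcal J_{ik}(t)\le \mathcal J_{ik}^{in}$ for all $t$. Substituting this back into Lemma \ref{L4.4} and using monotonicity of $g$ once more gives
\[
\dot{\mathcal J}_{ij}\ \le\ -\Lambda_{ij}\,\mathcal J_{ij},\qquad \Lambda_{ij}=\frac{\kappa_0}{2N}\sum_{k=1}^{N}\big(g(\mathcal J_{ik}^{in})+g(\mathcal J_{jk}^{in})\big)>0,
\]
the positivity coming from $g(\mathcal J_{ik}^{in})\ge g(\mathcal J_M^{in})>0$; Grönwall then yields $\mathcal J_{ij}(t)\le \mathcal J_{ij}^{in}e^{-\Lambda_{ij}t}$. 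Finally $\|z_i-z_j\|^{2}=2(1-R_{ij})\le 2\sqrt{(1-R_{ij})^{2}+I_{ij}^{2}}=2\mathcal J_{ij}^{2}$ converts this into the claimed exponential decay of $\|z_i(t)-z_j(t)\|$; the degenerate case $\mathcal J_{ij}\equiv 0$ (forced by uniqueness as soon as $\langle z_i,z_j\rangle=1$ at some time, since the $z_\ell$ are unit vectors) is trivial.

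I expect the only genuinely delicate point to be the second stage: the clean linear inequality $\dot{\mathcal J}_{ij}\le -\Lambda_{ij}\mathcal J_{ij}$ with coefficients frozen at $t=0$ is legitimate only \emph{after} one knows $\mathcal J_M$ never reaches $s^{\ast}$, yet that bound itself must be extracted from a differential inequality for the non-differentiable maximum $\mathcal J_M$. Closing this mild circularity requires the continuity/bootstrap argument together with a careful treatment of $\tfrac{d}{dt}\mathcal J_M^{4}$ (or an upper Dini derivative estimate); everything else is routine bookkeeping with $g$ and the Cauchy–Schwarz-type bounds already established.
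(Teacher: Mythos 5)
Your proposal is correct and follows essentially the same route as the paper: start from Lemma \ref{L4.4}, observe that the hypothesis forces the bracketed factor to be positive at $t=0$, deduce monotonicity of the $\mathcal J_{ij}$, freeze the coefficients at their initial values to get a linear Gr\"onwall inequality, and finally convert $\mathcal J_{ij}$ decay into $\|z_i-z_j\|$ decay via $\|z_i-z_j\|^2 = 2(1-R_{ij}) \le 2\mathcal J_{ij}^2$. The only substantive difference is that you spell out the continuity/bootstrap step (introducing $g$, its root $s^\ast$, and $T^\infty$) that the paper compresses into the single assertion ``we know that $\mathcal J_{ij}$ always decreases''; you are right that this is the one genuinely delicate point, and your treatment closes it cleanly. (Incidentally, your bound $\|z_i-z_j\|^2 \le 2\mathcal J_{ij}^2$ is sharper than the paper's $\le 2\mathcal J_{ij}$, and yields decay rate $\Lambda_{ij}$ rather than $\Lambda_{ij}/2$; both are consistent with the stated conclusion.)
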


\begin{proof}
From the assumption \eqref{C-1-2} and the inequality \eqref{C-1-1}, we know that $\mathcal J_{ij}$ always decreases. This implies
\[
\mathcal J_{ij}(t)\leq \mathcal J_{ij}^{in}, \quad t\geq0,\quad i, j\in\{1, 2, \cdots, N\}.
\]
Then we can also obtain
\begin{align*}
\frac{d\mathcal J_{ij}}{dt} &\leq -\frac{\kappa_0}{2N} \mathcal J_{ij}\sum_{k=1}^N\left(1 -\mathcal J_{ik}^2 -\frac{\sqrt[4]{5}}{\sqrt{2}} \|W_0\|_F \mathcal J_{ik} +1 -\mathcal J_{jk}^2 -\frac{\sqrt[4]{5}}{\sqrt{2}} \|W_0\|_F \mathcal J_{jk}\right) \\
&\leq -\frac{\kappa_0}{2N} \mathcal J_{ij} \sum_{k=1}^N\left(1 -(\mathcal J_{ik}^{in})^2 -\frac{\sqrt[4]{5}}{\sqrt{2}} \|W_0\|_F \mathcal J_{ik}^{in} +1 -(\mathcal J_{jk}^{in})^2 -\frac{\sqrt[4]{5}}{\sqrt{2}} \|W_0\|_F \mathcal J_{jk}^{in} \right).
\end{align*}
From the assumption \eqref{C-1-2}, we have 
\[ \Lambda_{ij}>0 \quad \mbox{for all $i, j\in\{1, 2, \cdots, N\}$}. \] 
Then, we have
\[
\frac{d\mathcal J_{ij}}{dt} \leq -\Lambda_{ij} \mathcal J_{ij}.
\]
This implies
\begin{align*}
\mathcal J_{ij}(t)\leq \mathcal J_{ij}^{in}\exp(-\Lambda_{ij}t) \quad \mbox{and} \quad \|z_i-z_j\|^2=2-2R_{ij}\leq 2 \mathcal J_{ij}\leq2 \mathcal J_{ij}^{in}\exp(-\Lambda_{ij}t).
\end{align*}
\end{proof}

\subsection{Non-identical ensemble} \label{sec:4.2}
In this subsection, we study practical aggregation of non-identical ensemble to \eqref{D-1}. 

\begin{lemma} \label{L4.5}
Let $\{z_j\}$ be a global solution of system \eqref{D-1}.  Then, the functional ${\mathcal J}_{ij}$ satisfies
\[
\frac{d\mathcal J_{ij}}{dt} \leq\frac{3\sqrt{2}}{4\mathcal J_{ij}}\mathcal{D}(\Omega) -\frac{\kappa_0}{2N} \mathcal J_{ij}\sum_{k=1}^N\left(1 -\mathcal J_{ik}^2 -\frac{\sqrt[4]{5}}{\sqrt{2}} \|W_0\|_F \mathcal J_{ik} +1 -\mathcal J_{jk}^2 -\frac{\sqrt[4]{5}}{\sqrt{2}} \|W_0\|_F \mathcal J_{jk}\right).
\]
\end{lemma}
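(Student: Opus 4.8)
The statement is the non-identical analogue of Lemma \ref{L4.4}, so the natural strategy is to retrace the proof of Lemma \ref{L4.4} but keep the free-flow term $\langle(\Omega_i-\Omega_j)z_i,z_j\rangle$ that was dropped there, and bound its contribution to $\frac{d}{dt}\mathcal{J}_{ij}$ by $\frac{3\sqrt2}{4\mathcal{J}_{ij}}\mathcal{D}(\Omega)$. First I would start from Lemma \ref{L4.2} with its full right-hand side, then take the time derivative of $|1-\langle z_i,z_j\rangle|^2=(1-\langle z_i,z_j\rangle)(1-\langle z_j,z_i\rangle)$ as in \eqref{D-6-1}; the new ingredient relative to \eqref{D-6-1} is the term
\[
-(1-\langle z_j,z_i\rangle)\langle(\Omega_i-\Omega_j)z_i,z_j\rangle+(c.c.),
\]
while the $\kappa_0$-part produces exactly the expression already handled in the proof of Lemma \ref{L4.4}.

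**Key steps.** (1) Reproduce verbatim the $\kappa_0$-estimates from Lemma \ref{L4.4}: apply Lemma \ref{L4.1} to bound $|\langle W_0z_k,z_i\rangle+\langle z_i,W_0z_k\rangle|$, use $R_{ik}^2+I_{ik}^2\le1$ together with Cauchy--Schwarz to get the factor $\sqrt[4]{5}$, and convert to $\mathcal{J}$-variables; this yields precisely the sum $\sum_k\big(1-\mathcal{J}_{ik}^2-\tfrac{\sqrt[4]{5}}{\sqrt2}\|W_0\|_F\mathcal{J}_{ik}+\cdots\big)$ multiplying $-\frac{\kappa_0}{2N}\mathcal{J}_{ij}$. (2) Bound the frustration/free-flow term: using $\|z_i\|=\|z_j\|=1$ (Lemma \ref{L2.1}) and Lemma \ref{L4.1} with $W=\Omega_i-\Omega_j$ skew-hermitian, $|\langle(\Omega_i-\Omega_j)z_i,z_j\rangle+\langle z_j,(\Omega_i-\Omega_j)z_i\rangle|\le\sqrt2\|\Omega_i-\Omega_j\|_F\sqrt{1-\mathrm{Re}(\langle z_i,z_j\rangle^2)}\le\sqrt2\,\mathcal{D}(\Omega)\sqrt{1-\mathrm{Re}(\langle z_i,z_j\rangle^2)}$. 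After multiplying by $|1-\langle z_i,z_j\rangle|^2=\mathcal{J}_{ij}^4$ this contributes at most $\sqrt2\,\mathcal{D}(\Omega)\,\mathcal{J}_{ij}^4\sqrt{1-\mathrm{Re}(\langle z_i,z_j\rangle^2)}$ to $\tfrac{d}{dt}\mathcal{J}_{ij}^4$. (3) Pass from $\mathcal{J}_{ij}^4$ to $\mathcal{J}_{ij}$ by dividing by $4\mathcal{J}_{ij}^3$, which turns this term into $\frac{\sqrt2}{4}\mathcal{D}(\Omega)\mathcal{J}_{ij}\sqrt{1-\mathrm{Re}(\langle z_i,z_j\rangle^2)}$, and then absorb $\mathcal{J}_{ij}\sqrt{1-\mathrm{Re}(\langle z_i,z_j\rangle^2)}$ into the desired $\frac{3\sqrt2}{4\mathcal{J}_{ij}}\mathcal{D}(\Omega)$ using an elementary inequality $1-\mathrm{Re}(\langle z_i,z_j\rangle^2)\le 3\mathcal{J}_{ij}^2$ (equivalently $(1-R_{ij}^2)+I_{ij}^2\le 3\sqrt{(1-R_{ij})^2+I_{ij}^2}$ on the unit ball, which one checks by writing $1-R^2=(1+R)(1-R)\le 2(1-R)$ and comparing with $\sqrt{(1-R)^2+I^2}$).

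**Main obstacle.** The only nontrivial point is matching the constant $\tfrac{3\sqrt2}{4}$ exactly: one must choose the right elementary bound for $1-\mathrm{Re}(\langle z_i,z_j\rangle^2)$ in terms of $\mathcal{J}_{ij}^2$ on the region $R_{ij}^2+I_{ij}^2\le1$, $R_{ij}\le1$, so that $\tfrac{\sqrt2}{4}\cdot(\text{that constant})=\tfrac{3\sqrt2}{4}$, i.e. the constant should be $3$. Verifying $1-R^2+I^2\le 3\sqrt{(1-R)^2+I^2}$ for $R^2+I^2\le1$ is a short but slightly delicate two-variable estimate (the worst case is $I=0$, $R$ near $1$, where both sides behave like $2(1-R)$ versus $3(1-R)$, so there is room); handling the off-axis case requires the bound $1-R^2+I^2\le 2(1-R)+|I|\le\sqrt5\,\mathcal{J}_{ij}^2$ already used in Lemma \ref{L4.4}, and since $\sqrt5<3$ the claimed constant is in fact not tight, so the inequality goes through with the stated constant. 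Everything else is a routine re-run of the computations in Lemmas \ref{L4.2}--\ref{L4.4}.
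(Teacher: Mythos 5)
Your Step (2) contains a genuine error that breaks the argument. The free‑flow contribution to $\tfrac{d}{dt}\mathcal{J}_{ij}^4=\tfrac{d}{dt}|1-\langle z_i,z_j\rangle|^2$ is
\[
(1-\langle z_j,z_i\rangle)\,\langle (\Omega_i-\Omega_j)z_i,z_j\rangle+(1-\langle z_i,z_j\rangle)\,\langle (\Omega_j-\Omega_i)z_j,z_i\rangle
=\bar w\,u+w\,\bar u,
\]
where $w:=1-\langle z_i,z_j\rangle$ and $u:=\langle(\Omega_i-\Omega_j)z_i,z_j\rangle$. You assert this is $|w|^2(u+\bar u)$ and then apply Lemma~\ref{L4.1} to $u+\bar u$ alone, but $\bar w u + w\bar u \ne |w|^2(u+\bar u)$ in general: the former is $2\,\mathrm{Re}(\bar w u)$ while the latter is $2|w|^2\,\mathrm{Re}(u)$, and the two can differ by an arbitrary amount (e.g.\ $w=i\mathcal{J}^2$, $u=i$ gives $\bar w u+w\bar u=2\mathcal{J}^2$ but $|w|^2(u+\bar u)=0$). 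The paper avoids this by \emph{not} factoring out the scalar: it applies Lemma~\ref{L4.1} with $x=z_i$ and $y=(1-\langle z_j,z_i\rangle)z_j=\bar w\,z_j$, so that $\langle Wx,y\rangle+\langle y,Wx\rangle=\bar w u+w\bar u$ is exactly the free‑flow term. That choice produces, inside the square root of Lemma~\ref{L4.1}, the quantity $\|y\|^2-\mathrm{Re}(\langle x,y\rangle^2)=|w|^2-\mathrm{Re}(\bar w^2\langle z_i,z_j\rangle^2)$, and a short but essential algebraic simplification shows this equals $(1-R_{ij}^2-I_{ij}^2)\mathcal{J}_{ij}^4+2I_{ij}^2\le(3-R_{ij}^2-I_{ij}^2)\mathcal{J}_{ij}^4$, yielding the sharp bound $3\sqrt2\,\mathcal{D}(\Omega)\,\mathcal{J}_{ij}^2$ — note the exponent $\mathcal{J}_{ij}^2$, not $\mathcal{J}_{ij}^4$.

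Even setting aside the algebraic error, your Step (3) does not close. You arrive at a term of size $\tfrac{\sqrt2}{4}\mathcal{D}(\Omega)\,\mathcal{J}_{ij}\sqrt{1-\mathrm{Re}(\langle z_i,z_j\rangle^2)}$ and want to absorb it into $\tfrac{3\sqrt2}{4\mathcal{J}_{ij}}\mathcal{D}(\Omega)$; using $1-\mathrm{Re}(\langle z_i,z_j\rangle^2)\le 3\mathcal{J}_{ij}^2$ this would require $\sqrt3\,\mathcal{J}_{ij}^3\le 3$, i.e.\ $\mathcal{J}_{ij}\le 3^{1/6}\approx1.20$, which is not guaranteed since $\mathcal{J}_{ij}$ can be as large as $\sqrt2\approx1.41$ (because $|1-\langle z_i,z_j\rangle|$ can be up to $2$). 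The lemma carries no restriction on the initial configuration, so a bound that only holds on part of the state space is not a proof. The fix is precisely the absorption into $y$ just described; once the free‑flow term is bounded by $3\sqrt2\,\mathcal{D}(\Omega)\,\mathcal{J}_{ij}^2$, dividing by $4\mathcal{J}_{ij}^3$ gives $\tfrac{3\sqrt2}{4\mathcal{J}_{ij}}\mathcal{D}(\Omega)$ unconditionally. Your Step (1), reproducing the $\kappa_0$-estimates from Lemma~\ref{L4.4}, is correct and matches the paper.
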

\begin{proof}
As in \eqref{D-6-1}, one has 
\begin{align*}
\frac{d}{dt}|1-\langle z_i, z_j\rangle|^2&=(1-\langle z_j, z_i\rangle)\langle (\Omega_i-\Omega_j)z_i, z_j\rangle+(1-\langle z_i, z_j\rangle)\langle (\Omega_j-\Omega_i)z_j, z_i\rangle\\
&-\kappa_0|1-\langle{z_i, z_j}\rangle|^2(\langle{V_0z_c, z_i+z_j}\rangle+\langle{z_i+z_j, V_0z_c}\rangle)
\end{align*}
Now, we use Lemma \ref{L4.1} to get 
\begin{align*}
\begin{aligned}
&\left|(1-\langle z_j, z_i\rangle)\langle (\Omega_i-\Omega_j)z_i, z_j\rangle+(1-\langle z_i, z_j\rangle)\langle (\Omega_j-\Omega_i)z_j, z_i\rangle\right|\\
&\hspace{.5cm} =\left|\langle (\Omega_i-\Omega_j)z_i, (1-\langle z_j, z_i\rangle)z_j\rangle+\langle (1-\langle z_j, z_i\rangle)z_j, (\Omega_i-\Omega_j)z_i\rangle\right|\\
&\hspace{.5cm}  \leq \sqrt{2}\|\Omega_i-\Omega_j\|_F\sqrt{\|z_i\|^2\|(1-\langle z_j, z_i\rangle)z_j\|^2-\mathrm{Re} \big[ \langle z_i, (1-\langle z_j, z_i\rangle)z_j \rangle^2 \big]} \\
&\hspace{.5cm}  =\sqrt{2}\|\Omega_i-\Omega_j\|_F\sqrt{|1-\langle z_j, z_i\rangle|^2-\mathrm{Re} \big[ \langle z_i, z_j\rangle^2(1-\langle z_j, z_i\rangle)^2 \big]}\\
&\hspace{.5cm}  =\sqrt{2}\|\Omega_i-\Omega_j\|_F\sqrt{(1-R_{ij})^2+I_{ij}^2-\mathrm{Re} \big[ (R_{ij}+\mathrm{i}I_{ij}-R_{ij}^2-I_{ij}^2)^2 \big]} \\
&\hspace{.5cm}  =\sqrt{2}\|\Omega_i-\Omega_j\|_F\sqrt{(1-R_{ij})^2+I_{ij}^2-(R_{ij}-R_{ij}^2-I_{ij}^2)^2+I_{ij}^2}\\
&\hspace{.5cm}  =\sqrt{2}\|\Omega_i-\Omega_j\|_F\sqrt{(1-R_{ij}^2-I_{ij}^2)((1-R_{ij})^2+I_{ij}^2)+2I_{ij}^2}\\
&\hspace{.5cm} \leq \sqrt{2}\|\Omega_i-\Omega_j\|_F\sqrt{ \mathcal J_{ij}^4(3-R_{ij}^2-I_{ij}^2)}\leq 3\sqrt{2}\mathcal{D}(\Omega) \mathcal J_{ij}^2,
\end{aligned}
\end{align*}
where $\mathcal{D}(\Omega)=\max_{i, j}\|\Omega_i-\Omega_j\|_F$. Finally we can obtain the desired estimate.
\begin{align}\label{C-2}
\frac{d\mathcal J_{ij}}{dt} \leq \frac{3\sqrt{2}}{4\mathcal J_{ij}}\mathcal{D}(\Omega) -\frac{\kappa_0}{2N} \mathcal J_{ij}\sum_{k=1}^N\left(1 -\mathcal J_{ik}^2 -\frac{\sqrt[4]{5}}{\sqrt{2}} \|W_0\|_F \mathcal J_{ik} +1 -\mathcal J_{jk}^2 -\frac{\sqrt[4]{5}}{\sqrt{2}} \|W_0\|_F \mathcal J_{jk}\right).
\end{align}
\end{proof}
\begin{theorem}\label{T4.2}
Suppose initial data $\{z_j^{in}\}$ satisfy \eqref{C-1-2} and let $\{z_j\}$ be the solution of \eqref{D-1}. Then, we have following practical aggregation:
\[
\lim_{\kappa_0\to\infty}\limsup_{t\to\infty} \mathcal J_{ij}=0.
\]
\end{theorem}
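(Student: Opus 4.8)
The plan is to mimic the scalar‑reduction argument of Theorem \ref{T3.1}, with the cross‑ratio–type functional ${\mathcal J}_{ij}$ now playing the role that $\sin(\theta_{ij}/2)$ played there. Write $c:=\frac{\sqrt[4]{5}}{\sqrt2}\|W_0\|_F$ and ${\mathcal J}_M(t):=\max_{i,j}{\mathcal J}_{ij}(t)$; this is locally Lipschitz, hence differentiable for a.e.\ $t$. At a differentiability point choose an extremal pair $(i,j)$ realizing the maximum; since ${\mathcal J}_{ik},{\mathcal J}_{jk}\le{\mathcal J}_M$ and $s\mapsto 1-s^2-cs$ is non‑increasing on $[0,\infty)$, Lemma \ref{L4.5} collapses to the closed scalar inequality
\[
\frac{d{\mathcal J}_M}{dt}\le\frac{3\sqrt2}{4{\mathcal J}_M}{\mathcal D}(\Omega)-\kappa_0{\mathcal J}_M\big(1-{\mathcal J}_M^2-c{\mathcal J}_M\big),\qquad\text{a.e.}
\]
To kill the singular factor $1/{\mathcal J}_M$ I would multiply by ${\mathcal J}_M$ and set $y:={\mathcal J}_M^2$, which gives $\dot y\le\Phi(y):=\tfrac{3\sqrt2}{2}{\mathcal D}(\Omega)-2\kappa_0\,y\big(1-c\sqrt y-y\big)$ a.e.; here $\Phi\in C^1([0,\infty))$ (the only delicate point, $y^{3/2}$, is $C^1$ at $0$), so the scalar comparison principle yields $y(t)\le Y(t)$ where $\dot Y=\Phi(Y)$, $Y(0)=y(0)$.

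The next step is the phase‑line analysis of $\dot Y=\Phi(Y)$. Put $q(s):=\kappa_0 s^2(1-cs-s^2)$, so $\Phi(Y)=2\big(\tfrac{3\sqrt2}{4}{\mathcal D}(\Omega)-q(\sqrt Y)\big)$. A direct computation (rationalizing) shows that the threshold in \eqref{C-1-2},
\[
{\mathcal J}^\ast:=\frac{2\sqrt2}{\sqrt{\sqrt5\|W_0\|_F^2+8}+\sqrt[4]5\|W_0\|_F},
\]
is exactly the positive root of $1-cs-s^2=0$; hence $q>0$ on $(0,{\mathcal J}^\ast)$, $q({\mathcal J}^\ast)=0$, and inspecting $q'(s)=\kappa_0 s(2-3cs-4s^2)$ shows $q$ has a single critical point in $(0,{\mathcal J}^\ast)$, i.e.\ $q$ is unimodal there. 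Assuming ${\mathcal D}(\Omega)>0$ (the case ${\mathcal D}(\Omega)=0$ being Theorem \ref{T4.1}), for $\kappa_0$ large the level $\tfrac{3\sqrt2}{4}{\mathcal D}(\Omega)$ lies below $\max_{[0,{\mathcal J}^\ast]}q$, so $q(s)=\tfrac{3\sqrt2}{4}{\mathcal D}(\Omega)$ has exactly two roots $0<s_-(\kappa_0)<s_+(\kappa_0)<{\mathcal J}^\ast$, with $q$ below the level on $(0,s_-)\cup(s_+,{\mathcal J}^\ast)$ and above it on $(s_-,s_+)$. Therefore, on $[0,({\mathcal J}^\ast)^2]$, $\Phi$ vanishes only at $s_-^2$ and $s_+^2$, with $\Phi>0$ on $[0,s_-^2)$ and $\Phi<0$ on $(s_-^2,s_+^2)$; consequently $[0,s_+^2)$ is forward invariant for $Y$ and $\limsup_{t\to\infty}Y(t)\le s_-^2$.

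Finally I would feed in the $\kappa_0$‑asymptotics of the two roots. Since $q(s_\pm)=\tfrac{3\sqrt2}{4}{\mathcal D}(\Omega)$ is fixed while $q(s)\to\infty$ uniformly on compact subsets of $(0,{\mathcal J}^\ast)$, one gets $s_+(\kappa_0)\to{\mathcal J}^\ast$, and, using $q(s)\ge\kappa_0\delta s^2$ for $s\le{\mathcal J}^\ast/2$ with $\delta:=1-c{\mathcal J}^\ast/2-({\mathcal J}^\ast)^2/4>0$, also $s_-(\kappa_0)\le\big(\tfrac{3\sqrt2\,{\mathcal D}(\Omega)}{4\kappa_0\delta}\big)^{1/2}\to0$. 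By \eqref{C-1-2}, ${\mathcal J}_M(0)\le\max_{i,j}{\mathcal J}_{ij}^{in}<{\mathcal J}^\ast$, so for all large $\kappa_0$ we have $y(0)={\mathcal J}_M(0)^2<s_+(\kappa_0)^2$; combining the comparison bound with the phase‑line analysis gives $\limsup_{t\to\infty}{\mathcal J}_{ij}(t)\le\limsup_{t\to\infty}{\mathcal J}_M(t)\le s_-(\kappa_0)$, and letting $\kappa_0\to\infty$ yields the assertion. I expect the main obstacle to be precisely this last root‑tracking step and the unimodality of $q$: unlike the genuine quadratic barrier of Theorem \ref{T3.1}, one must control a quartic‑type level set and verify that its two relevant roots satisfy $s_-\to0$ and $s_+\to{\mathcal J}^\ast$; a secondary, routine point is justifying the a.e.\ differentiability manipulations for ${\mathcal J}_M$ and the regularity of $\Phi$ at the origin.
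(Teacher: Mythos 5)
Your proposal follows the paper's route exactly: reduce Lemma \ref{L4.5} to a closed scalar inequality for $\mathcal J_M$, multiply through by $\mathcal J_M$ to obtain the quartic barrier $p(J)=\kappa_0 J^2\big(J^2+\tfrac{\sqrt[4]{5}}{\sqrt2}\|W_0\|_F J-1\big)+\tfrac{3\sqrt2}{4}\mathcal D(\Omega)$, and then do a phase-line analysis of its two positive roots $J_\pm$, showing $J_-\to0$ and $J_+\to\mathcal J^\ast$ as $\kappa_0\to\infty$. You are simply more explicit than the paper about the regularity at $\mathcal J_M=0$ (via $y=\mathcal J_M^2$), the a.e.\ differentiability of the maximum functional, and the unimodality of $q(s)=\kappa_0 s^2(1-cs-s^2)$ that underlies the root-tracking. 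Incidentally, your rationalization of the positive root of $s^2+cs-1=0$ is correct: the threshold is $\tfrac{2\sqrt2}{\sqrt{\sqrt5\|W_0\|_F^2+8}+\sqrt[4]{5}\|W_0\|_F}$, so the displayed limit for $J_+$ in the paper's proof (which has a minus sign in that denominator) appears to contain a sign typo; your version is the consistent one.
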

\begin{proof}
We will use a similar argument with Theorem \ref{T3.1}. First, consider the following quartic polynomial: 
\[
p(J)=\kappa_0 J^2 \left( J^2 +\frac{\sqrt[4]{5}}{\sqrt{2}} \|W_0\|_F J -1 \right) +\frac{3\sqrt{2}}{4}\mathcal{D}(\Omega).
\]
For a sufficiently large $\kappa_0$, the polynomial $p$ has two positive solutions, called $J_\pm$, which satisfy
\[
\lim_{\kappa_0\to\infty}J_{+} = \frac{2\sqrt{2}}{\sqrt{\sqrt{5} \|W_0\|_F^2 +8}-\sqrt[4]{5} \|W_0\|_F}, \quad \lim_{\kappa_0\to\infty}J_-=0.
\]
If we set
\[
\mathcal J_M(t)=\max_{i, j} \mathcal J_{ij}(t),
\]
we have
\[
\mathcal J_M\frac{d\mathcal J_M}{dt} \leq \frac{3\sqrt{2}}{4}\mathcal{D}(\Omega) +\kappa_0 \mathcal J_M^2 \left(\mathcal J_M^2 +\frac{\sqrt[4]{5}}{\sqrt{2}} \|W_0\|_F \mathcal J_M -1 \right).
\]
Hence, we can obtain that if
\[
\mathcal J_M(0) < \frac{2\sqrt{2}}{\sqrt{\sqrt{5} \|W_0\|_F^2 +8}+\sqrt[4]{5} \|W_0\|_F},
\]
then we can obtain following practical aggregation:
\[
\lim_{\kappa_0\to\infty}\lim_{t\to\infty}\mathcal J_M=0.
\]
\end{proof}

\begin{remark}\label{R4.2}
Since the leading coefficient of $p$ is $\mathcal O(\kappa_0)$ and $p$ is quartic, for sufficiently large $\kappa_0$, one has $$J_- =\mathcal O\Big( \kappa_0^{-\frac12} \Big).$$
\end{remark}

\section{Emergent dynamics of the full LHS model} \label{sec:5}
\setcounter{equation}{0}
In this section, we study emergent dynamics of the LHS model. In other words, we need to consider the effect of
\begin{align*}
\kappa_1(\langle{z_j, V_1 z_c}\rangle-\langle{V_1 z_c, z_j}\rangle)z_j,\quad V_1=I_{d+1}+W_1.
\end{align*}

\begin{lemma} \label{L5.1}
Let $\{z_j\}$ be a global solution of system \eqref{LHSF}. Then, the functional ${\mathcal J}_{ij}$ satisfies
\begin{align*}
\frac{d\mathcal J_{ij}}{dt} & \leq\frac{3\sqrt{2}}{4\mathcal J_{ij}}\mathcal{D}(\Omega) -\frac{\kappa_0}{2N} \mathcal J_{ij}\sum_{k=1}^N\left(1 -\mathcal J_{ik}^2 -\frac{\sqrt[4]{5}}{\sqrt{2}} \|W_0\|_F \mathcal J_{ik} +1 -\mathcal J_{jk}^2 -\frac{\sqrt[4]{5}}{\sqrt{2}} \|W_0\|_F \mathcal J_{jk}\right) \\
& \hspace{.5cm} +\frac{\kappa_1}{N \mathcal J_{ij}}\sum_{k=1}^N(\mathcal J_{ik}^2 +\mathcal J_{kj}^2) +\sqrt{2}\kappa_1 \|W_1\|_F.
\end{align*}
\end{lemma}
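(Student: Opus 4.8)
The plan is to follow the template already established for Lemma \ref{L4.5}, but carefully track the extra contribution coming from the $\kappa_1$-coupling. First I would start from the evolution of $\langle z_i, z_j\rangle$ for the full system \eqref{LHSF}. Differentiating and using skew-hermiticity of $\Omega_j$, $W_1$ (exactly as in Lemma \ref{L2.1} and Lemma \ref{L4.2}), the $\kappa_0$-part and the $\Omega$-part produce the same terms as in Lemma \ref{L4.5}; the new ingredient is the $\kappa_1$-term, which contributes
\[
\kappa_1\Big[\big(\langle z_i, V_1 z_c\rangle - \langle V_1 z_c, z_i\rangle\big) + \big(\langle z_j, V_1 z_c\rangle - \langle V_1 z_c, z_j\rangle\big)\Big]\langle z_i, z_j\rangle
\]
to $\frac{d}{dt}\langle z_i, z_j\rangle$ (up to a routine sign-bookkeeping check). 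Since each bracket $\langle z_i, V_1 z_c\rangle - \langle V_1 z_c, z_i\rangle$ is purely imaginary, I would write it as $2\mathrm i\,\mathrm{Im}\langle z_i, V_1 z_c\rangle$ and note it is bounded in modulus by $2|\langle z_i, V_1 z_c\rangle|$.

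The core of the argument is then to pass from $\frac{d}{dt}|1-\langle z_i,z_j\rangle|^2$ to $\frac{d}{dt}\mathcal J_{ij}$, as in \eqref{D-6-1} and the proof of Lemma \ref{L4.4}. The $\kappa_1$-contribution to $\frac{d}{dt}|1-\langle z_i,z_j\rangle|^2$ has the form $-\kappa_1 \cdot 2\,\mathrm{Re}\big[(1-\langle z_j,z_i\rangle)(\langle z_i,z_j\rangle)(\cdots)\big] + (c.c.)$, and after using $|1-\langle z_i,z_j\rangle|\le \mathcal J_{ij}^2$, $|\langle z_i,z_j\rangle|\le 1$, I would bound it by a constant times $\kappa_1\,\mathcal J_{ij}^2\,\big(|\langle z_i,V_1z_c\rangle| + |\langle z_j,V_1 z_c\rangle|\big)$. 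Dividing by $4\mathcal J_{ij}^3$ (the derivative of $\mathcal J_{ij}^4$ is $4\mathcal J_{ij}^3\dot{\mathcal J}_{ij}$), the $\mathcal J_{ij}^2$ survives in the numerator, producing a term of order $\tfrac{\kappa_1}{\mathcal J_{ij}}\big(|\langle z_i,V_1z_c\rangle|+|\langle z_j,V_1z_c\rangle|\big)$. Writing $V_1 = I_{d+1}+W_1$ splits this into an ``identity'' part and a $W_1$ part: for the identity part I would use $|\langle z_i, z_c\rangle| = \frac1N|\sum_k \langle z_i,z_k\rangle|$ together with $|\langle z_i,z_k\rangle|\le 1$ and, more precisely, $1-\mathrm{Re}\langle z_i,z_k\rangle \le \mathcal J_{ik}^2$, to extract the $\frac{\kappa_1}{N\mathcal J_{ij}}\sum_k(\mathcal J_{ik}^2+\mathcal J_{kj}^2)$ term; for the $W_1$ part I would apply Lemma \ref{L4.1}/Remark \ref{R4.1} to bound $|\langle W_1 z_i, z_c\rangle + \langle z_c, W_1 z_i\rangle|$ or simply $|\langle z_i, W_1 z_c\rangle|\le \|W_1\|_F\|z_c\|\le \|W_1\|_F$ (since $\|z_c\|\le 1$), which yields the additive constant $\sqrt{2}\kappa_1\|W_1\|_F$ after tracking the factor of $2$ from the two brackets and the $\tfrac12$ from $\mathcal J_{ij}^2/(2\mathcal J_{ij}^3)\cdot\mathcal J_{ij}^2$ cancellations.

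The main obstacle I anticipate is the bookkeeping in the last step: making sure the identity-part of the $\kappa_1$-coupling really collapses to $\frac{\kappa_1}{N\mathcal J_{ij}}\sum_k(\mathcal J_{ik}^2+\mathcal J_{kj}^2)$ rather than something with a worse power of $\mathcal J$, and that the $W_1$-part does not pick up an unwanted $1/\mathcal J_{ij}$ singularity. The key realization is that one should \emph{not} try to gain smallness from $|1-\langle z_i,z_j\rangle|$ here but only use $|1-\langle z_i,z_j\rangle|^2 = \mathcal J_{ij}^4$ to cancel against $\mathcal J_{ij}^3$, leaving a single net $\mathcal J_{ij}$ in the numerator (hence the $1/\mathcal J_{ij}$), and to bound $\langle z_i, z_j\rangle$ crudely by $1$ for the modulus. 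Once the three separate bounds — $\kappa_0$-term (from Lemma \ref{L4.5}), $\Omega$-term (from Lemma \ref{L4.5}), and the new $\kappa_1$-identity and $\kappa_1$-$W_1$ terms — are assembled, summing them gives exactly the claimed inequality, so the proof is essentially a careful recombination of \eqref{C-2} with the new estimates.
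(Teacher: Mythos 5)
Your high-level template (reuse Lemma~\ref{L4.5} for the $\kappa_0$ and $\Omega$ pieces, track the extra $\kappa_1$ piece) is the right one, but the way you propose to estimate the $\kappa_1$ piece breaks down at exactly the step you flag as a ``main obstacle.''

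First, the sign you wave away as ``routine bookkeeping'' is actually the load-bearing structure. Writing $b_i := \langle z_i, V_1 z_c\rangle - \langle V_1 z_c, z_i\rangle$ (purely imaginary), the $\kappa_1$-part of $\frac{d}{dt}\langle z_i,z_j\rangle = \langle \dot z_i, z_j\rangle + \langle z_i, \dot z_j\rangle$ is $\kappa_1(\bar b_i + b_j)\langle z_i,z_j\rangle = \kappa_1(b_j - b_i)\langle z_i,z_j\rangle$, not $\kappa_1(b_i + b_j)\langle z_i,z_j\rangle$. The minus sign (from conjugating a purely imaginary number) is precisely what collapses the bracket to $\langle z_j - z_i, V_1 z_c\rangle - \langle V_1 z_c, z_j - z_i\rangle$, i.e.\ to a term that depends on the \emph{difference} $z_j - z_i$. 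The paper exploits this in two ways: the identity part of $V_1$ contributes $\frac{2\mathrm{i}}{N}\sum_k (I_{ik} + I_{kj})$, bounded by $\frac{2}{N}\sum_k(\mathcal J_{ik}^2 + \mathcal J_{kj}^2)$; and the $W_1$ part picks up a factor $\|z_i - z_j\| \leq \sqrt{2}\,\mathcal J_{ij}$. Combined with the overall prefactor $|\langle z_i,z_j\rangle - \langle z_j,z_i\rangle| = 2|I_{ij}| \le 2\mathcal J_{ij}^2$ (the $(1-\langle z_j,z_i\rangle)$ and $\langle z_i,z_j\rangle$ factors cancel exactly in the $(c.c.)$ sum and are replaced by $\langle z_i,z_j\rangle - \langle z_j,z_i\rangle$), dividing by $4\mathcal J_{ij}^3$ yields exactly $\frac{\kappa_1}{N\mathcal J_{ij}}\sum_k(\mathcal J_{ik}^2 + \mathcal J_{kj}^2) + \sqrt{2}\kappa_1\|W_1\|_F$.

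Your version discards that structure twice over. You bound $|b_j - b_i| \le 2|\langle z_i, V_1 z_c\rangle| + 2|\langle z_j, V_1 z_c\rangle|$, which is $O(1)$, and you cannot then recover smallness from the identity part: $|\langle z_i, z_c\rangle| \to 1$ as the ensemble concentrates, so the inequality $1 - \mathrm{Re}\langle z_i,z_k\rangle \le \mathcal J_{ik}^2$ gives you no control on $|\langle z_i, z_c\rangle|$ whatsoever. Similarly, your $W_1$-bound $|\langle z_i, W_1 z_c\rangle| \le \|W_1\|_F$ is $O(1)$, so after dividing by $4\mathcal J_{ij}^3$ you would be left with a genuine $\kappa_1\|W_1\|_F/\mathcal J_{ij}$ singularity, not the bounded $\sqrt{2}\kappa_1\|W_1\|_F$ of the statement. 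The fix is to \emph{not} split $b_j - b_i$ into $i$- and $j$-parts: keep the combined bracket, separate $V_1 = I + W_1$ \emph{inside} it, use $\mathrm{Im}\langle z_i-z_j, z_k\rangle = I_{ik} + I_{kj}$ for the identity part, and $\|z_i - z_j\| \le \sqrt{2}\mathcal J_{ij}$ for the $W_1$ part.
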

\begin{proof}
It suffices to consider additional term including $\kappa_1$ and then, we can use the calculation in proofs of Lemma \ref{L4.2} and \ref{L4.4}:
\begin{align*}
& \kappa_1(\langle z_j, V_1z_c\rangle-\langle V_1z_c,z_j\rangle)\langle z_i, z_j\rangle+\kappa_1(\langle V_1z_c, z_i\rangle-\langle z_i, V_1z_c\rangle)\langle z_i, z_j\rangle \\
&\hspace{1cm} = \kappa_1\langle z_i, z_j\rangle(\langle z_j-z_i, V_1z_c\rangle-\langle V_1z_c, z_j-z_i\rangle).
\end{align*}
From this calculation, we have
\begin{align*}
& \kappa_1\langle z_i, z_j\rangle(1-\langle z_j, z_i\rangle)(\langle z_j-z_i, V_1z_c\rangle-\langle V_1z_c, z_j-z_i\rangle) \\
&\hspace{1cm}+\kappa_1\langle z_j, z_i\rangle(1-\langle z_i, z_j\rangle)(\langle z_i-z_j, V_1z_c\rangle-\langle V_1z_c, z_i-z_j\rangle) \\
& \hspace{1cm} =\kappa_1(\langle z_i, z_j\rangle-\langle z_j, z_i\rangle)(\langle z_j-z_i, V_1z_c\rangle-\langle V_1z_c, z_j-z_i\rangle)\\
& \hspace{1cm} =-\frac{2\kappa_1\mathrm{i}I_{ij}}{N}\sum_{k=1}^N(\langle z_i-z_j, V_1z_k\rangle-\langle V_1z_k, z_i-z_j\rangle).
\end{align*}
We also have
\begin{align*}
&\langle z_i-z_j, V_1z_k\rangle-\langle V_1z_k, z_i-z_j\rangle\\
& \hspace{0.5cm} =\langle z_i, z_k\rangle-\langle z_j, z_k\rangle-\langle z_k, z_i\rangle+\langle z_k, z_j\rangle+\langle z_i-z_j, W_1 z_k\rangle-\langle W_1 z_k, z_i-z_j\rangle\\
& \hspace{0.5cm} =2\mathrm{i}I_{ik}+2\mathrm{i}I_{kj}+\langle z_i-z_j, W_1 z_k\rangle-\langle W_1 z_k, z_i-z_j\rangle.
\end{align*}
We combine this identity with the previous results in proofs of Lemma \ref{L4.2} and \ref{L4.4} to have
\begin{align}
\begin{aligned}\label{C-3}
& \frac{d}{dt}|1-\langle z_i, z_j\rangle|^2 \\
& \hspace{0.2cm} = (1-\langle z_j, z_i\rangle)\langle (\Omega_i-\Omega_j)z_i, z_j\rangle+(1-\langle z_i, z_j\rangle)\langle (\Omega_j-\Omega_i)z_j, z_i\rangle \\
&\hspace{0.5cm} -\frac{\kappa_0}{N}((1-R_{ij})^2+I_{ij}^2)\sum_{k=1}^N\Big(2R_{ik}+2R_{jk}+\langle W_0 z_k, z_i+z_j\rangle+\langle z_i+z_j, W_0z_k\rangle\Big) \\
&\hspace{0.5cm} -\frac{4\kappa_1 I_{ij}}{N}\sum_{k=1}^N(I_{ik}+I_{kj})+\frac{2\kappa_1\mathrm{i}I_{ij}}{N}\sum_{k=1}^N\Big(\langle z_i-z_j, W_1z_k\rangle-\langle W_1z_k, z_i-z_j\rangle\Big).
\end{aligned}
\end{align}

\noindent Finally, we combine the relations \eqref{C-2} and \eqref{C-3} to obtain following result:
\begin{align*}
\frac{d \mathcal J_{ij}}{dt} & \leq\frac{3\sqrt{2}}{4\mathcal J_{ij}}\mathcal{D}(\Omega) -\frac{\kappa_0}{2N} \mathcal J_{ij}\sum_{k=1}^N\left(1 -\mathcal J_{ik}^2 -\frac{\sqrt[4]{5}}{\sqrt{2}} \|W_0\|_F \mathcal J_{ik} +1 -\mathcal J_{jk}^2 -\frac{\sqrt[4]{5}}{\sqrt{2}} \|W_0\|_F \mathcal J_{jk}\right) \\
& \hspace{.5cm} -\left| \frac{\kappa_1 I_{ij}}{N \mathcal J_{ij}^3}\sum_{k=1}^N(I_{ik}+I_{kj}) -\frac{\kappa_1\mathrm{i}I_{ij}}{2N \mathcal J_{ij}^3}\sum_{k=1}^N\Big(\langle z_i-z_j, W_1z_k\rangle-\langle W_1z_k, z_i-z_j\rangle\Big) \right|.
\end{align*}
From the definition of $\mathcal J_{ij}$, we have
\[
|I_{ij}|\leq \mathcal J_{ij}^2, \quad |I_{ik}|\leq \mathcal J_{ik}^2, \quad |I_{kj}|\leq \mathcal J_{kj}^2,
\]
and also we have
\[
|\langle z_i-z_j, W_1z_k\rangle|\leq \|W_1\|_F\cdot\|z_i-z_j\|=\|W_1\|_F\sqrt{2(1-R_{ij})}\leq \sqrt{2}\|W_1\|_F \mathcal J_{ij}.
\]
Finally, we can obtain the desired estimate:
\begin{align*}
\frac{d\mathcal J_{ij} }{dt} & \leq\frac{3\sqrt{2}}{4\mathcal J_{ij}}\mathcal{D}(\Omega) -\frac{\kappa_0}{2N} \mathcal J_{ij}\sum_{k=1}^N\left(1 -\mathcal J_{ik}^2 -\frac{\sqrt[4]{5}}{\sqrt{2}} \|W_0\|_F \mathcal J_{ik} +1 -\mathcal J_{jk}^2 -\frac{\sqrt[4]{5}}{\sqrt{2}} \|W_0\|_F \mathcal J_{jk}\right) \\
& \hspace{.5cm} +\frac{\kappa_1}{N \mathcal J_{ij}}\sum_{k=1}^N(\mathcal J_{ik}^2 +\mathcal J_{kj}^2) +\sqrt{2}\kappa_1 \|W_1\|_F.
\end{align*}
\end{proof}
For a configuration $z$, we set 
\[ \mathcal J_M(t) := \max_{i, j} \mathcal J_{ij}. \]
\begin{theorem} \label{T5.1}
Suppose system parameters and initial data satisfy
\begin{align}
\begin{aligned} \label{C-1-4}
& \kappa_0 > 2\kappa_1\geq0, \quad W_1 \equiv 0, \quad \mathcal{D}(\Omega)=0, \\
& {\mathcal J}^{in}_{ij} < \frac{2\sqrt{2}\left( 1-\frac{2\kappa_1}{\kappa_0} \right)}{\sqrt{\sqrt{5} \|W_0\|_F^2 +8\left( 1-\frac{2\kappa_1}{\kappa_0} \right)}+\sqrt[4]{5} \|W_0\|_F} \quad \forall ~i,j = 1,\cdots, N,
\end{aligned}
\end{align}
and let $\{z_j\}$ be the solution of \eqref{LHSF} with initial data $\{z_j^{in}\}$. Then, there exists a positive constant ${\tilde \Lambda}$ such that 
\[
\mathcal J_M(t)\leq \mathcal J_M(0)\exp\left(-{\tilde \Lambda} t\right), \quad t > 0,
\]
where
\[ {\tilde \Lambda} := \kappa_0\left(1-\frac{2\kappa_1}{\kappa_0} -|\mathcal J_M^{in}|^2 -\frac{\sqrt[4]{5}}{\sqrt{2}} \|W_0\|_F \mathcal J_M^{in} \right).
\]
\end{theorem}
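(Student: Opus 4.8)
The plan is to feed the differential inequality of Lemma~\ref{L5.1} into a Gr\"onwall argument for the extremal functional $\mathcal J_M(t):=\max_{i,j}\mathcal J_{ij}(t)$. First I would specialize Lemma~\ref{L5.1} to the present hypotheses $W_1\equiv 0$ and $\mathcal D(\Omega)=0$: this removes the term $\sqrt2\,\kappa_1\|W_1\|_F$ and the heterogeneity term $\tfrac{3\sqrt2}{4\mathcal J_{ij}}\mathcal D(\Omega)$, so that for every pair $(i,j)$ one is left with
\[
\frac{d\mathcal J_{ij}}{dt}\le-\frac{\kappa_0}{2N}\mathcal J_{ij}\sum_{k=1}^N\Big(2-\mathcal J_{ik}^2-\mathcal J_{jk}^2-\tfrac{\sqrt[4]{5}}{\sqrt2}\|W_0\|_F(\mathcal J_{ik}+\mathcal J_{jk})\Big)+\frac{\kappa_1}{N\mathcal J_{ij}}\sum_{k=1}^N(\mathcal J_{ik}^2+\mathcal J_{kj}^2).
\]

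Next, since each $\mathcal J_{ij}$ varies smoothly along the flow (away from zero), $\mathcal J_M$ is Lipschitz continuous, hence differentiable for a.e. $t$; at such a time I would select a pair $(i,j)$ attaining the maximum, so that $\dot{\mathcal J}_M(t)=\dot{\mathcal J}_{ij}(t)$. In the nonpositive $\kappa_0$-sum I would use $\mathcal J_{ik},\mathcal J_{jk}\le\mathcal J_M$, and in the $\kappa_1$-sum I would use $\mathcal J_{ik}^2,\mathcal J_{kj}^2\le\mathcal J_M^2$ together with $\mathcal J_{ij}=\mathcal J_M$ to cancel the denominator; the two contributions combine into the scalar Riccati inequality
\[
\frac{d\mathcal J_M}{dt}\le-\kappa_0\,\mathcal J_M\Big(1-\tfrac{2\kappa_1}{\kappa_0}-\mathcal J_M^2-\tfrac{\sqrt[4]{5}}{\sqrt2}\|W_0\|_F\,\mathcal J_M\Big)\qquad\text{a.e. }t\ge0 .
\]

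It then remains to exploit the initial smallness condition \eqref{C-1-4}. Writing $a:=1-\tfrac{2\kappa_1}{\kappa_0}>0$ and $b:=\tfrac{\sqrt[4]{5}}{\sqrt2}\|W_0\|_F$, the polynomial $J\mapsto J^2+bJ-a$ has a unique positive root $\tfrac{-b+\sqrt{b^2+4a}}{2}=\tfrac{2a}{b+\sqrt{b^2+4a}}$, and substituting $b^2=\tfrac{\sqrt5}{2}\|W_0\|_F^2$ shows this is exactly the bound appearing in \eqref{C-1-4}; hence the hypothesis on $\mathcal J_{ij}^{in}$ is equivalent to $\tilde\Lambda=\kappa_0\big(a-|\mathcal J_M^{in}|^2-b\,\mathcal J_M^{in}\big)>0$. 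Since $J\mapsto a-J^2-bJ$ is strictly decreasing on $[0,\infty)$, the right-hand side of the Riccati inequality is $\le0$ on the level $\mathcal J_M=\mathcal J_M^{in}$, so $[0,\mathcal J_M^{in}]$ is forward invariant and $\mathcal J_M(t)\le\mathcal J_M(0)$ for all $t\ge0$. Feeding this monotonicity back in gives $\dot{\mathcal J}_M\le-\tilde\Lambda\,\mathcal J_M$, and Gr\"onwall's lemma yields the claimed bound $\mathcal J_M(t)\le\mathcal J_M(0)e^{-\tilde\Lambda t}$.

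The part requiring care is the bookkeeping in the second step: one must feed $\mathcal J_{ik}\le\mathcal J_M$ into the coupling sum \emph{and} simultaneously use $\mathcal J_{ij}=\mathcal J_M$ in the denominator of the $\kappa_1$-term so that all constants line up and the effective gain is exactly $1-2\kappa_1/\kappa_0$ rather than something weaker — this is also where the structural assumption $\kappa_0>2\kappa_1$ is consumed. One also has to justify the standard facts that make $\mathcal J_M$ a legitimate Lyapunov functional (a.e. differentiability, selection of the extremal index, forward invariance of the sublevel set despite the Riccati nonlinearity), but these are routine once the scalar inequality is in hand.
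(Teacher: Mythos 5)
Your proposal is correct and follows essentially the same route as the paper: specialize Lemma~\ref{L5.1} to $W_1\equiv 0$ and $\mathcal D(\Omega)=0$, pass to the maximal functional $\mathcal J_M$ to obtain the scalar Riccati inequality $\dot{\mathcal J}_M\le -\kappa_0\mathcal J_M\bigl(1-\tfrac{2\kappa_1}{\kappa_0}-\mathcal J_M^2-\tfrac{\sqrt[4]{5}}{\sqrt2}\|W_0\|_F\mathcal J_M\bigr)$, and close the argument with a decay/forward-invariance estimate exactly as in Theorem~\ref{T4.1}. You spell out two points the paper leaves implicit --- the index bookkeeping that converts the pairwise bound into the $\mathcal J_M$-inequality, and the check that the threshold in \eqref{C-1-4} is precisely the positive root of $J^2+bJ-a$ so that $\tilde\Lambda>0$ --- but the underlying argument is the same.
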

\begin{proof}
Again, we use Lemma \ref{L5.1} to obtain
\begin{align}
\begin{aligned}\label{C-4}
\frac{d\mathcal J_M}{dt} & \leq\frac{3\sqrt{2}}{4\mathcal J_M}\mathcal{D}(\Omega) -\kappa_0 \mathcal J_M \left(1 -\mathcal J_M^2 -\frac{\sqrt[4]{5}}{\sqrt{2}} \|W_0\|_F \mathcal J_M \right) +2\kappa_1\mathcal J_M +\sqrt{2}\kappa_1 \|W_1\|_F.
\end{aligned}
\end{align}

\noindent By the assumption $\|W_1\|_F=0$ and $\mathcal{D}(\Omega)=0$, one has
\begin{align*}
\frac{d\mathcal J_M}{dt} & \leq -\kappa_0 \mathcal J_M \left(1-\frac{2\kappa_1}{\kappa_0} -\mathcal J_M^2 -\frac{\sqrt[4]{5}}{\sqrt{2}} \|W_0\|_F \mathcal J_M \right).
\end{align*}
If we use the same argument with Theorem \ref{T4.1} and assumption on the initial data, we obtain
\[
J_M(t)\leq J_M(0)\exp\left(-{\tilde \Lambda} t\right).
\]
\end{proof}

\begin{theorem}\label{T5.2}
Let $\{z_j\}$ be the solution of \eqref{LHSF} with the initial data $\{z_j^{in}\}=$ satisfying \eqref{C-1-2}. Then, for fixed $\kappa_1$, we can obtain following practical aggregation
\[
\lim_{\kappa_0\to\infty}\limsup_{t\to\infty}J_M(t)=0.
\]
\end{theorem}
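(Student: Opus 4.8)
The plan is to follow the scheme of the proof of Theorem \ref{T4.2}, treating the two $\kappa_1$-contributions in Lemma \ref{L5.1} as perturbations that are bounded uniformly in $t$ and negligible relative to $\kappa_0$. Set $\mathcal J_M(t):=\max_{i,j}\mathcal J_{ij}(t)$; it is locally Lipschitz, hence differentiable at a.e.\ $t$, and at such $t$ one has $\dot{\mathcal J}_M=\dot{\mathcal J}_{ij}$ for some extremal pair $(i,j)$. For that pair $\mathcal J_{ik},\mathcal J_{jk}\le\mathcal J_M$ and $\mathcal J_{ik}^2+\mathcal J_{kj}^2\le 2\mathcal J_M^2$, and since $x\mapsto 1-x^2-\frac{\sqrt[4]{5}}{\sqrt2}\|W_0\|_F x$ is decreasing on $[0,\infty)$, Lemma \ref{L5.1} collapses to inequality \eqref{C-4},
\[
\frac{d\mathcal J_M}{dt}\le\frac{3\sqrt2}{4\mathcal J_M}\mathcal D(\Omega)-\kappa_0\mathcal J_M\Big(1-\mathcal J_M^2-\frac{\sqrt[4]{5}}{\sqrt2}\|W_0\|_F\mathcal J_M\Big)+2\kappa_1\mathcal J_M+\sqrt2\,\kappa_1\|W_1\|_F .
\]
Multiplying by $\mathcal J_M>0$ (the case $\mathcal J_M\equiv0$ being trivial) gives $\frac12\frac{d}{dt}\mathcal J_M^2\le F_{\kappa_0}(\mathcal J_M)$, where
\[
F_{\kappa_0}(J):=\kappa_0 J^2\Big(J^2+\frac{\sqrt[4]{5}}{\sqrt2}\|W_0\|_F J-1+\frac{2\kappa_1}{\kappa_0}\Big)+\sqrt2\,\kappa_1\|W_1\|_F\,J+\frac{3\sqrt2}{4}\mathcal D(\Omega)
\]
is an explicit quartic with leading coefficient $\kappa_0>0$.

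The core of the proof is the root analysis of $F_{\kappa_0}$, carried out uniformly in $\kappa_0$ while $\kappa_1$, $W_0$, $W_1$, and $\mathcal D(\Omega)$ stay fixed. One notes $F_{\kappa_0}(0)=\frac{3\sqrt2}{4}\mathcal D(\Omega)\ge0$ and $F_{\kappa_0}(J)\to+\infty$ as $J\to\infty$, whereas on any compact subinterval of $\{J>0:\,J^2+\frac{\sqrt[4]{5}}{\sqrt2}\|W_0\|_F J-1<0\}$ the leading term makes $F_{\kappa_0}$ arbitrarily negative once $\kappa_0$ is large. Hence, for all $\kappa_0$ large enough, $F_{\kappa_0}$ has exactly two positive zeros $J_-(\kappa_0)<J_+(\kappa_0)$, is negative on $(J_-,J_+)$, and positive on $(0,J_-)\cup(J_+,\infty)$. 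Dividing by $\kappa_0$ yields $F_{\kappa_0}(J)/\kappa_0\to J^2\big(J^2+\frac{\sqrt[4]{5}}{\sqrt2}\|W_0\|_F J-1\big)$ locally uniformly as $\kappa_0\to\infty$, whose positive zeros are $0$ (double) and $\frac{2\sqrt2}{\sqrt{\sqrt5\|W_0\|_F^2+8}+\sqrt[4]{5}\|W_0\|_F}$, exactly the threshold in \eqref{C-1-2}; a scaling estimate near $J=0$ (balancing $\kappa_0 J^2$ against the $O(1)$ terms) then shows $J_-(\kappa_0)=O(\kappa_0^{-1/2})\to0$, while $J_+(\kappa_0)$ converges to that same threshold.

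It remains to close the argument by a comparison/invariant-region step identical to the one in Theorem \ref{T4.2}. By hypothesis \eqref{C-1-2}, $\mathcal J_M(0)=\max_{i,j}\mathcal J_{ij}^{in}$ is strictly below the threshold, so $\mathcal J_M(0)<J_+(\kappa_0)$ for all sufficiently large $\kappa_0$. Comparing $\mathcal J_M^2$ with the scalar ODE $\dot y=2F_{\kappa_0}(\sqrt y)$, $y(0)=\mathcal J_M(0)^2$: the sign pattern of $F_{\kappa_0}$ shows $[0,J_+(\kappa_0)^2)$ is forward invariant, hence $\mathcal J_M(t)<J_+(\kappa_0)$ for all $t\ge0$, and the (maximal) comparison solution converges monotonically to the equilibrium $J_-(\kappa_0)^2$, so $\limsup_{t\to\infty}\mathcal J_M(t)\le J_-(\kappa_0)$. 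Letting $\kappa_0\to\infty$ and using $J_-(\kappa_0)\to0$ gives $\lim_{\kappa_0\to\infty}\limsup_{t\to\infty}\mathcal J_M(t)=0$, which is the claim (and in particular $\mathcal J_{ij}(t)\to0$ for every pair, since $0\le\mathcal J_{ij}\le\mathcal J_M$). The main obstacle is precisely the $\kappa_0$-uniform root count for $F_{\kappa_0}$ together with the identification of $\lim_{\kappa_0\to\infty}J_\pm(\kappa_0)$: the extra $2\kappa_1 J^2$ and $\sqrt2\kappa_1\|W_1\|_F J$ terms perturb the quartic studied in Theorem \ref{T4.2}, and one must check that neither destroys the two-root structure nor shifts the limits; the remaining differentiability and comparison-principle details are routine and parallel that proof.
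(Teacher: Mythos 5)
Your proposal is correct and follows essentially the same route as the paper: you pass from Lemma \ref{L5.1} to the maximum-diameter differential inequality \eqref{C-4}, form the same quartic in $J$ (your $F_{\kappa_0}$ is the paper's $p$, written with $1-\frac{2\kappa_1}{\kappa_0}$ pulled inside the parenthesis), analyze its two positive roots $J_\pm(\kappa_0)$, show $J_-\to0$ and $J_+\to$ the threshold of \eqref{C-1-2} as $\kappa_0\to\infty$ with $\kappa_1$ fixed, and close with the comparison/invariant-region argument as in Theorem \ref{T4.2}. The extra detail you supply (locally uniform convergence of $F_{\kappa_0}/\kappa_0$, the $O(\kappa_0^{-1/2})$ scaling of $J_-$, and the explicit forward-invariance of $[0,J_+(\kappa_0))$) is what the paper compresses into the phrase ``similar argument with Theorem \ref{T4.2},'' so no new idea or gap is introduced.
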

\begin{proof} We first rewrite \eqref{C-4} as
\begin{align*}
\frac{d\mathcal J_M}{dt} & \leq \frac{3\sqrt{2}}{4\mathcal J_M}\mathcal{D}(\Omega) +\sqrt{2}\kappa_1 \|W_1\|_F -\kappa_0 \mathcal J_M \left(1-\frac{2\kappa_1}{\kappa_0} -\mathcal J_M^2 -\frac{\sqrt[4]{5}}{\sqrt{2}} \|W_0\|_F \mathcal J_M \right).
\end{align*}
From the similar argument with Theorem \ref{T4.2}, we can also define quartic polynomial as follows:
\begin{align*}
p(J)&= \frac{3\sqrt{2}}{4}\mathcal{D}(\Omega) +\sqrt{2}\kappa_1 \|W_1\|_F J -\kappa_0 J^2 \left(1-\frac{2\kappa_1}{\kappa_0} -J^2 -\frac{\sqrt[4]{5}}{\sqrt{2}} \|W_0\|_F J \right) \\
&=\kappa_0 J^2 \left(J^2 +\frac{\sqrt[4]{5}}{\sqrt{2}} \|W_0\|_F J -1 \right) +\kappa_1 J \left( 2J +\sqrt{2} \|W_1\|_F \right) +\frac{3\sqrt{2}}{4}\mathcal{D}(\Omega).
\end{align*}
Then, for sufficiently large $\kappa_0$ relative to $\kappa_1$, we can also set two positive solutions of $p$ as $J_\pm$. Also we know that
\[
\lim_{\kappa_0\to\infty}J_+=\frac{2\sqrt{2}}{\sqrt{\sqrt{5} \|W_0\|_F^2 +8} +\sqrt[4]{5} \|W_0\|_F},\quad \lim_{\kappa_0\to\infty}J_-=0.
\]
Hence, assumption on initial condition implies the desired result.
\end{proof}

\begin{remark}\label{R5.1}
As in Remark \ref{R4.2}, for sufficiently large $\kappa_0$, one has $$J_- =\mathcal O \Big( \kappa_0^{-\frac12} \Big).$$
\end{remark}

\section{Numerical simulation} \label{sec:6}
\setcounter{equation}{0}
In this section, we provide several numerical examples in order to confirm analytical results in Sections \ref{sec:4} and \ref{sec:5} on the asymptotic behavior of complex LS model and LHS model, respectively. In all simulations, we set
\begin{align*}
N = 50, \quad \Delta t = 0.02, \quad z_i \in \mathbb H\mathbb S^2, \quad i = 1,\cdots,50,
\end{align*}
and used the fourth order Runge-Kutta method.

\subsection{The complex Lohe sphere model}
In this subsection, we observe analytical results on the complex Lohe sphere model \eqref{D-1} in Section \ref{sec:4}.

\subsubsection{Identical case}
In this part, we observe the emergent behavior of \eqref{D-1} with all natural frequencies are equal. That is, there exists $\Omega \in \bbc^{3\times3}$ such that
\begin{align*}
\Omega^\dagger = -\Omega, \quad \Omega_i = \Omega, \quad i = 1, \cdots, 50.
\end{align*}
We perform a numerical simulation under plausible condition to observe the complete aggregation in Theorem \ref{T4.1}. In  all  simulations, we chose coupling strength, natural frequency $\Omega$ and frustration $W_0$ satisfying
\begin{align*}
\begin{aligned}
& \kappa_0 = 1, \quad \mbox{Re}(\Omega_{jk}), \mbox{Im}(\Omega_{jk}) \in [-1, 1], \\
& \mbox{Re}[(W_0)_{jk}], \mbox{Im}[(W_0)_{jk}] \in [-0.1, 0.1], \quad 1 \leq j,k \leq 3,
\end{aligned}
\end{align*}
and take initial data satisfying \eqref{C-1-2}. Under those settings, we observe the dynamics in time interval $[0, 10]$.

\begin{figure}[h]
\centering
\mbox{
\subfigure[~Graph of $\mathcal J_M$]{\includegraphics[scale = 0.19]{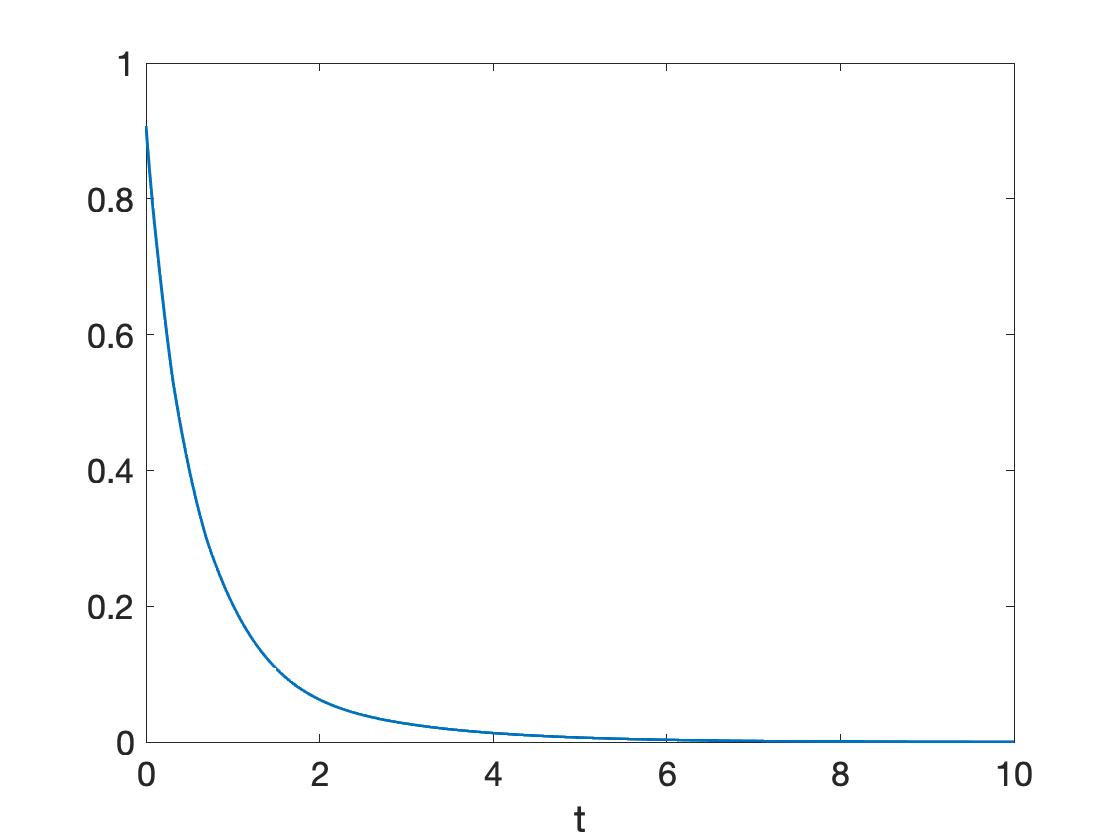}}
\subfigure[~Graph of $\log\mathcal J_M$]{\includegraphics[scale = 0.19]{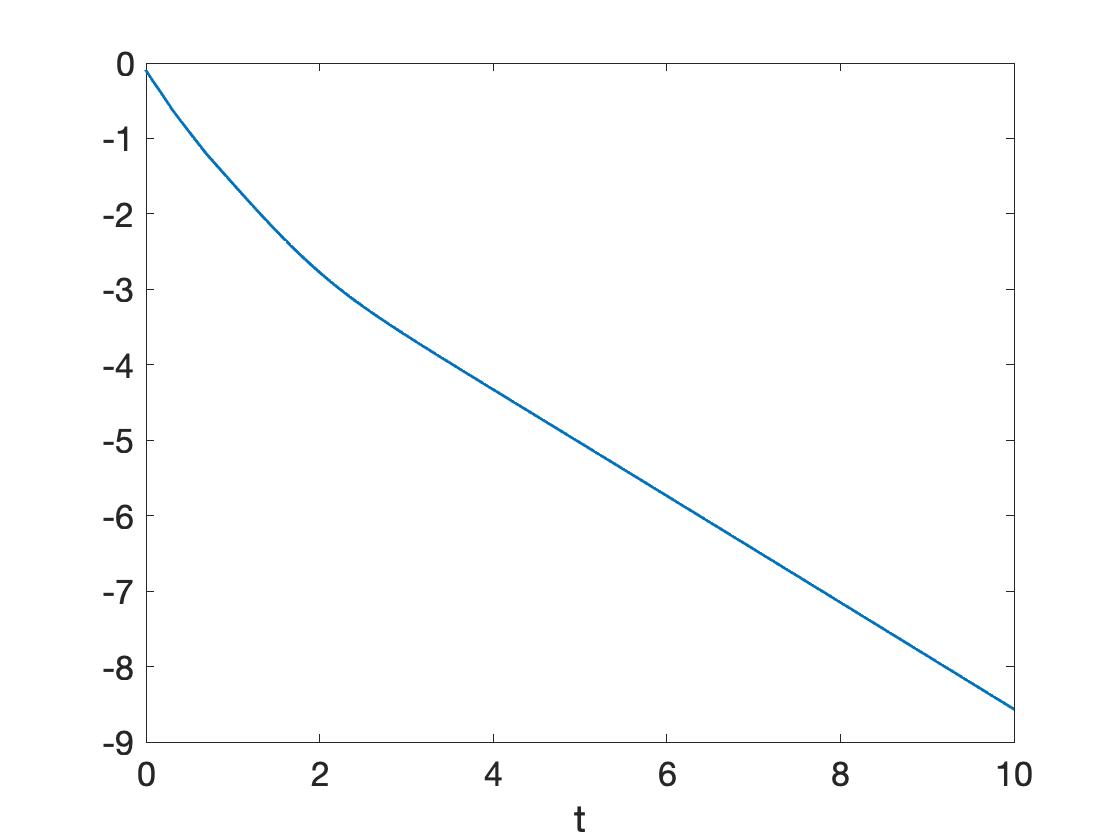}}
}
\caption{Emergence of complete aggregation}
\label{F1}
\end{figure}

In Figure \ref{F1} (a), we plot the graph of $\mathcal J_M$, from which one can observe the complete aggregation when it converges to zero. Furthermore, linearity of Figure \ref{F1} (b) exhibits the exponential decaying of $\mathcal J_M$.

\subsubsection{Nonidentical case}
In this part, we observe the emergent behavior of \eqref{D-1} with distinct natural frequencies. We perform numerical simulation under plausible condition to observe the practical aggregation in Theorem \ref{T4.2}. In simulation, we chose natural frequencies $\Omega_i$ and frustration $W_0$ satisfying
\begin{align*}
& \mbox{Re}[(\Omega_i)_{jk}], \mbox{Im}[(\Omega_i)_{jk}] \in [-1, 1], \quad \mbox{Re}[(W_0)_{jk}], \mbox{Im}[(W_0)_{jk}] \in [-0.1, 0.1], \quad 1 \leq j,k \leq 3,
\end{align*}
take initial data satisfying \eqref{C-1-2} and observe the dynamics in time interval $[0, 60]$. Under those settings, we employ various coupling strengths:
\begin{align*}
\kappa_0 = 1, ~5, ~10, ~20.
\end{align*}

\begin{figure}[h]
\centering
\mbox{
\subfigure[~Graphs of $\mathcal J_M^{\kappa_0}$ for various $\kappa_0$]{\includegraphics[scale = 0.19]{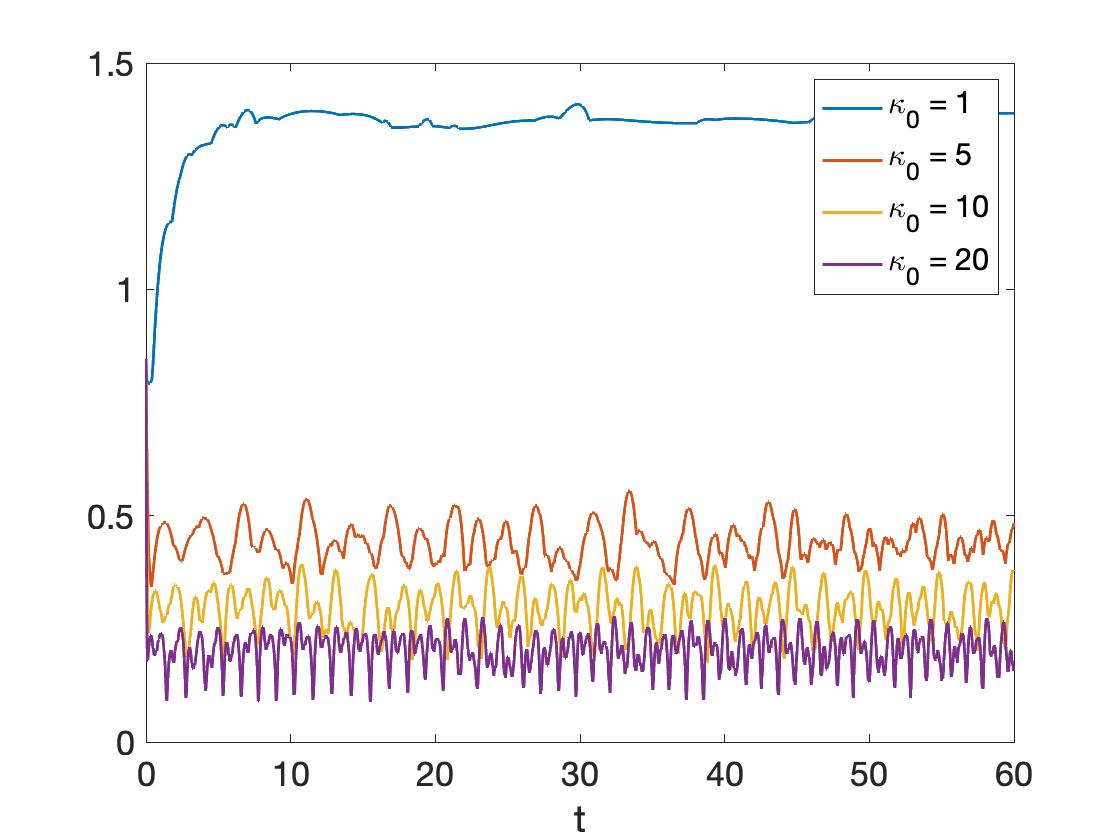}}
\subfigure[~Graphs of $\mathcal J_M^1$, $\sqrt{5}\mathcal J_M^5$ and $5\mathcal J_M^5$]{\includegraphics[scale = 0.19]{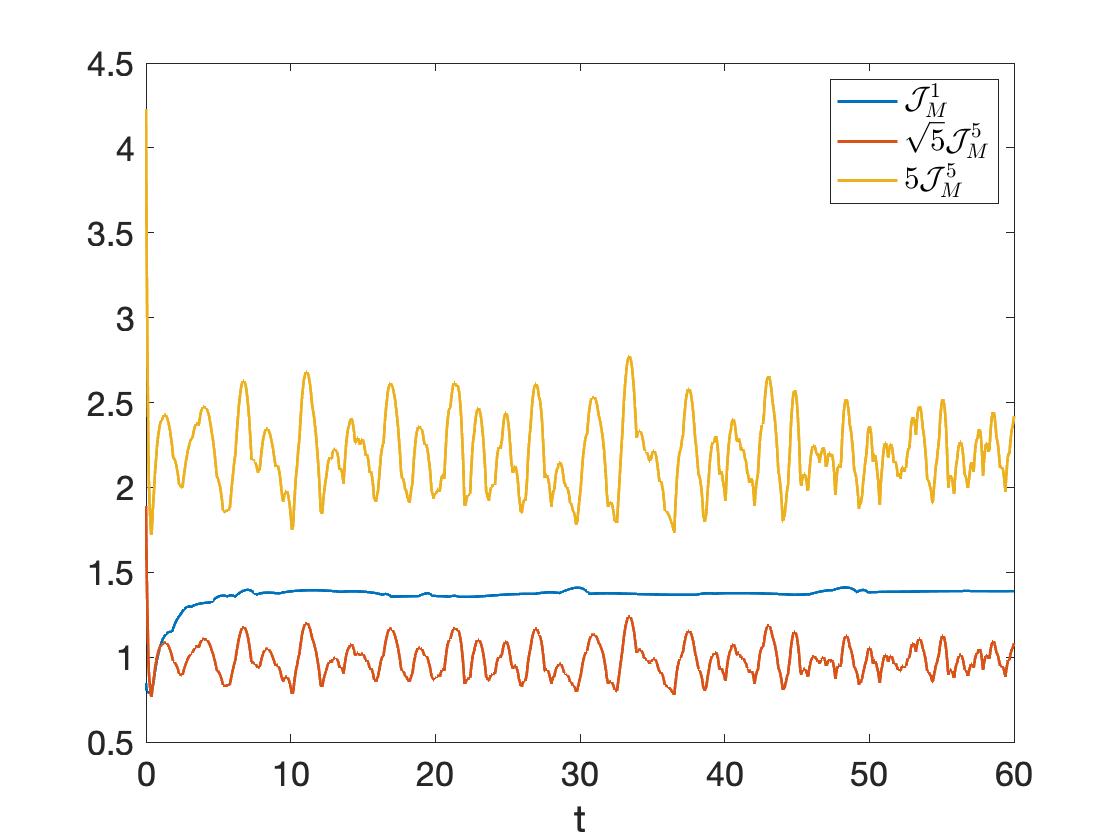}}
} \\ \mbox{
\subfigure[~Graphs of $\mathcal J_M^1$, $\sqrt{10}\mathcal J_M^{10}$ and $10\mathcal J_M^{10}$]{\includegraphics[scale = 0.19]{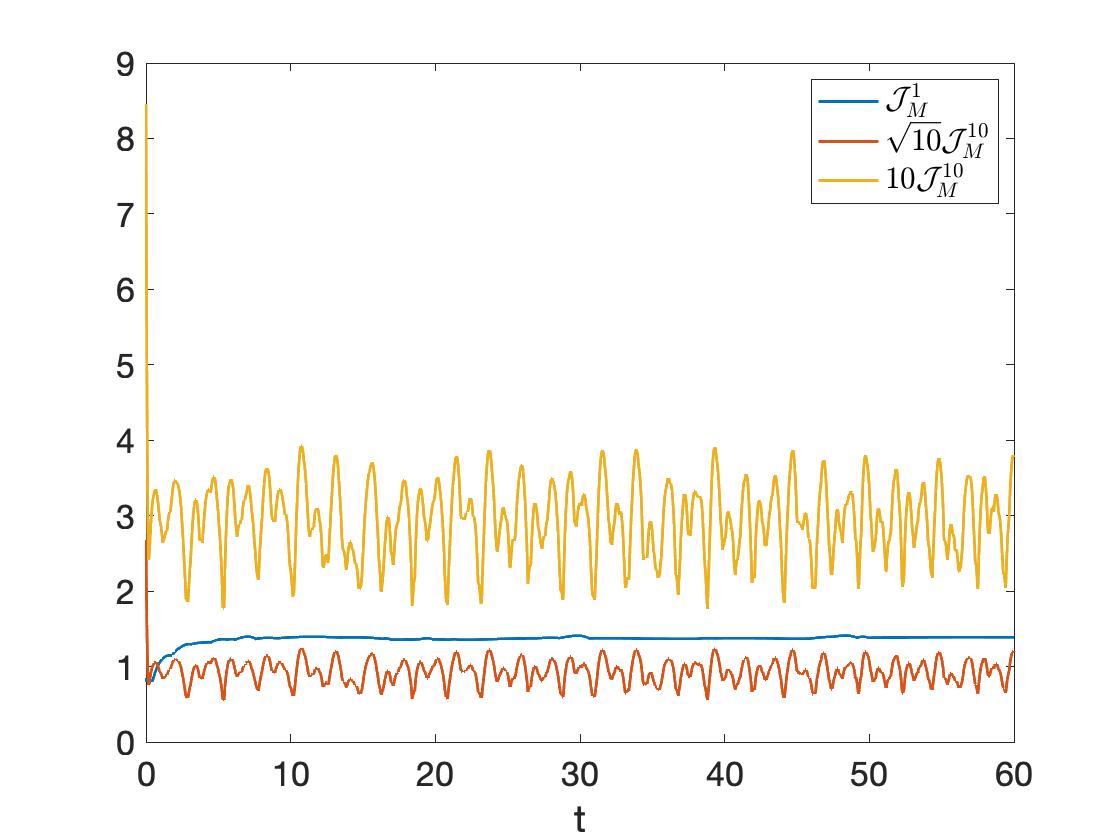}}
\subfigure[~Graphs of $\mathcal J_M^1$, $\sqrt{20}\mathcal J_M^{20}$ and $20\mathcal J_M^{20}$]{\includegraphics[scale = 0.19]{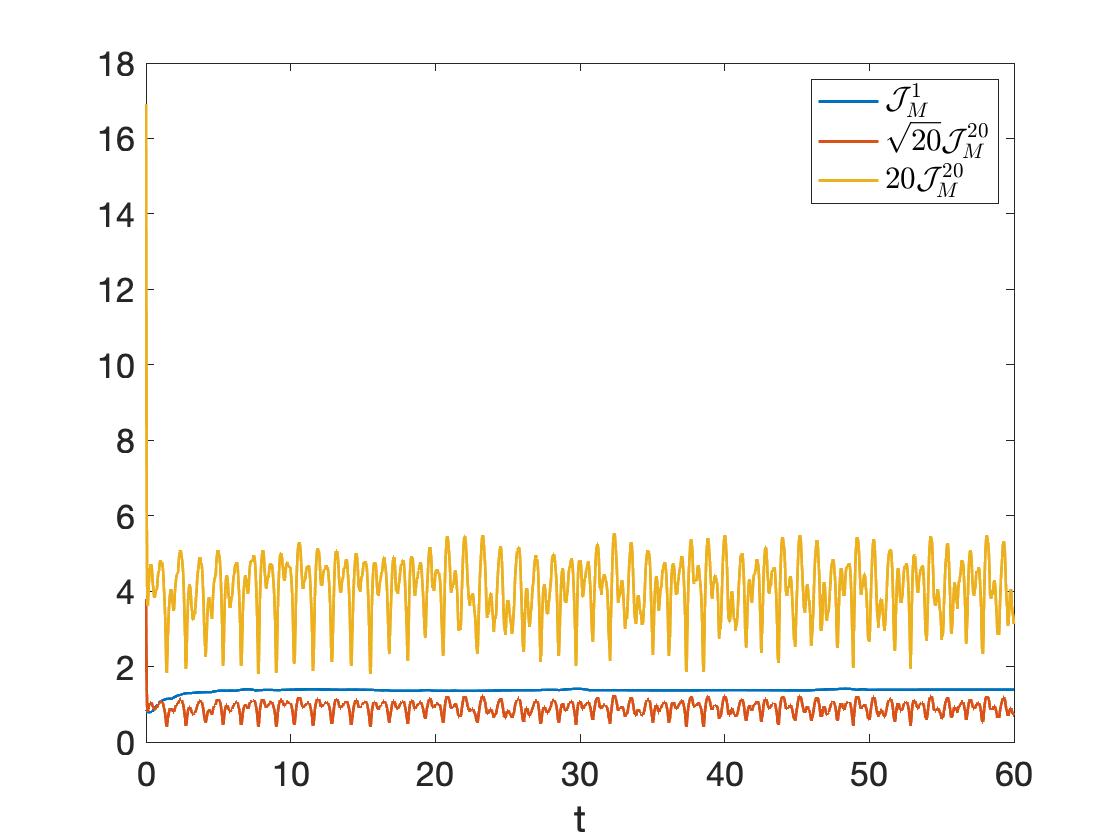}}
}
\caption{Emergence of practical aggregation}
\label{F3}
\end{figure}

In Figure \ref{F3} (a), we plot the graph of $\mathcal J_M$ for various $\kappa_0$. For each $\kappa_0$, we denote the corresponding $\mathcal J_M$ as $\mathcal J_M^{\kappa_0}$. Then, one can observe that the asymptotic bound of $\mathcal J_M^{\kappa_0}$ gets closer to zero as $\kappa_0$ becomes larger, which implies the practical aggregation. Moreover, in Figure \ref{F3} (b), (c) and (d), we plot the graphs of $\mathcal J_M^1$, $\sqrt{\kappa_0}\mathcal J_M^{\kappa_0}$ and $\kappa_0\mathcal J_M^{\kappa_0}$ for $\kappa_0 = 5, 10, 20$, respectively. One can observe that the graph of $\mathcal J_M^1$ is bounded by the graphs of $\sqrt{\kappa_0}\mathcal J_M^{\kappa_0}$ and $\kappa_0\mathcal J_M^{\kappa_0}$. More precisely, asymptotic bound of $\mathcal J_M^1$ is larger than that of $\sqrt{\kappa_0}\mathcal J_M^{\kappa_0}$, while less than that of $\kappa_0\mathcal J_M^{\kappa_0}$. And these observations support Remark \ref{R4.2}.

\subsection{The LHS model}
In this subsection, we observe the analytical result on the Lohe Hermitian sphere models in Section \ref{sec:5}.

\subsubsection{Identical ensemble}
In this part, we observe the emergent behavior of \eqref{LHSF} with all natural frequencies are equal. We perform numerical simulation under plausible condition to observe the complete aggregation in Theorem \ref{T5.1}. In simulation, we chose coupling strength, natural frequency $\Omega$ and frustrations $W_1, W_0$ satisfying
\begin{align*}
& \kappa_0 = 4, \quad \kappa_1 = 1, \quad W_1 = 0, \quad \mbox{Re}[(W_0)_{jk}],\ \mbox{Im}[(W_0)_{jk}] \in [-0.1, 0.1], \\
& \mbox{Re}(\Omega_{jk}), \mbox{Im}(\Omega_{jk}) \in [-1, 1], \quad 1 \leq j,k \leq 3,
\end{align*}
so that $\kappa_0 > 2\kappa_1$, and take initial data satisfying \eqref{C-1-4}$_4$. Under those settings, we observe the dynamics in time interval $[0, 5]$.

\begin{figure}[h]
\centering
\mbox{
\subfigure[~Graph of $\mathcal J_M$]{\includegraphics[scale = 0.19]{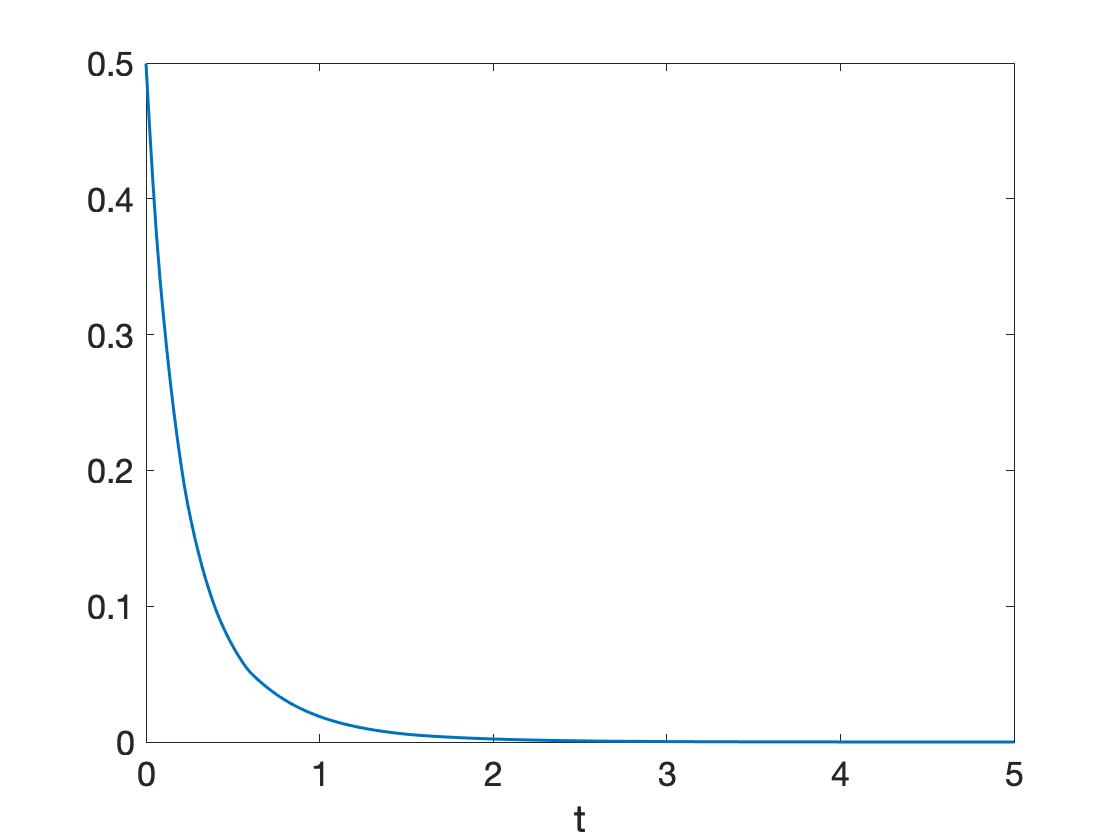}}
\subfigure[~Graph of $\log\mathcal J_M$]{\includegraphics[scale = 0.19]{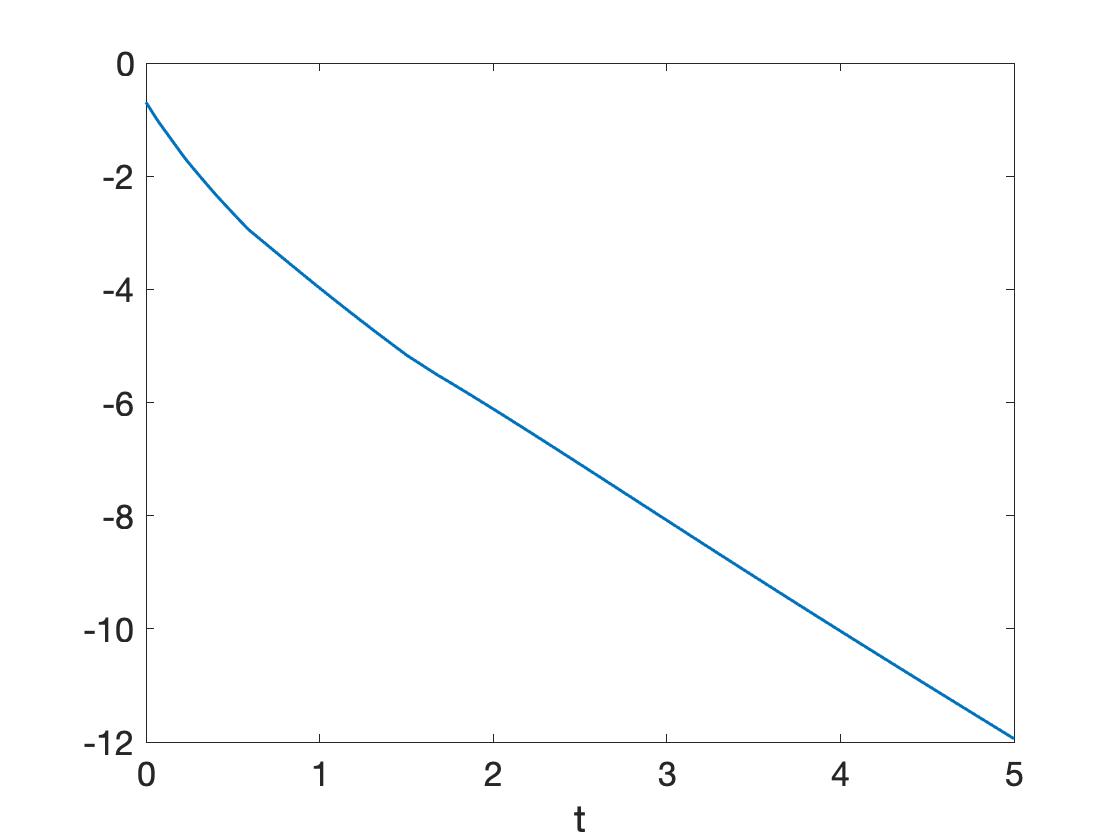}}
}
\caption{Emergence of complete aggregation}
\label{F2}
\end{figure}

In Figure \ref{F2} (a), we plot the graph of $\mathcal J_M$, from which one can observe the complete aggregation when it converges to zero. Furthermore, linearity of Figure \ref{F2} (b) exhibits the exponential decaying of $\mathcal J_M$.

\subsubsection{Nonidentical ensemble}
In this part, we observe the emergent behavior of \eqref{LHSF} with distinct natural frequencies. We perform numerical simulation under plausible condition to observe the practical aggregation in Theorem \ref{T5.2}. In simulation, we chose natural frequencies $\Omega_i$ and frustration $W_0, W_1$ satisfying
\begin{align*}
& \mbox{Re}[(\Omega_i)_{jk}], \mbox{Im}[(\Omega_i)_{jk}] \in [-1, 1], \quad \mbox{Re}[(W_l)_{jk}], \mbox{Im}[(W_l)_{jk}] \in [-0.1, 0.1], \\
& 1\leq i\leq N, \quad 1 \leq j,k \leq d, \quad l = 0, 1
& \end{align*}
take initial data satisfying \eqref{C-1-2} and observe the dynamics in time interval $[0, 100]$. Under those settings, we employ various coupling strengths:
\begin{align*}
(\kappa_0, \kappa_1) = (1, 1), ~(5, 1), ~(10, 1), ~(20, 1).
\end{align*}

\begin{figure}[h]
\centering
\mbox{
\subfigure[~Graphs of $\mathcal J_M^{\kappa_0}$ for various $\kappa_0$]{\includegraphics[scale = 0.19]{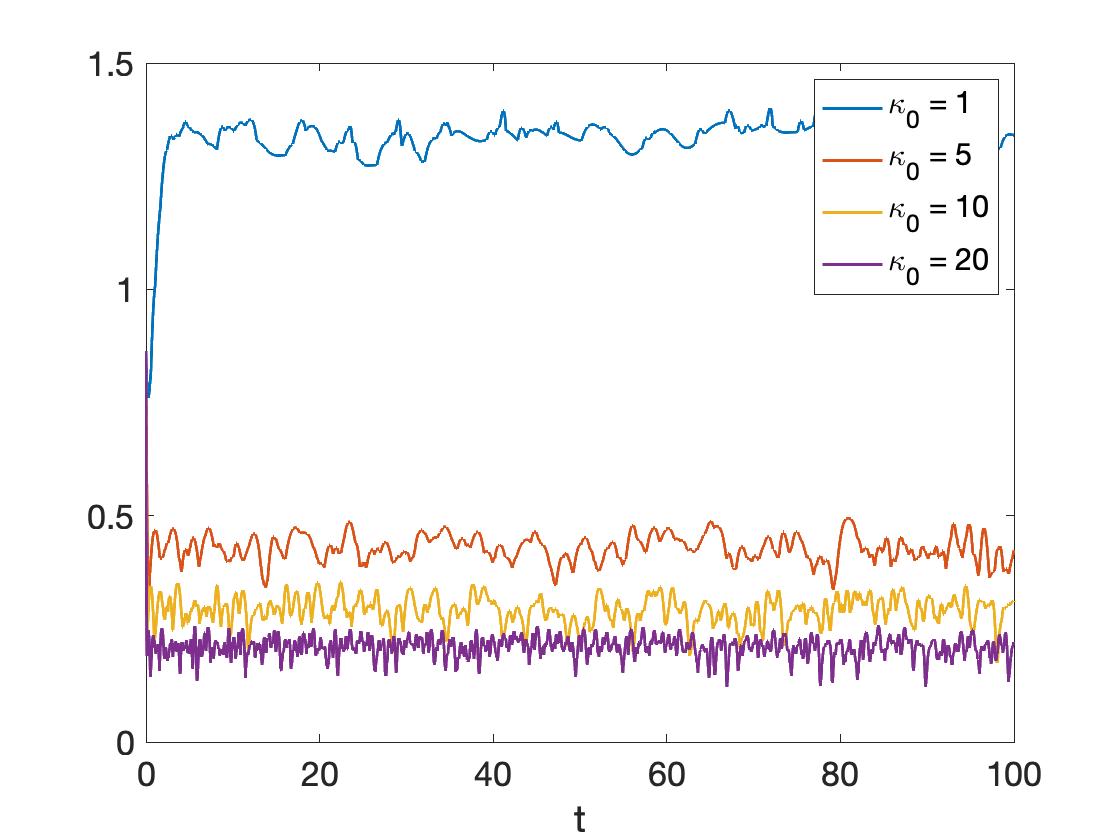}}
\subfigure[~Graphs of $\mathcal J_M^1$, $\sqrt{5}\mathcal J_M^5$ and $5\mathcal J_M^5$]{\includegraphics[scale = 0.19]{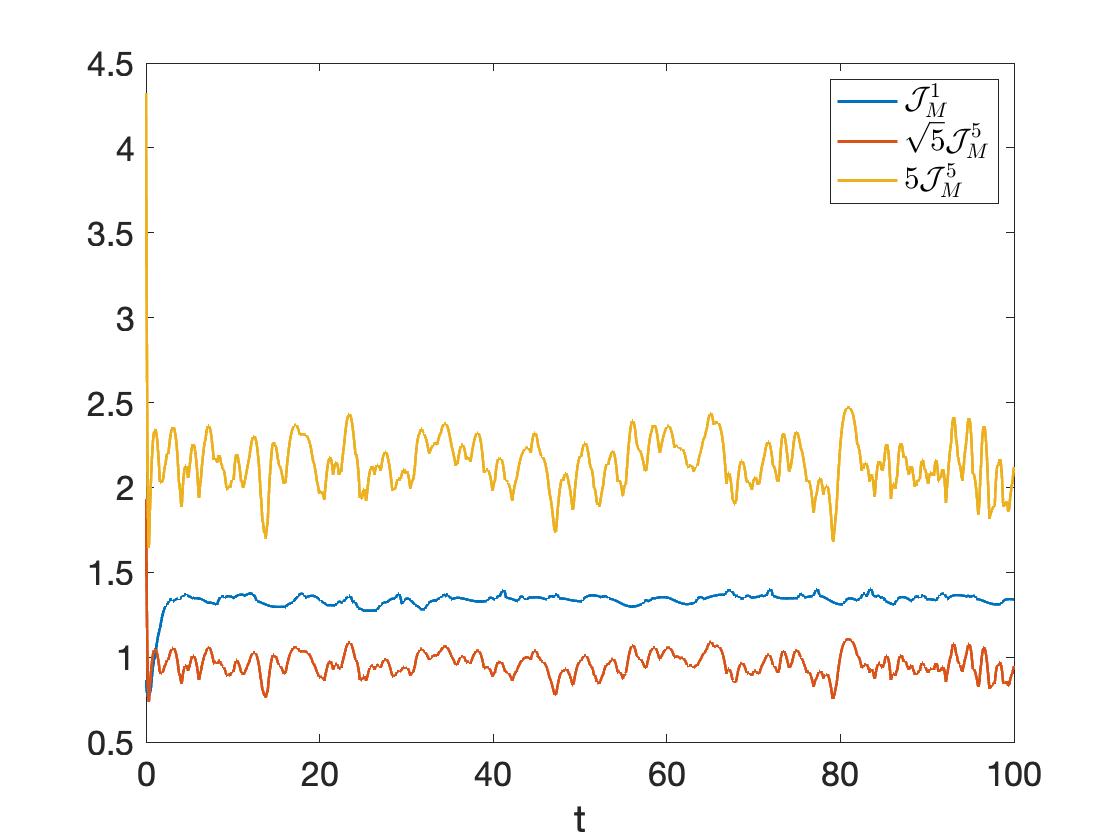}}
} \\ \mbox{
\subfigure[~Graphs of $\mathcal J_M^1$, $\sqrt{10}\mathcal J_M^{10}$ and $10\mathcal J_M^{10}$]{\includegraphics[scale = 0.19]{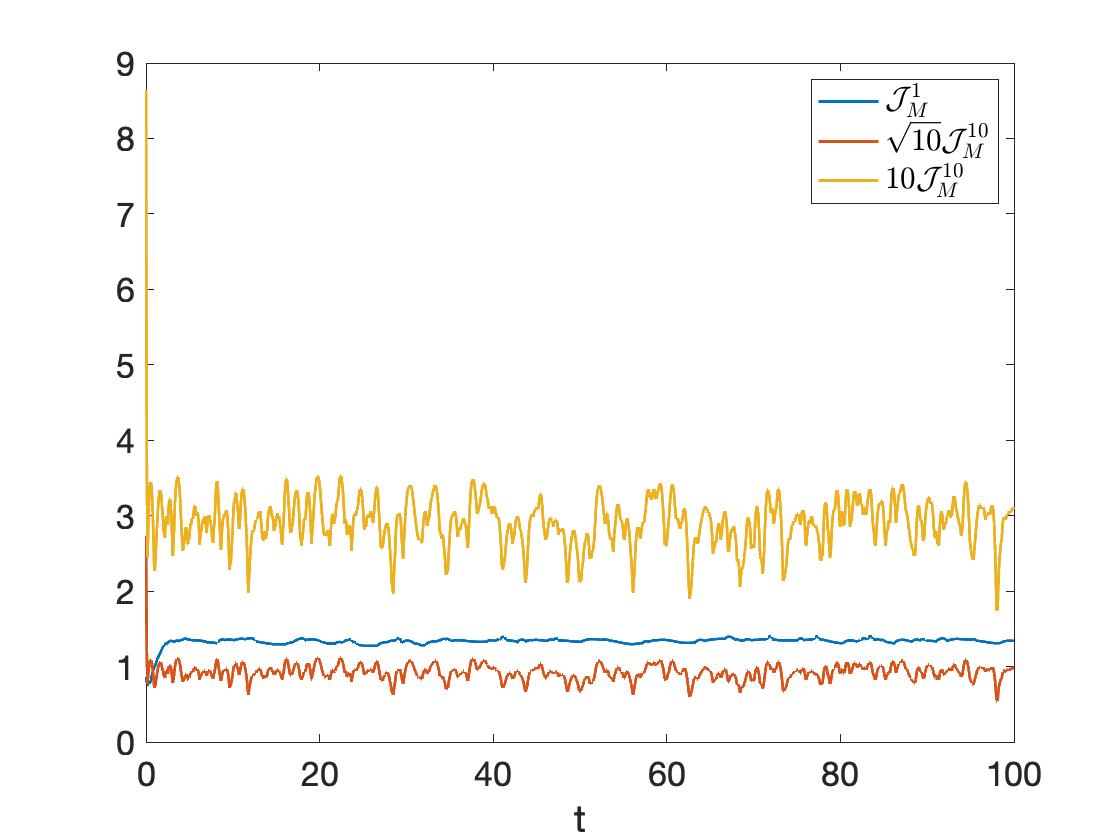}}
\subfigure[~Graphs of $\mathcal J_M^1$, $\sqrt{20}\mathcal J_M^{20}$ and $20\mathcal J_M^{20}$]{\includegraphics[scale = 0.19]{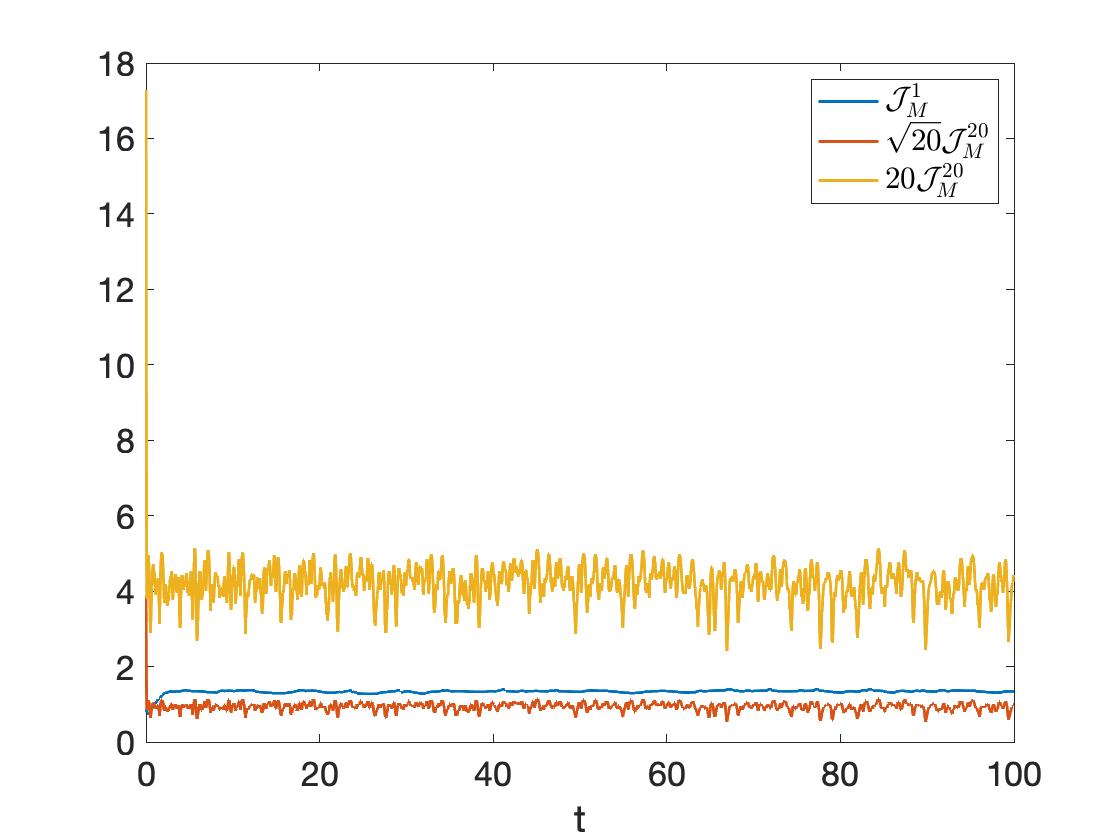}}
}
\caption{Emergence of practical synchronization}
\label{F4}
\end{figure}

In Figure \ref{F4} (a), we plot the graph of $\mathcal J_M$ for various $\kappa_0$, but $\kappa_1$ fixed. For each $\kappa_0$, we denote the corresponding $\mathcal J_M$ as $\mathcal J_M^{\kappa_0}$. Then, one can observe that the asymptotic bound of $\mathcal J_M^{\kappa_0}$ gets closer to zero as $\kappa_0$ becomes larger, which implies the practical synchronization. Moreover, in Figure \ref{F4} (b), (c) and (d), we plot the graphs of $\mathcal J_M^1$, $\sqrt{\kappa_0}\mathcal J_M^{\kappa_0}$ and $\kappa_0\mathcal J_M^{\kappa_0}$ for $\kappa_0 = 5, 10, 20$, respectively. One can observe that the graph of $\mathcal J_M^1$ is bounded by the graphs of $\sqrt{\kappa_0}\mathcal J_M^{\kappa_0}$ and $\kappa_0\mathcal J_M^{\kappa_0}$. More precisely, asymptotic bound of $\mathcal J_M^1$ is larger than that of $\sqrt{\kappa_0}\mathcal J_M^{\kappa_0}$, while less than that of $\kappa_0\mathcal J_M^{\kappa_0}$. And these observations support Remark \ref{R5.1}.

\section{conclusion} \label{sec:7}
\setcounter{equation}{0} 
In this paper, we have studied emergent behaviors of the Lohe hermitian sphere model with frustration. Frustration can act as an anti-aggregation mechanism so that it prevents a formation of aggregate phenomenon. However, when coupling is strong enough and frustration is small enough, the complete aggregation and practical aggregation can emerge depending on the nature of frequency matrices. For both cases, we provide explicit sufficient frameworks in terms of coupling strengths and initial data, and then within our proposed framework, we show that the LHS model can exhibit collective behaviors. For non-identical particles with different free flows, our estimates on the practical aggregation can be viewed as a weak aggregation estimate in the sense that our practical aggregation estimate does not give us on the formation of phase-locked states for the LHS model. Moreover, we do not know how many phase-locked states with positive order parameter can exist in a generic setting, and whether periodic orbits can exhibit in the LHS model is not known yet. For the identical ensemble, nontrivial periodic orbits will be excluded due to the constancy of cross-ratio-like quantity. Of course, aforementioned questions on the phase-locked states are also open, even for the Lohe sphere model. These interesting questions will be left for a future work. 

\end{document}